\pgfplotsset{compat=1.7}
\newcommand{\Var}{\mathbf{Var}}
\newcommand{\var}{\mathbf{Var}}
\newcommand{\E}{\mathbf{E}}
\newcommand{\pr}{\mathbf{Pr}}
\def\eps{\epsilon}
\newcommand{\erfc}{\mathrm{erfc}}
\newcommand{\prob}[2][]{\text{\bf Pr}\ifthenelse{\not\equal{}{#1}}{_{#1}}{}\!\left[#2\right]}
\newcommand{\expect}[2][]{\text{\bf E}\ifthenelse{\not\equal{}{#1}}{_{#1}}{}\!\left[#2\right]}
\newcommand{\Tr}{{\mathrm {Tr}}}
\newcommand{\wt}[1]{{\widetilde{#1}}}
\newcommand{\wh}[1]{{\widehat{#1}}}
\newtheorem{theorem}{Theorem}[section]
\newtheorem{lemma}[theorem]{Lemma}
\newtheorem{proposition}[theorem]{Proposition}
\newtheorem{corollary}[theorem]{Corollary}
\newtheorem{claim}[theorem]{Claim}
\newtheorem{definition}[theorem]{Definition}
\newtheorem{fact}[theorem]{Fact}
\newcommand{\ignore}[1]{}
\newcommand{\bg}[1]{\medskip\noindent{\bf #1}}
\definecolor{Red}{rgb}{1,0,0}
\newcommand{\oldbound}[1]{{}}
\newcommand{\normalpdf}{\ensuremath{\mathcal{N}}}
\newcommand{\Cov}{\mbox{Cov}}
\newcommand{\todo}[1]{#1}
\renewcommand{\epsilon}{\varepsilon}
\newcommand{\tr}{\mathrm{tr}}
\DeclareMathOperator{\R}{\mathbb{R}}
\DeclareMathOperator{\Z}{\mathbb{Z}}
\DeclareMathOperator{\poly}{poly}
\renewcommand{\[ }{\begin{eqnarray*}}
\renewcommand{\]}{\end{eqnarray*}}
\newcommand{\Ot}{\widetilde{O}}
\definecolor{darkpastelred}{rgb}{0.76, 0.23, 0.13}
\algnewcommand\INPUT{\item[{\textbf {input:}}]}
\algnewcommand\OUTPUT{\item[{\textbf{output:}}]}
\def\colorful{1}
\newcommand{\new}[1]{{\color{red} #1}}
\newcommand{\new}[1]{{#1}}
\newcommand{\eqdef}{\stackrel{{\mathrm {\footnotesize def}}}{=}}
\newcommand{\tmu}{\tilde{\mu}}
\newtheorem{conjecture}{Conjecture}
\newcommand{\wtSigma}{\widetilde{\Sigma}}
\newcommand{\vvec}{\mathsf{vec}}
\newcommand{\supp}{\mathsf{Supp}}
\title{Outlier-Robust High-Dimensional Sparse Estimation \\ 
via Iterative Filtering}
\author{
Ilias Diakonikolas\thanks{Supported by NSF Award CCF-1652862 (CAREER) and a Sloan Research Fellowship.}\\
University of Wisconsin-Madison\\
{\tt ilias@cs.wisc.edu}\\
\and
Sushrut Karmalkar\thanks{Supported by NSF Award CNS-1414023.}\\
UT Austin\\
{\tt s.sushrut@gmail.com}
\and
Daniel Kane\thanks{Supported by NSF Award CCF-1553288 (CAREER) and a Sloan Research Fellowship}\\
University of California, San Diego\\
{\tt dakane@ucsd.edu}\\
\and
Eric Price\thanks{Supported in part by NSF Award CCF-1751040 (CAREER).}\\ 
UT Austin\\
{\tt ecprice@cs.utexas.edu}
\and
Alistair Stewart\thanks{Part of this work was performed while the author was a postdoctoral researcher at 
USC.}\\
Web3 Foundation\\
{\tt stewart.al@gmail.com}
}
\begin{document}

\maketitle

\begin{abstract}
We study high-dimensional sparse estimation tasks in a robust setting where a constant fraction
of the dataset is adversarially corrupted. Specifically, we focus on the fundamental problems of robust
sparse mean estimation and robust sparse PCA. 
We give the first practically viable robust estimators for these problems. 
In more detail, our algorithms are sample and computationally efficient 
and achieve near-optimal robustness guarantees. 
In contrast to prior provable algorithms which relied on the ellipsoid method, 
our algorithms use spectral techniques to iteratively remove outliers from the dataset. 
Our experimental evaluation on synthetic data shows that our algorithms are scalable and 
significantly outperform a range of previous approaches, nearly matching the best error rate without corruptions.
\end{abstract}

\setcounter{page}{0}

\thispagestyle{empty}

\newpage

\section{Introduction} \label{sec:intro}
\subsection{Background} \label{ssec:background}
The task of leveraging sparsity to extract meaningful information from high-dimensional datasets
is a fundamental problem of significant practical importance, motivated by a range of data analysis applications.
Various formalizations of this general problem have been investigated in statistics and machine learning
for at least the past two decades, see, e.g.,~\cite{Hastie15} for a recent textbook on the topic.
This paper focuses on the {\em unsupervised setting} and in particular on estimating the
parameters of a high-dimensional distribution under sparsity assumptions.
Concretely, we study the problems of {\em sparse mean estimation}
and {\em sparse PCA} under natural data generating models.

The classical setup in statistics is that the data was generated by a probabilistic model
of a given type. This is a simplifying assumption that is only approximately valid,
as real datasets are typically exposed to some source of contamination. The field
of robust statistics~\cite{Huber64,Huber09, HampelEtalBook86} aims to design
estimators that are {\em robust} in the presence of model misspecification.
In recent years, designing computationally efficient robust estimators for
high-dimensional settings has become a pressing challenge in a number of applications.
These include the analysis of biological datasets, where natural outliers are
common~\cite{RP-Gen02, Pas-MG10, Li-Science08}
and can contaminate the downstream statistical analysis,
and {\em data poisoning attacks}~\cite{Barreno2010}, where even a small
fraction of fake data (outliers) can substantially degrade the learned model~\cite{BiggioNL12, SteinhardtKL17}.

This discussion motivates the design of robust estimators
that can tolerate a {\em constant} fraction of adversarially corrupted data.
We will use the following model
of corruptions (see, e.g.,~\cite{DKKLMS16}):

\begin{definition} \label{def:adv}
Given $0< \eps < 1/2$ and a family of distributions $\mathcal{D}$ on $\R^d$,
the \emph{adversary} operates as follows: The algorithm specifies some number of samples $N$, and $N$ samples $X_1, X_2, \ldots, X_N$ are drawn from some (unknown) $D \in \mathcal{D}$.
The adversary is allowed to inspect the samples, removes $\eps N$ of them,
and replaces them with arbitrary points. This set of $N$ points is then given to the algorithm.
We say that a set of samples is {\em $\eps$-corrupted}
if it is generated by the above process.
\end{definition}

Our model of corruptions generalizes several other robustness models,
including Huber's contamination model~\cite{Huber64} and the malicious
PAC model~\cite{Valiant:85, keali93}.

In the context of robust sparse mean estimation,
we are given an $\eps$-corrupted set of samples from an unknown mean Gaussian distribution
$\normalpdf(\mu,I)$, where $\mu \in \R^d$ is assumed to be $k$-sparse,
and the goal is to output a hypothesis vector $\wh{\mu}$ that approximates $\mu$
in $\ell_2$-norm. In the context of robust sparse PCA
(in the spiked covariance model), we are given an $\eps$-corrupted set of samples from
$\normalpdf(\mathbf{0}, \rho vv^T)$, where $v \in \R^d$ is assumed to be $k$-sparse
and the goal is to approximate $v$. In both settings, we would like to design computationally
efficient estimators with sample complexity $\poly(k, \log d, 1/\eps)$, i.e., close to the information theoretic
minimum, that achieve near-optimal error guarantees.

Until recently, even for the simplest high-dimensional parameter estimation settings,
no polynomial time robust learning algorithms 
with dimension-independent error guarantees
were known. Two concurrent works~\cite{DKKLMS16, LaiRV16} made the first progress on this front
for the unsupervised setting. Specifically,~\cite{DKKLMS16, LaiRV16} gave the first 
polynomial time algorithms for robustly learning the mean and covariance of
high-dimensional Gaussians and other models. These works focused on the dense regime and
as a result did not obtain algorithms with sublinear sample complexity in the sparse setting.
Building on~\cite{DKKLMS16}, more recent work~\cite{BDLS17} obtained sample efficient
polynomial time algorithms for the robust {\em sparse} setting, 
and in particular for the problems of robust sparse mean estimation and robust sparse PCA studied in this paper.
These algorithms are based the unknown convex programming methodology of~\cite{DKKLMS16}
and in particular {\em inherently rely on the ellipsoid algorithm}. Moreover, the separation oracle
required for the ellipsoid algorithm turns out to be another convex program --- corresponding
to an SDP to solve sparse PCA. As a consequence, the running time of these algorithms,
while polynomially bounded, is impractically high.

\subsection{Our Results and Techniques} \label{ssec:results}

The main contribution of this paper is the design of significantly faster robust estimators 
for the aforementioned high-dimensional sparse problems.
More specifically, our algorithms are iterative and each iteration involves a simple spectral
operation (computing the largest eigenvalue of an approximate matrix). Our algorithms
achieve the same error guarantee as~\cite{BDLS17} with similar sample complexity.
At the technical level, we enhance the {\em iterative filtering methodology} of~\cite{DKKLMS16} to
the sparse setting, which we believe is of independent interest and could lead to faster algorithms for
other robust sparse estimation tasks as well.

For robust sparse mean estimation, we show:

\begin{theorem}[Robust Sparse Mean Estimation]\label{thm:mean-informal}
Let $D \sim \normalpdf(\mu,I)$ be a Gaussian distribution on $\R^d$ with
unknown $k$-sparse mean vector $\mu$, and $\eps > 0$.
Let $S$ be an $\eps$-corrupted set of samples from $D$ of size
$N=\widetilde{\Omega}(k^2 \log(d)/\eps^2)$. There exists an algorithm that,
on input $S$, $k$, and $\eps$ runs in polynomial time returns $\widehat{\mu}$
such that with probability at least $2/3$ it holds $\|\widehat{\mu}-\mu\|_2 = O(\eps\sqrt{\log(1/\eps)}).$
\end{theorem}

Some comments are in order. First, the sample complexity of our algorithm is asymptotically 
the same as that of~\cite{BDLS17}, and matches the lower bound of~\cite{DKS17-sq} 
against Statistical Query algorithms for this problem. The major advantage of our algorithm 
over~\cite{BDLS17} is that while their algorithm made use of the ellipsoid method, 
ours uses only spectral techniques and is scalable. 

For robust sparse PCA in the spiked covariance model, we show:

\begin{theorem}[Robust Sparse PCA]\label{thm:sparse-PCA-informal}
Let $D \sim \normalpdf(\mathbf{0}, I+ \rho v v^T)$ be a Gaussian distribution on $\R^d$
with spiked covariance for an unknown $k$-sparse unit vector $v$,
and $0 < \rho < O(1)$. For $\eps > 0$, let $S$ be an $\eps$-corrupted set of samples from $D$ of size
$N={\Omega}(k^4 \log^4(d/\eps)/\eps^2)$. There exists an algorithm that,
on input $S$, $k$, and $\eps$, runs in polynomial time and returns $\hat{v} \in \R^d$
such that with probability at least $2/3$ we have that $\|\hat{v} \hat{v}^T - vv^T\|_F = O\left(\eps \log(1/\eps)/\rho\right)$. 
\end{theorem}

The sample complexity upper bound in Theorem~\ref{thm:sparse-PCA-informal} 
is somewhat worse than the information theoretic optimum of $\Theta (k^2 \log d / \eps^2)$. 
While the ellipsoid-based algorithm of~\cite{BDLS17} achieves near-optimal sample complexity 
(within logarithmic factors), our algorithm is practically viable as it only uses spectral operations.
We also note that the sample complexity in our above theorem is not known to be optimal 
for our algorithm. It seems quite plausible, via a tighter analysis, that our algorithm in fact 
has near-optimal sample complexity as well.

For both of our algorithms, in the most interesting regime of $k \ll
\sqrt{d}$, the running time per iteration is dominated by the $O(N d^2)$
computation of the empirical covariance matrix.  The number of
iterations is at most $\eps N$, although it typically is much smaller,
so both algorithms take at most $O(\eps N^2 d^2)$ time.

\subsection{Related Work} \label{ssec:related}
There is extensive literature on exploiting sparsity in statistical estimation (see, e.g.,~\cite{Hastie15}).
In this section, we summarize the related work that is directly related to the results
of this paper. Sparse mean estimation is arguably one of the most
fundamental sparse estimation tasks and is closely related to the Gaussian sequence model~\cite{Tsybakov08, J17}.
The task of sparse PCA in the spiked covariance model, initiated in~\cite{J01}, has been extensively
investigated (see Chapter~8 of~\cite{Hastie15} and references therein). In this work, we design
algorithms for the aforementioned problems that are robust to a constant fraction of outliers.

Learning in the presence of outliers is an important goal in
statistics studied since the 1960s~\cite{Huber64}. See, e.g.,~\cite{Huber09, HampelEtalBook86}
for book-length introductions in robust statistics.
Until recently, all known computationally efficient high-dimensional estimators
could tolerate a negligible fraction of outliers, even for the task of mean estimation.
Recent work~\cite{DKKLMS16, LaiRV16} gave the first efficient robust
estimators for basic high-dimensional unsupervised tasks, including
mean and covariance estimation. Since the dissemination of~\cite{DKKLMS16, LaiRV16},
there has been a flurry of research activity on computationally efficient robust learning in high dimensions~\cite{BDLS17, CSV17, DKK+17, DKS17-sq, DiakonikolasKKLMS18, SteinhardtCV18, DiakonikolasKS18-mixtures, DiakonikolasKS18-nasty, HopkinsL18, KothariSS18, PrasadSBR2018, DiakonikolasKKLSS2018sever, KlivansKM18, DKS19-lr, LSLC18-sparse, ChengDKS18, ChengDR18, CDGW19}.

In the context of robust sparse estimation,~\cite{BDLS17} obtained sample-efficient and polynomial time
algorithms for robust sparse mean estimation and robust sparse PCA. The main difference between~\cite{BDLS17}
and the results of this paper is that the~\cite{BDLS17} algorithms use the ellipsoid method (whose separation oracle
is an SDP). Hence, these algorithms are prohibitively slow for practical applications.
More recent work~\cite{LSLC18} gave an iterative method for robust sparse mean estimation, which however
requires multiple solutions to a convex relaxation for sparse PCA in each iteration.
Finally,~\cite{LLC19} proposed an algorithm for robust sparse mean estimation via iterative trimmed hard thresholding.
While this algorithm seems practically viable in terms of runtime,
it can only tolerate $1/(\sqrt{k} \log(n d))$ -- i.e., {\em sub-constant} -- fraction
of corruptions.

\subsection{Paper Organization} 
In Section~\ref{sec:alg}, we describe our algorithms and provide a detailed sketch of their analysis. 
In Section~\ref{sec:experiments}, we report detailed experiments demonstrating the performance 
of our algorithms on synthetic data in various parameter regimes. Due to space limitations, 
the full proofs of correctness for our algorithms can be found in the full version of this paper.

\section{Algorithms} \label{sec:alg}


In this section, we describe our algorithms in tandem with a detailed
outline of the intuition behind them and a sketch of their analysis.
Due to space limitations, the proof of correctness is deferred to the full version of our paper. 

At a high-level, our algorithms use the iterative filtering methodology
of~\cite{DKKLMS16}. The main idea is to iteratively remove a small subset
of the dataset, so that eventually we have removed all the important outliers
and the standard estimator (i.e., the estimator we would have used in the noiseless
case) works. Before we explain our new ideas that enhance the filtering methodology
to the sparse setting, we provide a brief technical description of the approach.

\vspace{-0.2cm}

\paragraph{Overview of Iterative Filtering.}
The basic idea of iterative filtering~\cite{DKKLMS16} is the following:  
In a given iteration, carefully pick some test statistic (such as $v \cdot x$ for a well-chosen $v$). 
If there were no outliers, this statistic would follow a nice distribution (with good concentration properties).  
This allows us to do some sort of statistical hypothesis testing of the
``null hypothesis'' that each $x_i$ is an inlier, rejecting it (and
believing that $x_i$ is an outlier) if $v \cdot x_i$ is far from the
expected distribution.  Because there are a large number of such
hypotheses, one uses a procedure reminiscent of 
the Benjamini-Hochberg procedure~\cite{benjamini1995controlling} 
to find a candidate set of outliers with low \emph{false discovery rate} (FDR), 
i.e., a set with more outliers than inliers in expectation. This procedure looks for a
threshold $T$ such that the fraction of points with test statistic
above $T$ is at least a constant factor more than it ``should'' be.
If such a threshold is found, those points are mostly outliers and
can be safely removed.  The key goal is to judiciously 
design a test statistic such that either the outliers aren't
particularly important---so the naive empirical solution is
adequate---or at least one point will be filtered out.

In other words, the goal is to find a test statistic such that, if the
distribution of the test statistic is ``close'' to what it would be in
the outlier-free world, then the outliers cannot perturb the answer
too much.  An additional complication is that the test statistics
depend on the data (such as $v \cdot x$, where $v$ is the principal
component of the data) making the distribution on inliers also
nontrivial. This consideration drives the sample complexity of the
algorithms.  

In the algorithms we describe below, we use a specific parameterized notion of a good set. We define these precisely in the supplementary material, briefly, any large enough sample drawn from the uncorrupted distribution will satisfy the structural properties required for the set to be good.

We now describe how to design such test statistics for our two sparse settings. 

\paragraph{Notation} 
Before we describe our algorithms, we set up some notation. We define $h_k: \mathbb{R}^d \rightarrow \mathbb{R}^d$ to be the thresholding operator that keeps the $k$ entries of $v$ with the largest magnitude and sets the rest to 0. For a finite set $S$, we will use $a \in_u S$ to mean that $a$ is chosen uniformly at random from $S$. For $M \in \mathbb{R}^d \times \mathbb{R}^d$ and $U \subseteq [d]$, let $M_U$ denote the matrix $M$ restricted to the $U \times U$ submatrix.

\paragraph{Robust Sparse Mean Estimation.}
Here we briefly describe the motivation and analysis of Algorithm~\ref{alg:rsm}, 
describing a single iteration of our filter for the robust sparse mean setting. 

In order to estimate the $k$-sparse mean $\mu$, it suffices to ensure 
that our estimate $\mu'$ has $|v \cdot (\mu'-\mu)|$ small for any $2k$-sparse unit vector $v$. 
The now-standard idea in robust statistics~\cite{DKKLMS16} is that if a small number of corrupted 
samples suffice to cause a large change in our estimate of $v \cdot \mu$, then this must lead 
to a substantial increase in the sample variance of $v \cdot x$, which we can detect.

Thus, a very basic form of a robust algorithm might be to compute a sample covariance matrix
$\wt{\Sigma}$, and let $v$ be the $2k$-sparse unit vector that maximizes
$v^T \wt{\Sigma} v$. If this number is close to $1$, it certifies that our
estimate $\mu'$ --- obtained by truncating the sample mean to its $k$-largest
entries --- is a good estimate of the true mean $\mu$. If not, this will
allow us to filter our sample set by throwing away the values where $v \cdot x$ is furthest from the true mean. 
This procedure guarantees that we have removed more corrupted samples than
uncorrupted ones. We then repeat the filter until the empirical variance in every sparse direction
is close to $1$. 

Unfortunately, the optimization problem of finding the optimal $v$ is
computationally challenging, requiring a convex program. 
To circumvent the need for a convex program, we notice that 
$v^T \wt{\Sigma} v - 1 = (\wt{\Sigma}-I) \cdot (vv^T)$ is
large only if $\wt{\Sigma}-I$ has large entries on the $(2k)^2$ non-zero entries
of $vv^T$. Thus, if the $4k^2$ largest entries of $\wt{\Sigma}-I$ had small 
$\ell_2$-norm, this would certify that no such bad $v$ existed and would allow us
to return the truncated sample mean. In case these entries have large $\ell_2$-norm, 
we show that we can produce a filter that removes more bad samples than good ones. 
Let $A$ be the matrix consisting of the large entries of $\wt{\Sigma}$ (for the moment assume that they are all off
diagonal, but this is not needed). We know that the sample mean
of $p(x) = (x-\mu')^T A (x-\mu') = \wt{\Sigma} \cdot A = \|A\|_F^2$. On the other hand, if
$\mu'$  approximates $\mu$ on the $O(k^2)$ entries in question, we would
have that $\|p\|_2 = \|A\|_F$. This means that if $\|A\|_F$ is reasonably large,
an $\eps$-fraction of corrupted points changed the mean of $p$ from $0$ to
$\|A\|_F^2 = \|A\|_F \|p\|_2$. This means that many of these errors
must have had $|p(x)| \ll \|A\|_F/\eps \|p\|_2$. This becomes very unlikely for good samples if
$\|A\|_F$ is much larger than $\eps$ (by standard results on the
concentration of Gaussian polynomials). Thus, if $\mu'$ is approximately
$\mu$ on these $O(k^2)$ coordinates, we can produce a filter. To ensure this, 
we can use existing filter-based algorithms to
approximate the mean on these $O(k^2)$ coordinates. This results in Algorithm~\ref{alg:rsm}. 
For the analysis, we note that if the entries of $A$ are small, then
$v^T(\wt{\Sigma}-I) v$ must be small for any unit  $k$-sparse $v$, which
certifies that the truncated sample mean is good. 
Otherwise, we can filter the samples using the first kind of filter. 
This ensures that our mean estimate is sufficiently close 
to the true mean that we can then filter using the second kind of filter.




It is not hard to show that the above works if we are given
sufficiently many samples, but to obtain a tight analysis
of the sample complexity, we need a number of subtle technical ideas. 
The detailed analysis of the sample complexity is deferred to the full version of our paper. 

\paragraph{Robust Sparse PCA}
Here we briefly describe the motivation and analysis of Algorithm~\ref{alg:rspca}, 
describing a single iteration of our filter for the sparse PCA setting. 

Note that estimating the $k$-sparse vector $v$ is equivalent to estimating
$\E[XX^T - I] = vv^T$. In fact, estimating $\E[XX^T-I]$ to error $\eps$ 
in Frobenius norm allows one to estimate $v$ within
error $\eps$ in $\ell_2$-norm. Thus, we focus on he task of robustly approximating
the mean of $Y = XX^T - I$.

Our algorithm is going to take advantage of one fact about $X$ that even
errors cannot hide: that $\Var[v \cdot X]$ is large. This is because removing uncorrupted
samples cannot reduce the variance by much more than an $\eps$-fraction,
and adding samples can only increase it. This means that an adversary
attempting to fool our algorithm can only do so by creating other
directions where the variance is large, or simply by adding other
large entries to the sample covariance matrix in order to make it hard
to find this particular $k$-sparse eigenvector. In either case, the
adversary is creating large entries in the empirical mean of $Y$ that
should not be there. This suggests that the largest entries of
the empirical mean of $Y$, whether errors or not, will be of great
importance.

These large entries will tell us where to focus our
attention. In particular, we can find the $k^2$ largest entries of the
empirical mean of $Y$ and attempt to filter based on them. 
When we do so, one of two things will happen: Either we 
remove bad samples and make progress or
we verify that these entries ought to be large, 
and thus must come from the support of $v$.
In particular, when we reach the second case, 
since the adversary cannot shrink the empirical variance of $v \cdot X$ by much, 
almost all of the entries on the support of $v$ must remain large, 
and this can be captured by our algorithm.

The above algorithm works under a set of 
deterministic conditions on the good set of samples
that are satisfied with high probability with 
$\poly(k)\log(d)/\eps^2$ samples. Our current analysis
does not establish the information-theoretically
optimal sample size of $O(k^2 \log(d)/\eps^2)$, though we believe that this 
plausible via a tighter analysis.

We note that a naive implementation of this algorithm will achieve error $\poly(\eps)$
in our final estimate for $v$, while our goal is to obtain $\tilde{O}(\eps)$ error.
To achieve this, we need to overcome two difficulties:
First, when trying to filter $Y$ on subsets of its coordinates, we do not
know the true variance of $Y$, and thus cannot expect to obtain $\tilde{O}(\eps)$
error. This is fixed with a bootstrapping method similar to
that in~\cite{Kane18} to estimate the covariance of a Gaussian. 
In particular, we do not know $\Var[Y]$ a priori, but after we
run the algorithm, we obtain an approximation to $v$, which gives an
approximation to $\Var[Y]$. This in turn lets us get a better
approximation to $v$ and a better approximation to $\Var[Y]$; and so on.

\section{Preliminaries} \label{sec:prelims}
We will use the following notation and definitions.


\paragraph{Basic Notation}
For $n \in \Z_+$, let $[n] \eqdef \{1, 2, \ldots, n\}$.
Throughout this paper, for $v = (v_1, \ldots, v_d) \in \R^d$, we will use $\| v \|_2$ to denote its Euclidean norm.
If $M \in \R^{d \times d}$, we will use $\| M \|_2$ to denote its spectral norm, $\| M \|_F$ to denote its Frobenius norm,
and $\tr[M]$ to denote its trace.
We will also let $\preceq$ and $\succeq$ denote the PSD ordering on matrices.
For  a finite multiset $S$, we will write $X \in_u S$ to denote that $X$ is drawn 
from the empirical distribution defined by $S$. Given finite multisets $S$ and $S'$ we let $\Delta(S,S')$
be the size of the symmetric difference of $S$ and $S'$ divided by the cardinality of $S$.

For $v \in \R^d$ and $S \subseteq [d]$, let $v_S$ be the vector with $(v_S)_i = v_i$, $i \in S$, 
and $(v_S)_i = 0$ otherwise. We denote by $h_k(v)$ the thresholding operator 
that keeps the $k$ entries of $v$ with largest magnitude (breaking ties arbitrarily) and sets the rest to $0$.
For $M \in \mathbb{R}^{d \times d}$ and $U \subseteq [d]$, let $M_{U}$ 
denote the matrix $M$ restricted to the $U \times U$ sub-matrix. 
For $W \subseteq [d] \times [d]$, then we will use $M_{(W)}$ to denote the matrix $M$ 
restricted to the elements whose entries are in $W.$

Let $\delta_{ij}$ denote the Kronecker delta function. 
We will denote $\erfc(z) = (2/\sqrt{\pi}) \int_{z}^{\infty} e^{-t^2} dt$.
The notation $\widetilde{O}(\cdot)$ and $\widetilde{\Omega}(\cdot)$ hides logarithmic factors
in the argument.

\section{Robust Sparse Mean Estimation} \label{app:sparse-mean}

\begin{algorithm}
	\begin{algorithmic}[1]
		\Procedure{Robust-Sparse-Mean}{$S, k, \eps,\tau$}
		\INPUT A multiset $S$ such that there exists an $(\eps,k, \tau)$-good set $G$ with $\Delta(G, S) \le 2\eps$.
		\OUTPUT Multiset $S'$ or vector $\wh{\mu}$ satisfying Proposition~\ref{prop:rsm}.
		\State Compute the sample mean $\wt{\mu}=\E_{X\in_u S}[X]$ and the sample covariance matrix $\wt{\Sigma}$ ,
		i.e., $\wt{\Sigma} = (\wt{\Sigma}_{i,j})_{1 \le i, j \le d}$ with 
		$\wt{\Sigma}_{i,j} = \E_{X\in_u S}[(X_i-\wt{\mu}_i) (X_j-\wt{\mu}_j)]$.
		\State Let $U \subseteq [d] \times [d]$ be the set of the $k$ largest magnitude entries of the diagonal of $\wt{\Sigma}-I$ 
		and the largest magnitude $k^2-k$ off-diagonal entries, with ties broken so that if $(i, j) \in U$ then $(j,i) \in U$. 
		\If {$\| (\wt{\Sigma}-I)_{(U)}\|_F \leq O(\eps \log(1/\eps))$} \label{step:return-mean}
		\textbf{return}  $\wh{\mu}: = h_k(\wt{\mu}).$
		\EndIf
		\State Set $U' = \{i \in [d]: (i, j) \in U\}$. \label{step:linear-filter-start}
		\State Compute the largest eigenvalue $\lambda^{\ast}$ of $(\wt{\Sigma}-I)_{U'}$ 
		and a corresponding unit eigenvector $v^{\ast}$.
		\If{$\lambda^{\ast} \geq \Omega(\eps \sqrt{\log(1/\eps)})$}: Let $\delta_{\ell} := 3\sqrt{\eps \lambda^{\ast}}$.
		Find $T > 0$ such that
		\[ \pr_{X \in_u S} \left[|v^{\ast} \cdot (X - \wt{\mu})  | \geq T + \delta_{\ell}\right] 
		\geq 9 \cdot \erfc(T/\sqrt{2})+ \frac{3 \eps^2}{T^2\ln(k \ln(N d/\tau))}. \]                             
		\State \textbf{return} the multiset $S' = \{x\in S: |v^{\ast} \cdot (x-\wt{\mu}) | \leq T+\delta_{\ell} \}$.
		\label{step:linear-filter-end}
		\EndIf
		
		\State Let $p(x)=\left((x-\wt{\mu})^T (\wt{\Sigma}-I)_{(U)}^T(x-\wt{\mu})-\Tr((\wt{\Sigma}-I)_{(U)} )\right)/\|(\wt{\Sigma}-I)_{(U)}\|_F$.
		\label{step:quad-filter-start}
		
		\State \label{step:quad-filter}  
		Find $T>6$ such that $$\pr_{X \in_u S} [|p(X)| \geq T] \geq  9 \exp(-T/4) +  3\eps^2/(T \ln^2 T) \;.$$
		\State \textbf{return} the multiset $S' = \{x \in S: |p(x)| \leq T \}$.	
		\label{step:quad-filter-end}
		\EndProcedure
	\end{algorithmic}
	\caption{Robust Sparse Mean Estimation via Iterative Filtering}
	\label{alg:rsm}
\end{algorithm}

In this section, we prove correctness of Algorithm~\ref{alg:rsm}
establishing Theorem~\ref{thm:mean-informal}.
For completeness, we restate a formal version
of this theorem:

\begin{theorem} \label{thm:mean-gaussian-identity}
Let $D \sim \normalpdf(\mu,I)$ be an identity covariance Gaussian distribution on $\R^d$ with
unknown $k$-sparse mean vector $\mu$, and $\eps, \tau > 0$.
Let $S$ be an $\eps$-corrupted set of samples from $D$ of size 
$N=\todo{\widetilde{\Omega}(k^2 \log(d/\tau)/\eps^2)}$. There exists an efficient algorithm that, 
on input $S$, $k$, $\eps$, and $\tau$, returns a mean vector $\widehat{\mu}$ 
such that with probability at least $1-\tau$ it holds $\|\widehat{\mu}-\mu\|_2 = O(\eps\sqrt{\log(1/\eps)}).$
\end{theorem}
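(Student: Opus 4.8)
The plan is to combine a deterministic guarantee for a single pass of Algorithm~\ref{alg:rsm}, namely Proposition~\ref{prop:rsm}, with a probabilistic argument that the corrupted input satisfies the promise that algorithm requires, and then to iterate. Concretely, view the overall estimator as: set $S_0 = S$; repeatedly call \textsc{Robust-Sparse-Mean}$(S_t,k,\eps,\tau)$; if it returns a vector $\wh\mu$, output it, and otherwise let $S_{t+1}$ be the returned multiset and continue. Theorem~\ref{thm:mean-gaussian-identity} then reduces to three claims: (i) with probability at least $1-\tau$ there is an $(\eps,k,\tau)$-good set $G$ with $\Delta(G,S_0)\le 2\eps$; (ii) the loop terminates, and the invariant $\Delta(G,S_t)\le 2\eps$ is maintained throughout; and (iii) whatever vector is eventually returned lies within $O(\eps\sqrt{\log(1/\eps)})$ of $\mu$.

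For (i), write $S_0$ as the disjoint union of a set of surviving uncorrupted samples and at most $\eps N$ inserted points. It suffices to show that, with probability at least $1-\tau$ over a draw of $N$ i.i.d.\ samples from $\normalpdf(\mu,I)$, deleting any $\eps N$ of them leaves an $(\eps,k,\tau)$-good set; then $G$ is taken to be the surviving uncorrupted samples, which lie in $S_0$ unchanged, so $\Delta(G,S_0)\le\eps\le 2\eps$. Goodness is a finite list of conditions stating that the empirical first moment and the empirical restricted second moment of $G$ match those of $\normalpdf(\mu,I)$, to the prescribed accuracy, on every $k$-sparse unit direction and every $k\times k$ principal submatrix, together with the corresponding linear and quadratic polynomial tail bounds. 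Each such condition concerns a fixed low-dimensional projection, so one argues by a union bound over the $\binom{d}{k}\le d^k$ possible supports (and the analogous sets of $O(k^2)$ coordinates relevant to the off-diagonal entries), a $1/2$-net inside each resulting subspace, and standard Gaussian concentration for linear and quadratic forms (Hanson--Wright); deletion-robustness costs only constant factors. Balancing the failure probability against $\tau$ yields the stated $N = \widetilde\Omega(k^2\log(d/\tau)/\eps^2)$.

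For (ii) and (iii), invoke Proposition~\ref{prop:rsm}: applied to any multiset $S_t$ with $\Delta(G,S_t)\le 2\eps$, one pass of Algorithm~\ref{alg:rsm} either returns $\wh\mu = h_k(\wt\mu)$ with $\|\wh\mu-\mu\|_2 = O(\eps\sqrt{\log(1/\eps)})$, or returns a proper submultiset $S_{t+1}\subsetneq S_t$ which still satisfies $\Delta(G,S_{t+1})\le 2\eps$ and contains strictly fewer corrupted points (points of $S_t\setminus G$). Thus the invariant persists, the corrupted-point count is a nonnegative integer that strictly decreases on each filtering step, and after at most $2\eps N$ iterations no corrupted points remain; at that point $S_t$ is itself a good set, so $\|(\wt\Sigma-I)_{(U)}\|_F = O(\eps\log(1/\eps))$ and the algorithm must return at Step~\ref{step:return-mean}. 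Whenever it returns at Step~\ref{step:return-mean} (in any iteration), the bound $\|(\wt\Sigma-I)_{(U)}\|_F = O(\eps\log(1/\eps))$ together with the choice of $U$ (the largest diagonal and off-diagonal deviations) forces $\|(\wt\Sigma-I)_V\|_2 = O(\eps\log(1/\eps))$ for every $2k$-subset $V$; goodness of $G$ then upgrades this to $\|(\wt\mu-\mu)_V\|_2 = O(\eps\sqrt{\log(1/\eps)})$ for every $2k$-subset $V$ (the empirical mean deviates from the good-set mean by $O(\eps\sqrt{\log(1/\eps)})$ in sparse directions once the restricted covariance is spectrally bounded, and the good-set mean is $O(\eps)$-close to $\mu$). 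Finally, since $\mu$ is $k$-sparse, a short argument comparing the supports of $\mu$ and $h_k(\wt\mu)$ — for each coordinate $i$ kept by $\mu$ but dropped by $h_k(\wt\mu)$ there is a coordinate $j$ kept by $h_k(\wt\mu)$ but not by $\mu$ with $|\wt\mu_j|\ge|\wt\mu_i|$ and $\mu_j=0$ — gives $\|h_k(\wt\mu)-\mu\|_2 = O(\|(\wt\mu-\mu)_V\|_2)$ for a suitable $2k$-subset $V$, and the theorem follows.

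The heart of the matter, and the main obstacle, is Proposition~\ref{prop:rsm} itself: one must verify that each of the two filters — the linear filter at Steps~\ref{step:linear-filter-start}--\ref{step:linear-filter-end}, invoked when $(\wt\Sigma-I)_{U'}$ has a large eigenvalue, and the quadratic filter at Steps~\ref{step:quad-filter-start}--\ref{step:quad-filter-end}, which detects second-moment corruption invisible to the linear one — is simultaneously \emph{safe} (the number of filtered points lying in $G$ is at most, up to lower-order terms, the number lying outside $G$, so both the goodness of $G$ and the bound $\Delta(G,S_{t+1})\le 2\eps$ survive) and \emph{effective} (whenever the stopping test at Step~\ref{step:return-mean} fails, the selected filter is guaranteed to find a threshold $T$ meeting the stated inequality, hence to delete at least one corrupted point). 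This is exactly where the precise form of the thresholds — the $\erfc$ and $\exp(-T/4)$ terms, plus the additive slack of order $\eps^2/(T^2\ln(\cdot))$ and $\eps^2/(T\ln^2 T)$ — and the linear and quadratic tail bounds built into the definition of a good set are used. Granting Proposition~\ref{prop:rsm}, the remaining ingredients above — the concentration/union bound establishing $G$, the termination count, and the $h_k$ rounding step — are routine.
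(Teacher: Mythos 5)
Your proposal is correct and follows essentially the same route as the paper: establish that the uncorrupted samples form an $(\eps,k,\tau)$-good set with probability $1-\tau$ (the paper's Lemma~\ref{lem:samples-good-mean}), iterate Proposition~\ref{prop:rsm} while maintaining the invariant $\Delta(G,S_t)\le 2\eps$, and use the $h_k$ support-comparison argument (the paper's Corollary~\ref{cor:small-norm-sparse}) for the returned vector. The only nitpick is your termination step: the claim that once no corrupted points remain ``$S_t$ is itself a good set'' is neither literally guaranteed (good points may have been removed, perturbing the empirical moments) nor needed — since each filtering pass decreases $\Delta(G,S_t)$ by $\eps/N$ from an initial value of at most $2\eps$ and $\Delta$ is nonnegative, the algorithm must return a candidate mean within $2N$ iterations, which is exactly the paper's argument.
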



\subsection{Proof of Theorem~\ref{thm:mean-gaussian-identity}}

In this section, we describe and analyze our algorithm establishing Theorem~\ref{thm:mean-gaussian-identity}.
We start by formalizing the set of deterministic conditions on the good data
under which our algorithm succeeds:

\begin{definition} \label{def:good-set-mean} 
Fix $0< \eps, \tau <1$ and $k \in \Z_+$.
A multiset $G$ of points in $\R^d$ is {\em $(\eps,k,\tau)$-good with respect to $\normalpdf(\mu,I)$} if, 
for $X \in_u G$ and $Y \sim \normalpdf(\mu,I)$, the following conditions hold:

\begin{itemize}
\item[(i)] For all $i \in [d]$, $|\E_{X \in_u G}[X_i]-\mu_i| \leq \eps/k$, 
              and for all $i, j \in [d]$, $|\E_{X \in_u G}\left[(X_i-\mu_i)(X_j-\mu_j)\right] - \delta_{ij}| \leq \eps/k$.
\item[(ii)] For all $x \in G$ and $i \in [d]$, we have $|x_i - \mu_i| \leq O(\sqrt{\log(d |G|/\tau)})$. 
\item[(iii)] For all $2k^2$-sparse unit vectors $v \in \R^d$, we have that: 
\begin{itemize}
\item[(a)] $|\E_{X\in_u G}[v \cdot (X-\mu)]| \leq O(\eps)$,
\item[(b)] $|\E_{X \in_u G}[(v \cdot (X-\mu))^2]-1| \leq O(\eps)$, and 
\item[(c)] For all $T \geq \todo{6}$, 
$\pr_{X \in_u G}[|v \cdot (X - \mu)| \geq T] \leq 3 \cdot \erfc(T/\sqrt{2}) + \eps^2/\left(T^2\ln\left(k \ln(d|G|/\tau)\right)\right).$
\end{itemize}
\item[(iv)] For all homogeneous\footnote{Recall that a degree-$d$ polynomial is called homogeneous if its non-zero terms
are all of degree exactly $d$.} degree-$2$ polynomials $p$ with $\Var_{\normalpdf(\mu, I)}[p(Y)]=1$ and at most $k^2$ terms, 
we have that:
\begin{itemize}
\item[(a)] $|\E_{X \in_u G}[p(X)] - \E_{\normalpdf(\mu, I)}[p(Y)]| \leq O(\eps \sqrt{\Var_{\normalpdf(\mu, I)}[p(Y)]}) = O(\eps)$, and, 
\item[(b)] For all $T \geq 5$, $\pr_{X \in_u G}\left[\left| p(X) - \E_{\normalpdf(\mu, I)}[p(Y)] \right| \geq T \right] \leq  3 \exp(-T/4) + \eps^2/(T \ln^2 T).$
\end{itemize}
\end{itemize}
\end{definition}
 
Our first lemma says that a sufficiently large set of samples from $\normalpdf(\mu,I)$
is good with high probability:
 
\begin{lemma} \label{lem:samples-good-mean} 
A set of $N=\todo{\widetilde{\Omega} \left(k^2 \log(d/\tau)/\eps^2 \right)}$ samples 
from $\normalpdf(\mu,I)$ is $(\eps,k,\tau)$-good (with respect to $\normalpdf(\mu,I)$) 
with probability at least $1-\tau$.
\end{lemma}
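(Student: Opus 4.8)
The plan is to bound the probability that each of the conditions (i)--(iv) in Definition~\ref{def:good-set-mean} fails, and then apply a union bound. Throughout, write $N = \widetilde{\Omega}(k^2\log(d/\tau)/\eps^2)$ and let $X^{(1)},\dots,X^{(N)}$ be i.i.d.\ from $\normalpdf(\mu,I)$. We want each condition to hold with probability at least $1 - \tau/4$, say.

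Condition (i) is the easiest: each coordinate $X_i - \mu_i$ is a standard Gaussian, so $\E_{X\in_u G}[X_i]$ is an average of $N$ such variables and concentrates to within $\eps/k$ with failure probability $\exp(-\Omega(N\eps^2/k^2))$ by a Chernoff/Hoeffding bound for subgaussians; similarly $\E_{X\in_u G}[(X_i-\mu_i)(X_j-\mu_j)] - \delta_{ij}$ is an average of centered subexponential variables, again concentrating to $\eps/k$ with failure probability $\exp(-\Omega(N\eps^2/k^2))$. There are $O(d^2)$ such scalar events, and our choice of $N$ makes $N\eps^2/k^2 = \widetilde{\Omega}(\log(d/\tau))$, which dominates $\log(d^2/\tau)$ up to the hidden logs, so the union bound over coordinates costs us only a constant factor in the $\widetilde{\Omega}$. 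Condition (ii) is a standard Gaussian maximal inequality: each $|X^{(m)}_i - \mu_i|$ exceeds $c\sqrt{\log(dN/\tau)}$ with probability at most $\tau/(10 dN)$ for a suitable constant $c$, so a union bound over all $dN$ pairs $(m,i)$ gives (ii) with probability $\ge 1-\tau/10$.

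The substantive work is conditions (iii) and (iv), which must hold for \emph{all} sparse unit vectors / low-term quadratics simultaneously. The standard approach here (as in Balakrishnan--Du--Li--Singh and Diakonikolas--Kane--Karmalkar--Price--Stewart) is: (1) fix the support; for a fixed $2k^2$-element support $W$, condition (iii) becomes a statement about all unit vectors in $\R^W$, i.e.\ about the empirical covariance and higher moments restricted to $W$; (2) prove the needed tail/moment bounds for a single fixed $v$ via standard one-dimensional Gaussian concentration (parts (a),(b) are first/second moment concentration; part (c) is the VC-type tail bound argument that the empirical measure of the event $\{|v\cdot(X-\mu)|\ge T\}$ tracks its Gaussian value up to the additive $\eps^2/(T^2\ln(\cdots))$ slack — this is exactly the ``bounded differences on a net plus the tail at $T$'' trick); (3) take a union bound over a $\gamma$-net of the unit sphere of $\R^W$, of size $(1/\gamma)^{O(k^2)}$, and over all $\binom{d}{2k^2} \le d^{2k^2}$ supports $W$, absorbing the resulting $k^2\log(d/\gamma)$ into the exponent that $N\eps^2/k^2$ must beat; (4) a Lipschitz/net-transfer argument to pass from the net to all unit vectors. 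Condition (iv) is handled identically, viewing a $k^2$-term homogeneous quadratic as a vector in $\R^{k^2}$ of coefficients (equivalently a matrix supported on a $k\times k$ block of coordinates after symmetrizing), using that $p(X) - \E[p(Y)]$ is a degree-$2$ Gaussian polynomial and hence has subexponential tails (Hanson--Wright), so parts (a),(b) follow from moment concentration and the net argument, and the $\exp(-T/4)$ tail comparison follows from the same VC-style argument applied to degree-$2$ polynomial threshold functions.

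The main obstacle is making the uniformity over the exponentially large family of sparse directions quantitatively consistent with the stated sample size $N = \widetilde{\Omega}(k^2\log(d/\tau)/\eps^2)$: one must verify that the union-bound exponent contributed by (a net of) the $O(k^2)$-dimensional sparse directions together with $\log\binom{d}{O(k^2)} = O(k^2\log d)$ supports is $\widetilde{O}(k^2\log(d/\tau))$, which is exactly what $N\eps^2 = \widetilde{\Omega}(k^2\log(d/\tau))$ buys, with the logarithmic slack in $\widetilde{\Omega}$ covering the $\ln\ln$ and $\ln T$ factors appearing in the tail bounds of (iii)(c) and (iv)(b). The delicate points are (1) choosing the net fineness $\gamma$ and the truncation level so that the net-transfer error is lower-order than the target slacks $\eps^2/(T^2\ln(\cdots))$ and $\eps^2/(T\ln^2 T)$ — here one uses that after conditioning on (ii) the samples are bounded, so the relevant functions are Lipschitz with a $\mathrm{poly}\log$ constant — and (2) handling the full range of $T$ (from the constant lower cutoff up to where the Gaussian tail itself drops below $\eps^2$) by discretizing $T$ as well, which only costs another logarithmic factor. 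Modulo these bookkeeping steps, each individual estimate is a routine subgaussian/subexponential concentration bound.
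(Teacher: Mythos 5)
The paper itself states this lemma without including a proof in the text, deferring to standard concentration arguments of the type in \cite{DKKLMS16} (and the related sparse-estimation works you cite), and your sketch is exactly that intended argument: per-condition concentration, a union bound over the at most $d^{O(k^2)}$ supports and over nets of the resulting $O(k^2)$-dimensional directions/coefficient vectors, Hanson--Wright-type tails for the degree-$2$ polynomials, and a discretization in $T$ made valid by the boundedness from condition (ii). Your exponent accounting is also right---$N\eps^2 = \widetilde{\Omega}(k^2\log(d/\tau))$ dominates the $O(k^2\log d)$ net/support entropy up to the polylogarithmic slack absorbed by $\widetilde{\Omega}$---so, at the level of detail offered, this matches the intended proof.
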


Our algorithm iteratively applies the procedure \textsc{Robust-Sparse-Mean} 
(Algorithm~\ref{alg:rsm}). 
The crux of the proof is the following performance guarantee of \textsc{Robust-Sparse-Mean}:

\begin{proposition} \label{prop:rsm}
Algorithm~\ref{alg:rsm} has the following performance guarantee: 
On input a multiset $S$ of $N$ points in $\R^d$ such that $\Delta(G, S) \leq 2\eps$, 
where $G$ is an $(\eps, k, \tau)$-good set with respect to $\normalpdf(\mu,I)$, 
procedure \textsc{Robust-Sparse-Mean} returns one of the following:
\begin{enumerate}
\item A mean vector $\wh{\mu}$ such that $\|\wh{\mu}-\mu\|_2 = O(\eps\sqrt{\log(1/\eps)})$, or 
\item A multiset $S' \subset S$ satisfying $\Delta(G, S') \leq \Delta(G, S) - \eps/N$.
\end{enumerate} 
\end{proposition}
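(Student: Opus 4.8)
The plan is to do a case analysis following the four return statements of Algorithm~\ref{alg:rsm}, showing that whenever the algorithm outputs a mean vector it satisfies guarantee~(1), and whenever it outputs a filtered multiset $S'$ it satisfies guarantee~(2). The unifying principle is a decomposition of the empirical statistics of $S$ into a contribution from $G \cap S$ (controlled by goodness) and a contribution from the $O(\eps)$-fraction of outliers $E = S \setminus G$ (which we must bound or, failing that, filter out). Throughout, write $L = G \setminus S$ and $E = S \setminus G$, so $\Delta(G,S) \le 2\eps$ means $|L|, |E| \le 2\eps N$, and $\mu$ is $k$-sparse; restricting all relevant vectors/matrices to the support $U$ (resp.\ $U'$) keeps us in the regime of $O(k^2)$-sparse test vectors and $k^2$-term polynomials where Definition~\ref{def:good-set-mean} applies.

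First, for the mean-return step (line~\ref{step:return-mean}): if $\|(\wt\Sigma - I)_{(U)}\|_F \le O(\eps\log(1/\eps))$, then since $U$ captures the $k$ largest diagonal and $k^2-k$ largest off-diagonal entries of $\wt\Sigma - I$, every $k$-sparse principal submatrix of $\wt\Sigma - I$ has small Frobenius (hence spectral) norm. Combined with condition~(iii)(b) applied to the good part, this shows that for every $k$-sparse unit $v$, $\Var_{X\in_u S}[v\cdot X]$ is within $O(\eps\log(1/\eps))$ of $1$, which via the standard ``small spectral deviation $\Rightarrow$ small mean shift'' argument for identity-covariance sub-Gaussian-type data (using (iii)(a)/(iii)(c) to bound the outlier mean contribution $\frac{|E|}{N}\E_E[\cdot]$ by $O(\eps\sqrt{\log(1/\eps)})$ after truncation at level $\sim\sqrt{\log(1/\eps)}$) gives $\|h_k(\wt\mu) - \mu\|_2 = O(\eps\sqrt{\log(1/\eps)})$; here one uses that $\mu$ is $k$-sparse and $\wt\mu$ is close to $\mu$ on a $k$-sparse window, so thresholding to $k$ coordinates only costs a constant factor and keeps the error $2k$-sparse.

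The two filter steps are where the real work is. For the linear filter (lines~\ref{step:linear-filter-start}--\ref{step:linear-filter-end}): if the top eigenvalue $\lambda^\ast$ of $(\wt\Sigma - I)_{U'}$ is $\Omega(\eps\sqrt{\log(1/\eps)})$, then the variance of $v^\ast\cdot(X-\wt\mu)$ under $S$ is inflated by $\ge \lambda^\ast$; since the good part contributes at most $1 + O(\eps)$ by (iii)(b), the excess variance must come from $E$, which forces $\E_{X\in_u S}[(v^\ast\cdot(X-\wt\mu))^2 \cdot \mathbb 1[|v^\ast\cdot(X-\wt\mu)|\ge T]]$ to be large for some threshold, i.e.\ the tail bound defining $T$ in the algorithm is satisfiable. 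One then shows, using condition~(iii)(c) on $G\cap S$ to upper bound $\pr_{G}[|v^\ast\cdot(X-\mu)|\ge T+\delta_\ell - |v^\ast\cdot(\wt\mu-\mu)|]$ by roughly $3\erfc(T/\sqrt 2) + \eps^2/(T^2\ln(\cdots))$, that the points removed by the filter are more than half outliers; the slack $\delta_\ell = 3\sqrt{\eps\lambda^\ast}$ and the shift $|v^\ast\cdot(\wt\mu - \wt\mu)|$... rather $|v^\ast\cdot(\wt\mu-\mu)| = O(\sqrt{\eps\lambda^\ast})$ (bounded via the same variance/outlier argument) are exactly what is needed to absorb the discrepancy between centering at $\wt\mu$ versus $\mu$ and between the $9\cdot\erfc$ in the algorithm and the $3\cdot\erfc$ in goodness. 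This yields $|E \cap S'^c| > |L|$ morally, giving $\Delta(G,S') \le \Delta(G,S) - \eps/N$ (in fact better). The quadratic filter (lines~\ref{step:quad-filter-start}--\ref{step:quad-filter-end}) is analogous but one dimension up: the polynomial $p$ is a unit-variance (w.r.t.\ $\normalpdf(\mu,I)$) homogeneous degree-2 polynomial with $\le k^2$ terms once restricted to $U$, by construction $\E_{X\in_u S}[p(X)] \gtrsim \|(\wt\Sigma-I)_{(U)}\|_F / (\text{normalization}) = \Omega(\eps\log(1/\eps))$ is large while $\E_G[p] = O(\eps)$ by (iv)(a), so again the outliers must create a heavy tail, and (iv)(b) with its $3\exp(-T/4) + \eps^2/(T\ln^2 T)$ tail lets us certify that the filter removes predominantly outliers; the cross terms $\E_S[(X-\wt\mu)^T A (X-\mu)]$ vs.\ $\E_S[(X-\mu)^T A (X-\mu)]$ introduced by centering at $\wt\mu$ are lower order because at this stage $\|\wt\mu - \mu\|_2$ on the support is already $O(\eps\sqrt{\log(1/\eps)})$ (one should prove this as a preliminary claim: whenever neither the mean-return nor the linear filter fired, the small top eigenvalue of $(\wt\Sigma-I)_{U'}$ plus (iii) forces the empirical mean on any $k$-sparse direction to be accurate).

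The main obstacle is the bookkeeping in the two filter cases: one must show the algorithm's prescribed threshold $T$ *exists* (i.e.\ the defining inequality $\pr_S[\cdots]\ge 9\,\erfc(\cdot) + \cdots$ has a solution — this follows by a contradiction/integration argument bounding the second moment of the tail against $\lambda^\ast$ or $\|(\wt\Sigma-I)_{(U)}\|_F$, which is the step that really uses that $\lambda^\ast$ or the Frobenius norm is large), and simultaneously that at this $T$ the fraction of *good* points removed is small enough (via conditions (iii)(c)/(iv)(b)) that the net effect is $\Delta(G,S') \le \Delta(G,S) - \eps/N$. Getting the constants to line up — the factor $9$ versus $3$, the additive $\delta_\ell$ slack, the $+3\eps^2/(T^2\ln(\cdots))$ versus $+\eps^2/(T^2\ln(\cdots))$ terms — is the delicate part, and it is precisely these slacks that are engineered into the algorithm statement to make the accounting close. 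I would handle the linear case in full detail first and then remark that the quadratic case is the same argument with Definition~\ref{def:good-set-mean}(iv) in place of (iii) and $\exp(-T/4)$ tails in place of Gaussian tails.
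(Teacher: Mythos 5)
Your proposal is correct and follows essentially the same route as the paper: a case analysis over the three return branches, with (a) the small-Frobenius-norm case handled by showing $\wt\mu$ is accurate on every $k$-sparse window and that hard thresholding costs only a constant factor, (b) the linear filter analyzed as the \cite{DKKLMS16} filter restricted to the $O(k^2)$ coordinates of $U'$ using conditions (iii)(a)--(c), and (c) the quadratic filter analyzed by noting $\E_{S}[p]=\|(\wt\Sigma-I)_{(U)}\|_F=\Omega(\eps\log(1/\eps))$ while the good points contribute $O(\eps)$ by (iv), so a threshold $T$ must exist (the paper's contradiction/integration argument via its Lemmas on $L$ and $E$) and the tail bound (iv)(b) guarantees more corrupted than uncorrupted points are removed. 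Your preliminary claim that $\|\wt\mu_{U'}-\mu_{U'}\|_2=O(\eps\sqrt{\log(1/\eps)})$ once the linear filter fails to fire is exactly the ingredient the paper uses to control $\E_{\normalpdf(\mu,I)}[p]$ and the centering at $\wt\mu$, so the plan matches the paper's proof in all essential steps.
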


\todo{
We note that our overall algorithm terminates after at most $2N$ iterations of Algorithm~1, in which
case it returns a candidate mean vector satisfying the first condition of Proposition~\ref{prop:rsm}.
Note that the initial $\eps$-corrupted set $S$ satisfies $\Delta(G, S) \le 2\eps$.
If ${S}^{(i)} \subset S$ is the multiset returned after the $i$-th iteration, then  
we have that $0 \leq \Delta({S}^{(i)}, G) \leq 2\eps-i (\eps/N)$.} 

In the rest of this section, we prove Proposition~\ref{prop:rsm}.

We start by showing the first part of Proposition~\ref{prop:rsm}.
Note that Algorithm~1 outputs a candidate mean vector only if 
$\|(\wt{\Sigma}-I)_{(U)}\|_F \leq O(\eps \log(1/\eps))$.
We start with the following lemma:

\begin{lemma} \label{lem:small-norm-1} 
If $\|(\wt{\Sigma}-I)_{(U)}\|_F \leq O(\eps \log(1/\eps))$, then for any $T \subseteq [d]$ with $|T| \leq k$,
we have $\|\wt{\mu}_T - \mu_T\|_2 \leq O(\eps \sqrt{\log(1/\eps))}$.
\end{lemma}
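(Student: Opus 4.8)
The plan is to reduce the statement to a one‑dimensional estimate along a carefully chosen $k$-sparse direction and then run the standard ``bounded empirical variance $\Rightarrow$ accurate empirical mean'' argument from robust statistics, using the hypothesis $\|(\wt\Sigma-I)_{(U)}\|_F \le O(\eps\log(1/\eps))$ exactly where it is needed. Fix $T\subseteq[d]$ with $|T|\le k$; if $\wt\mu_T=\mu_T$ there is nothing to prove, so set $v:=(\wt\mu_T-\mu_T)/\|\wt\mu_T-\mu_T\|_2$, a $k$-sparse unit vector with $\|\wt\mu_T-\mu_T\|_2 = v\cdot(\wt\mu-\mu)=:\delta$; it suffices to show $|\delta|=O(\eps\sqrt{\log(1/\eps)})$. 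The first step is to transfer the hypothesis from $U$ to $T\times T$: since $U$ consists of the $k$ largest‑magnitude diagonal and the $k^2-k$ largest‑magnitude off‑diagonal entries of $\wt\Sigma-I$, while $T\times T$ contributes at most $k$ diagonal and at most $k^2-k$ off‑diagonal entries, a term‑by‑term domination gives $\|(\wt\Sigma-I)_T\|_F \le \|(\wt\Sigma-I)_{(U)}\|_F = O(\eps\log(1/\eps))$. Hence, as $v$ is supported on $T$, $|v^\top\wt\Sigma v - 1| = |v^\top(\wt\Sigma-I)_T v| \le \|(\wt\Sigma-I)_T\|_2 \le \|(\wt\Sigma-I)_T\|_F = O(\eps\log(1/\eps))$; equivalently, $\E_{X\in_u S}[(v\cdot(X-\wt\mu))^2] = 1\pm O(\eps\log(1/\eps))$.

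Next I would set up the usual good/bad decomposition. Let $L:=G\setminus S$ and $E:=S\setminus G$; from $\Delta(G,S)\le 2\eps$ we get $|L|,|E| = O(\eps N)$, and (treating $|G|=|S|=N$ for notational simplicity, the general case only changing lower‑order terms) $\E_{X\in_u S}[\phi(X)] = \E_{X\in_u G}[\phi(X)] + \frac1N\big(\sum_{x\in E}\phi(x)-\sum_{x\in L}\phi(x)\big)$ for any $\phi$. Taking $\phi(x)=v\cdot(x-\mu)$ yields $\delta = v\cdot(\mu_G-\mu) + \frac1N\sum_{x\in E}v\cdot(x-\mu) - \frac1N\sum_{x\in L}v\cdot(x-\mu)$ with $\mu_G:=\E_{X\in_u G}[X]$. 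The term $v\cdot(\mu_G-\mu)$ is $O(\eps)$ by condition (iii)(a) of Definition~\ref{def:good-set-mean} (any $k$-sparse vector is $2k^2$-sparse). For the removed points, among all subsets of $G$ of size $|L|=O(\eps N)$ the quantity $\frac1N\sum|v\cdot(x-\mu)|$ is maximized by the largest‑magnitude points, so it is at most $\frac{|L|}{N}t_0 + \E_{X\in_u G}\big[|v\cdot(X-\mu)|\,\mathbf{1}\{|v\cdot(X-\mu)|>t_0\}\big]$; choosing $t_0\asymp\sqrt{\log(1/\eps)}$ and invoking the sub‑Gaussian‑type tail (iii)(c) makes each piece $O(\eps\sqrt{\log(1/\eps)})$, so the $L$‑contribution is $O(\eps\sqrt{\log(1/\eps)})$.

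The crux is the contribution of the added points $E$, where the naive estimate $\sqrt{|E|/N}\cdot\sqrt{\text{empirical }2\text{nd moment}} = O(\sqrt\eps)$ is off by a $\sqrt\eps$ factor. I would write $v\cdot(x-\mu)=v\cdot(x-\wt\mu)+\delta$, so $\frac1N\sum_{x\in E}v\cdot(x-\mu) = \frac1N\sum_{x\in E}v\cdot(x-\wt\mu) + \frac{|E|}{N}\delta$ with $|\frac{|E|}{N}\delta| = O(\eps)|\delta|$; by Cauchy--Schwarz the main piece is $\le \sqrt{|E|/N}\,\sqrt{\frac1N\sum_{x\in E}(v\cdot(x-\wt\mu))^2}$, and the point is that the bad points cannot carry much second moment. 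Indeed, from $\frac1N\sum_{x\in E}(v\cdot(x-\wt\mu))^2 = v^\top\wt\Sigma v - \E_{X\in_u G}[(v\cdot(X-\wt\mu))^2] + \frac1N\sum_{x\in L}(v\cdot(x-\wt\mu))^2$, the first term is $\le 1+O(\eps\log(1/\eps))$ by the first paragraph, the second is $\ge 1-O(\eps)-O(\eps)|\delta|+\delta^2$ by conditions (iii)(a),(b) applied to $v\cdot(X-\wt\mu)=v\cdot(X-\mu)-\delta$, and the third is $\le O(\eps\log(1/\eps))+O(\eps)\delta^2$ by the truncation‑plus‑tail argument above adapted to second moments; here the polynomial part $\eps^2/(u^2\ln(k\ln(|G|d/\tau)))$ of the tail would make the second‑moment integral diverge logarithmically, but condition (ii) forces $|v\cdot(x-\mu)| = O(\sqrt{k\log(|G|d/\tau)})$ on $G$, truncating the integral precisely where the $\ln$ in the denominator compensates, leaving it $O(\eps^2)$. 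Hence $\frac1N\sum_{x\in E}(v\cdot(x-\wt\mu))^2 \le O(\eps\log(1/\eps))+O(\eps)|\delta|$ (discarding $-\delta^2$), so the $E$‑contribution is $O(\eps\sqrt{\log(1/\eps)})+O(\eps)\sqrt{|\delta|}+O(\eps)|\delta|$. Summing the three contributions gives $|\delta| \le O(\eps\sqrt{\log(1/\eps)}) + O(\eps)\sqrt{|\delta|} + O(\eps)|\delta|$; absorbing $O(\eps)|\delta|$ into the left‑hand side and solving the resulting inequality (quadratic in $\sqrt{|\delta|}$) yields $|\delta| = O(\eps\sqrt{\log(1/\eps)})$, as claimed. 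I expect the main obstacle to be precisely this last part — getting the $E$‑contribution down to $O(\eps\sqrt{\log(1/\eps)})$ rather than $O(\sqrt\eps)$, which is exactly what the hypothesis $\|(\wt\Sigma-I)_{(U)}\|_F \le O(\eps\log(1/\eps))$ buys — while keeping the $\delta$-dependence on the right‑hand side subcritical so the inequality can be closed.
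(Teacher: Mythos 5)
Your proof is correct and follows essentially the same route as the paper: the first step (term-by-term domination transferring the Frobenius bound from $U$ to any $T \times T$ submatrix, since $U$ collects the largest-magnitude diagonal and off-diagonal entries) is identical to the paper's, and the remainder is a self-contained re-derivation of the ``small spectral norm implies accurate mean'' implication that the paper invokes as a black box from Section~5.1.2 of~\cite{DKKLMS16}. Your expanded version of that step — Cauchy--Schwarz on the added points with their second moment controlled by the Frobenius hypothesis, tail truncation at $t_0 \asymp \sqrt{\log(1/\eps)}$ for the removed good points, using condition (ii) so that the $\ln\left(k \ln(d|G|/\tau)\right)$ factor in (iii)(c) keeps the second-moment tail integral at $O(\eps^2)$, and then closing the resulting inequality $\delta \le O(\eps\sqrt{\log(1/\eps)}) + O(\eps)\sqrt{\delta} + O(\eps)\delta$ — matches the cited argument in structure and is carried out correctly.
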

\begin{proof}
Fix $T \subseteq [d]$ with $|T| \leq k$.
By definition, $\|(\wt{\Sigma}-I)_T\|_F$ is the Frobenius norm of the corresponding sub-matrix on $T \times T$. 
Note that this is the $\ell_2$-norm of a set of $k$ diagonal entries and 
$k^2-k$ off-diagonal entries of $\wt{\Sigma}-I$. By construction, $U$ is the set that maximizes this norm, 
and therefore 
\[\|(\wt{\Sigma}-I)_T\|_2 \leq  \|(\wt{\Sigma}-I)_T\|_F \leq \|(\wt{\Sigma}-I)_{(U)}\|_F 
\leq O(\eps \log(1/\eps)) \;.\]
Given this bound, we leverage a proof technique from~\cite{DKKLMS16} 
showing that a bound on the spectral norm of the covariance implies a $\ell_2$-error 
bound on the mean. \todo{
This implication is not explicitly stated in~\cite{DKKLMS16}, but follows
directly from the arguments in Section~5.1.2 of that work. In particular, the analysis
of the ``small spectral norm'' case in that section shows that
$\|\wt{\mu}_T - \mu_T\|_2 \leq O\left(\sqrt{\eps} \|(\wt{\Sigma}-I)_T\|_2^{1/2} +\eps\sqrt{\log(1/\eps)}\right)$,
from which the desired claim follows.}
This completes the proof of Lemma~\ref{lem:small-norm-1}.
\end{proof}
 
Given Lemma~\ref{lem:small-norm-1}, the correctness of the sparse mean approximation 
output in Step~\ref{step:return-mean} of Algorithm~\ref{alg:rsm} follows from the following corollary:
 
\begin{corollary} \label{cor:small-norm-sparse}
Let $\wh{\mu} = h_k (\wt{\mu})$. 
If $\|(\wt{\Sigma}-I)_{(U)}\|_F \leq O(\eps \log(1/\eps))$, 
then  $\|\wh{\mu}-\mu\|_2 \leq O(\eps \sqrt{\log(1/\eps)})$. 
\end{corollary}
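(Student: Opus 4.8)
The plan is to exploit the $k$-sparsity of the true mean $\mu$ together with Lemma~\ref{lem:small-norm-1}. Let $T^\ast = \mathrm{supp}(\mu)$, so that $|T^\ast| \le k$, and let $T = \mathrm{supp}(h_k(\wt{\mu}))$ be the set of (at most) $k$ coordinates of $\wt{\mu}$ with largest magnitude, so that $\wh{\mu} = \wt{\mu}_T$. First I would decompose
\[ \|\wh{\mu} - \mu\|_2 \le \|\wt{\mu}_T - \mu_T\|_2 + \|\mu_T - \mu\|_2 = \|\wt{\mu}_T - \mu_T\|_2 + \|\mu_{T^\ast \setminus T}\|_2 \;. \]
The first term is $O(\eps\sqrt{\log(1/\eps)})$ by a direct application of Lemma~\ref{lem:small-norm-1}, since $|T| \le k$.

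The heart of the argument is bounding the second term using the greedy property of the hard-thresholding operator $h_k$. Since $|T| \ge |T^\ast|$ (both are at most $k$, and $h_k$ selects exactly $k$ coordinates when $d \ge k$), we have $|T \setminus T^\ast| \ge |T^\ast \setminus T|$, so there is an injection $\phi : T^\ast \setminus T \to T \setminus T^\ast$. For each $i \in T^\ast \setminus T$ we have $i \notin T$ and $\phi(i) \in T$, hence $|\wt{\mu}_{\phi(i)}| \ge |\wt{\mu}_i|$ by the selection rule of $h_k$; moreover $\mu_{\phi(i)} = 0$ since $\phi(i) \notin T^\ast$. Therefore $|\mu_i| \le |\mu_i - \wt{\mu}_i| + |\wt{\mu}_i| \le |\mu_i - \wt{\mu}_i| + |\wt{\mu}_{\phi(i)} - \mu_{\phi(i)}|$, and squaring, summing over $i \in T^\ast \setminus T$, and using that $\phi$ is injective gives
\[ \|\mu_{T^\ast \setminus T}\|_2^2 \le 2\,\|(\wt{\mu} - \mu)_{T^\ast \setminus T}\|_2^2 + 2\,\|(\wt{\mu} - \mu)_{T \setminus T^\ast}\|_2^2 \;. \]
Both $T^\ast \setminus T$ and $T \setminus T^\ast$ have size at most $k$, so Lemma~\ref{lem:small-norm-1} bounds each term on the right by $O(\eps^2 \log(1/\eps))$, whence $\|\mu_{T^\ast \setminus T}\|_2 = O(\eps\sqrt{\log(1/\eps)})$. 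Plugging this back into the decomposition yields $\|\wh{\mu} - \mu\|_2 = O(\eps\sqrt{\log(1/\eps)})$, as claimed.

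The only subtlety, and the point to be careful about, is that Lemma~\ref{lem:small-norm-1} controls $\|(\wt{\mu}-\mu)_R\|_2$ only for sets $R$ with $|R| \le k$; the decomposition above is chosen precisely so that the lemma is ever invoked only on $T$, on $T^\ast \setminus T$, and on $T \setminus T^\ast$, each of size at most $k$. (Working directly with $T \cup T^\ast$, of size up to $2k$, would not be permissible.) Beyond this bookkeeping, everything is the elementary arithmetic of hard thresholding, so I do not expect any real obstacle.
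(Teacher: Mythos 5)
Your proof is correct and follows essentially the same route as the paper: both bound the error via Lemma~\ref{lem:small-norm-1} applied to support-determined sets of size at most $k$, and both use the fact that the coordinates kept by $h_k$ dominate in magnitude those discarded in order to control $\|\mu_{T^\ast \setminus T}\|_2$. The only cosmetic difference is that you compare magnitudes through an explicit injection from $T^\ast \setminus T$ into $T \setminus T^\ast$, whereas the paper uses a pigeonhole/averaging step together with the cardinality bound $|N_{\mu \mid \wh{\mu}}| \le |N_{\wh{\mu} \mid \mu}|$; both hinge on the same greedy property of hard thresholding.
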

\begin{proof}
For vectors $x, y$,  let $N_x$ denote the set of coordinates on which $x$ is non-zero 
and $N_{x \mid y}$ denote the set of coordinates on which $x$ is non-zero and $y$ is zero. 
Setting $T = N_{\mu}$ and $T = N_{\wh{\mu} \mid \mu}$ in Lemma~\ref{lem:small-norm-1}, 
we get that $\|\wt{\mu}_{N_{\mu}} - \mu\|_2 \leq O(\eps \sqrt{\log(1/\eps)})$ 
and  $\|\wt{\mu}_{N_{\wh{\mu} |  \mu}}\|_2 \leq O(\eps \sqrt{\log(1/\eps)})$.

If $\wt{\mu}$ has $k$ or fewer non-zero coordinates, then $\wh{\mu}=\wt{\mu}$ 
and $\|\wh{\mu}-\mu\|_2 = \|\wt{\mu}_{N_{\mu} \cup N_{\wh{\mu} \mid  \mu}} - \mu\|_2 \leq O(\eps \sqrt{\log(1/\eps)})$ 
and we are done. Otherwise, $\wh{\mu}$ has exactly $k$ non-zero coordinates and so 
$|N_{\mu \mid \wh{\mu}}| \leq |N_{\wh{\mu} \mid \mu}|$.
Since the nonzero coordinates of $\wh{\mu}$ are the $k$ largest magnitude coordinates of $\wt{\mu}$, 
for any $i \in N_{\mu \mid \wh{\mu}}$ and $j \in N_{\wh{\mu} \mid  \mu}$,  we have that 
$|\wt{\mu}_i| \leq |\wt{\mu}_j|$. 
Since $\|\wt{\mu}_{N_{\wh{\mu} \mid \mu}}\|_2 \leq O(\eps \sqrt{\log(1/\eps)})$, 
at least one coordinate $j \in N_{\wh{\mu} \mid \mu}$ must have 
$\wt{\mu}_j^2 \leq O(\eps^2\log(1/\eps))/|N_{\wh{\mu} \mid  \mu}|$.
Therefore, for any $i \in N_{\mu \mid \wh{\mu}}$, we have that $\wt{\mu}_i^2 \leq O(\eps^2\log(1/\eps))/|N_{\wh{\mu} \mid  \mu}|$.

Thus, we have 
\[\|\wt{\mu}_{N_{\mu \mid \wh{\mu}}}\|_2^2 = 
\sum_{i \in N_{\mu \mid \wh{\mu}}} \wt{\mu}_i^2 \leq 
\frac{|N_{\mu \mid \wh{\mu}}| \cdot O(\eps^2 \log(1/\eps))}{|N_{\wh{\mu} \mid \mu}|}  
\leq O(\eps^2 \log(1/\eps)) \;,\]
where the second inequality used that $|N_{\mu \mid \wh{\mu}}| \leq |N_{\wh{\mu} \mid \mu}|$.

Since $\|\wt{\mu}_{N_{\mu}} - \mu\|_2 \leq O(\eps \sqrt{\log(1/\eps)})$, 
by the triangle inequality we have that 
$\|\mu_{N_{\mu \mid \wh{\mu}}}\|_2 \leq O(\eps \sqrt{\log(1/\eps)})$. 
Finally, we have that 
\[\|\mu-\wh{\mu}\|_2^2 = \|\mu_{N_{\mu} \cap N_{\wh{\mu}}} - \wh{\mu}_{N_{\mu} \cap N_{\wh{\mu}}}\|_2^2 
+ \|\mu_{N_{\mu \mid \wh{\mu}}}\|_2^2 + \|\wt{\mu}_{N_{\wh{\mu} \mid \mu}}\|_2^2 
\leq O(\eps^2 \log(1/\eps)) \;,\]
concluding the proof.
\end{proof}
 
\noindent Lemma~\ref{lem:small-norm-1} and Corollary~\ref{cor:small-norm-sparse} 
give the first part of Proposition~\ref{prop:rsm}. 

\medskip

We now analyze the complementary case that $\|(\wt{\Sigma}-I)_{(U)}\|_F  = \Omega(\eps \log(1/\eps))$.
In this case, we apply two different filters, a linear filter (Steps~\ref{step:linear-filter-start}-\ref{step:linear-filter-end}), 
and a quadratic filter (Steps~\ref{step:quad-filter-start}-\ref{step:quad-filter-end}). 
To prove the second part of Proposition~\ref{prop:rsm},
we will show that at least one of these two filters: 
(i) removes at least one point, and (ii) it removes more corrupted than uncorrupted points.

\todo{
The analysis in the case of the linear filter follows by a reduction to the linear filter
in~\cite{DKKLMS16} for the non-sparse setting (see Proposition~5.5 in Section 5.1 of that work). 
More specifically, the linear filter in Steps~\ref{step:linear-filter-start}-\ref{step:linear-filter-end}
is essentially identical to the linear filter of \cite{DKKLMS16} restricted to the $2k^2 \times 2k^2$
matrix $\wt{\Sigma}_{U'}$. We note that Definition~\ref{def:good-set-mean} 
implies that every restriction to $2k^2$ coordinates satisfies the properties of the good set 
in the sense of~\cite{DKKLMS16} (Definition 5.2(i)-(ii) of that work). This implies that the analysis
of the linear filter from~\cite{DKKLMS16} holds in our case, establishing the desired properties.
Since the linear filter removes more corrupted points than uncorrupted points, it 
will remove at most a $2\eps$ fraction of the points over all the iterations. 
}

If the condition of the linear filter does not apply, i.e., if $\| (\wt{\Sigma}-I)_{U'} \|_2  \leq O(\eps \log(1/\eps))$,
the aforementioned analysis in~\cite{DKKLMS16} implies
 $\|\wt{\mu}_{U'} - \mu_{U'}\|_2 \leq O(\eps\sqrt{\log(1/\eps)})$.
In this case, we show that the second filter behaves appropriately.

Let $p(x)$ be the polynomial considered in the quadratic filter.
We start with the following technical lemma analyzing 
the expectation and variance of $p(x)$ under various distributions:

\begin{lemma} \label{lem:exp-var-P} 
The following hold true:
\begin{itemize} 
\item[(i)] $\E_{\normalpdf(\wt{\mu},I)}[p(Y)]=0$ and $\Var_{\normalpdf(\wt{\mu},I)}[p(Y)]=1$.
\item[(ii)] $\E_{S}[p(X)]=\|(\wt{\Sigma}-I)_{(U)}\|_F $.
\item[(iii)]  $|\E_{\normalpdf(\mu,I)}[p(Z)]| \leq O(\eps^2 \log(1/\eps))$ 
and $\Var_{\normalpdf(\mu,I)}[p(Z)] = 1 + O(\eps^2 \log(1/\eps))$.
\end{itemize}
\end{lemma}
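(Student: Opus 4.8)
Recall that $p(x)=\left((x-\wt{\mu})^T M (x-\wt{\mu})-\Tr(M)\right)/\|M\|_F$, where $M \eqdef (\wt{\Sigma}-I)_{(U)}$ is a symmetric matrix supported on at most $k^2$ entries (so $p$ is a homogeneous degree-$2$ polynomial in the shifted variable $x-\wt\mu$, with at most $k^2$ monomials). The plan is to compute each of the three items using only standard Gaussian moment identities for quadratic forms, plus the goodness conditions on $G$ and the fact that $\Delta(G,S)\le 2\eps$.

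**Item (i): moments under $\normalpdf(\wt\mu,I)$.** Write $Y-\wt\mu = Z'$ where $Z'\sim\normalpdf(0,I)$. Then $\E[Z'^T M Z'] = \Tr(M)$, so the numerator has mean $0$ and hence $\E_{\normalpdf(\wt\mu,I)}[p(Y)]=0$. For the variance, I would use the standard identity $\Var(Z'^T M Z') = 2\,\Tr(M^2) = 2\|M\|_F^2$ for symmetric $M$ (it suffices to diagonalize $M$ and use $\Var(\sum_i \lambda_i g_i^2)=\sum_i 2\lambda_i^2$). Dividing by $\|M\|_F^2$ gives $\Var_{\normalpdf(\wt\mu,I)}[p(Y)] = 2$. \emph{This is the step I expect to be most delicate}: I should double-check the normalization, since the lemma claims variance $1$, not $2$; possibly the intended normalization is $p(x)=(\cdots)/(\sqrt{2}\|M\|_F)$ or $M$ is being treated as having a symmetrized/halved off-diagonal convention. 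I would reconcile this with the definition in Step~\ref{step:quad-filter-start} before writing the final constant, and then carry the correct constant through (the subsequent filter analysis only needs variance $\Theta(1)$, so this does not affect the structure).

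**Item (ii): expectation under the empirical distribution of $S$.** By linearity, $\E_{X\in_u S}[p(X)] = \big(\E_{X\in_u S}[(X-\wt\mu)^T M (X-\wt\mu)] - \Tr(M)\big)/\|M\|_F$. Expanding the quadratic form entrywise, $\E_{X\in_u S}[(X-\wt\mu)_i (X-\wt\mu)_j] = \wt\Sigma_{ij}$ by the definition of the sample covariance (here I use that $\wt\mu$ is exactly the sample mean of $S$). Hence $\E_{X\in_u S}[(X-\wt\mu)^T M (X-\wt\mu)] = \sum_{i,j} M_{ij}\wt\Sigma_{ij} = \langle M, \wt\Sigma\rangle = \langle M, \wt\Sigma - I\rangle + \Tr(M)$. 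Since $M = (\wt\Sigma-I)_{(U)}$ and $M$ is supported exactly on $U$, $\langle M, \wt\Sigma-I\rangle = \sum_{(i,j)\in U} (\wt\Sigma-I)_{ij}^2 = \|(\wt\Sigma-I)_{(U)}\|_F^2 = \|M\|_F^2$. Subtracting $\Tr(M)$ and dividing by $\|M\|_F$ yields $\E_{S}[p(X)] = \|M\|_F = \|(\wt\Sigma-I)_{(U)}\|_F$, as claimed.

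**Item (iii): moments under $\normalpdf(\mu,I)$.** Write $Z = Y - \mu \sim \normalpdf(0,I)$, so $X - \wt\mu = Z + (\mu - \wt\mu) =: Z + b$ where $b = \mu-\wt\mu$. Crucially, $p$ depends on $x$ only through its coordinates in $U'$ (the support of $M$), and on that set we are in the regime where the linear-filter analysis of \cite{DKKLMS16} gives $\|\wt\mu_{U'} - \mu_{U'}\|_2 \le O(\eps\sqrt{\log(1/\eps)})$; so the relevant part of $b$ has $\ell_2$-norm $O(\eps\sqrt{\log(1/\eps)})$, i.e. $b^T M b \le \|M\|_2 \|b\|_2^2 \le \|M\|_F \cdot O(\eps^2\log(1/\eps))$. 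Expanding, $(Z+b)^T M (Z+b) - \Tr(M) = (Z^T M Z - \Tr(M)) + 2 b^T M Z + b^T M b$. Taking expectations: the first term vanishes, $\E[2b^T M Z]=0$, and the last contributes $b^T M b / \|M\|_F \le O(\eps^2\log(1/\eps))$, giving $|\E_{\normalpdf(\mu,I)}[p(Z)]| \le O(\eps^2\log(1/\eps))$. For the variance, $\Var[p(Z)] = \Var\!\big[(Z^T M Z - \Tr M + 2b^T M Z)/\|M\|_F\big]$; expand into $\Var(Z^T M Z)/\|M\|_F^2 = \Theta(1)$ (item (i)), plus $4\|M b\|_2^2/\|M\|_F^2 \le 4\|M\|_2^2\|b\|_2^2/\|M\|_F^2 \le O(\eps^2\log(1/\eps))$ from the cross term, plus a covariance term between $Z^T M Z$ and $b^T M Z$ which is a third Gaussian moment bounded by $\sqrt{\Var(Z^TMZ)}\cdot\sqrt{4\|Mb\|_2^2}/\|M\|_F^2 = O(\eps\sqrt{\log(1/\eps)})$ — wait, that is too large, so I would instead note this covariance equals $\E[(Z^TMZ-\Tr M)(2b^TMZ)]=0$ by oddness (it is an odd-degree Gaussian moment), killing it exactly. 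Hence $\Var_{\normalpdf(\mu,I)}[p(Z)] = \Var_{\normalpdf(\wt\mu,I)}[p(Y)] + O(\eps^2\log(1/\eps)) = 1 + O(\eps^2\log(1/\eps))$ (with the constant inherited from (i)). I would present (iii) carefully in this order — expand, kill odd moments exactly, bound the even correction by $\|M\|_2\|b\|_2^2 \le \|M\|_F\|b\|_2^2$ — since that oddness observation is what makes the error genuinely second-order in $\eps$ rather than first-order.
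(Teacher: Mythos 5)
Your proof is correct and follows essentially the same route as the paper's: trace identities for (i)--(ii), diagonalization of the quadratic form, and the shift expansion with $b=\mu-\wt{\mu}$ (restricted to $U'$, using $\|\wt{\mu}_{U'}-\mu_{U'}\|_2 \leq O(\eps\sqrt{\log(1/\eps)})$ from the linear filter not applying) for (iii). On the point you flagged: you are right that the standard identity gives $\Var(Z'^T M Z')=2\|M\|_F^2$ for symmetric $M$, so with the algorithm's normalization the exact variance in (i) is $2$ rather than $1$; the paper's own computation implicitly treats $\Var_{\normalpdf(0,I)}[Y_i'^2]$ as $1$ instead of $2$, and, as you note, only $\Var=\Theta(1)$ is used downstream, so this constant does not affect anything. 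Your treatment of the cross term in (iii) --- observing that $\E[(Z'^T M Z'-\Tr M)(b^T M Z')]$ vanishes exactly as an odd moment of a centered Gaussian --- is actually cleaner than the paper's, which absorbs this covariance into the bound $1+8\,\E[(\nu^T\Lambda(Z'-\nu))^2]$ without spelling out why it is second order in $\eps$; your version also yields the two-sided estimate on the variance rather than only an upper bound.
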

\begin{proof}
Let $A=\frac{(\wt{\Sigma}-I)}{\|(\wt{\Sigma}-I)\|_F}$ and 
$p(x):=(x-\wt{\mu})^T A_{(U)} (x-\wt{\mu}) - \Tr[A_{(U)}]$. 
We have
\begin{align*}
\E_{\normalpdf(\wt{\mu},I)}[(Y-\wt{\mu})^T A_{(U)} (Y-\wt{\mu})]&=\Tr[A_{(U)} \E_{\normalpdf(\wt{\mu},I)}[(Y-\wt{\mu})(Y-\wt{\mu})^T]] =\Tr[A_{(U)}I] =\Tr[A_{(U)}] \;.
\end{align*}
Therefore, $\E_{\normalpdf(\wt{\mu},I)}[p(Y)] = \Tr[A_{(U)}] -\Tr[A_{(U)}]=0$. 
Similarly,
\begin{align*}
\E_{S}[(X-\wt{\mu})^T A_{(U)} (X-\wt{\mu})] =\E_{S}[\Tr[A_{(U)}(X-\wt{\mu})(X-\wt{\mu})^T]] 
=\Tr[A_{(U)} \E_{S}[(X-\wt{\mu})(X-\wt{\mu})^T]]\ =\Tr[A_{(U)}\wt{\Sigma}] \;,
\end{align*}
and so
\begin{align*}
\E_{S}[p(X)]&=\Tr[A_{(U)}(\wt{\Sigma}-I)] =\Tr[A_{(U)} A] \|(\wt{\Sigma}-I)_{(U)}\|_F 
=\|A_{U}\|_F  \|(\wt{\Sigma}-I)_{(U)}\|_F = \|(\wt{\Sigma}-I)_{(U)}\|_F \;.
\end{align*}
We have thus shown (ii) and the first part of (i). 

We now proceed to show the first part of (iii). 
Note that 
\begin{align*}
\E_{\normalpdf(\mu,I)}[(Z-\wt{\mu})(Z-\wt{\mu})^T]&=\E_{\normalpdf(\mu,I)}[(Z-\mu)(Z-\mu)^T]+\E_{\normalpdf(\mu,I)}[(\wt{\mu}-\mu)(Z-\wt{\mu})^T]+\E_{\normalpdf(\mu,I)}[(Z-\mu)(\wt{\mu}-\mu)^T]\\
&=I+(\wt{\mu}-\mu)(\wt{\mu}-\mu)^T+0.
 \end{align*} 
Thus, we can write
\[\E_{\normalpdf(\mu,I)}[p(Z)]=\Tr[A_{(U)} (\E_{\normalpdf(\mu,I)}[(Z-\wt{\mu})(Z-\wt{\mu})^T]-I)] = 
(\wt{\mu}-\mu)^T A_{(U)} (\wt{\mu}-\mu) \;,\] 
and so 
\[|\E_{\normalpdf(\mu,I)}[p(Z)]| \leq \|\wt{\mu}_{\todo{U'}}-\mu_{\todo{U'}}\|_2^2 \|A_{(U)}\|_2 \leq \|\wt{\mu}_{\todo{U'}}-\mu_{\todo{U'}}\|_2^2 
\leq O(\eps^2 \log(1/\eps)) \;.\]
This proves all the statements about expectations. 

We now analyze the variance of $p(x)$ for $Y$ and $Z$. 
Since $A_{(U)}$ is symmetric, we can write $A_{(U)} = O^T \Lambda O$ for an orthogonal matrix $O$ 
and a diagonal matrix $\Lambda$. 
Note that $Y'=O (Y-\wt{\mu})$ is distributed as $\normalpdf(0,I)$. 
Under these substitutions, $p(Y) = \sum_i \Lambda_{ii} Y'^2_i - \sum \Lambda_{ii}$, the variance of which is the same as the variance of $p'(Y) =  \sum_i \Lambda_{ii} Y'^2_i$
and so \[\Var_{\normalpdf(\tilde{\mu}, I)}[p'(Y)]=\sum_i \Lambda_{ii}^2 \Var_{\normalpdf(0, I)}[Y'^2_i]= \|\Lambda\|_F^2 = \|A_{(U)}\|_F^2=1.\] 
 
Similarly, to estimate the variance of $p(Z)$ we see we just need to estimate the variance of $p'(Z) = (Z-\wt{\mu})^T O^T \Lambda O(Z-\wt{\mu}) = (Z- \mu + \mu -\wt{\mu})^T O^T \Lambda O(Z- \mu + \mu - \wt{\mu})$ where $(Z - \mu) \sim N(0, I)$. Let $Z' = O^T(Z - \mu)$ and $\nu = O(\mu - \wt{\mu})$.  

This gives us 
\begin{align*}
\Var_{\normalpdf(\mu,I)}[p'(Z)] &= \E_{\normalpdf(0,I)}[ (Z'\Lambda Z' - \E_{\normalpdf(0,I)}[Z'\Lambda Z' ] + 2 \nu^T \Lambda (Z' - \nu)^2] \\
&= \E_{\normalpdf(0,I)}[ (Z'\Lambda Z' - \E_{\normalpdf(0,I)}[Z'\Lambda Z' ])^2 ] + 4 \E_{\normalpdf(0,I)}[(\nu^T \Lambda (Z' - \nu))^2] \\
&+  4 \E_{\normalpdf(0,I)}[(\nu^T \Lambda (Z' - \nu))(Z'\Lambda Z' - E[Z'\Lambda Z' ])]\\
&\leq 1 + 8 \E_{\normalpdf(0,I)}[(\nu^T \Lambda (Z' - \nu))^2]\\
&\leq  1 + 8 \nu^T \Lambda^2 \nu + 8 (\nu^T \Lambda \nu)^2 \\
&\leq 1 + 8 \| \wt{\mu}_{U'} - \mu_{U'} \|_2^2 \| \Lambda^2 \|_2 + 8 \| \wt{\mu}_{U'} - \mu_{U'} \|_2^4 \| \Lambda \|^2_2 \\
&\leq 1 + O(\epsilon^2 \log(1/\epsilon))
\end{align*}

%
%
%
This completes the proof of Lemma~\ref{lem:exp-var-P}.
\end{proof}

\todo{
Suppose that we find a threshold $T>0$ such that Step~\ref{step:quad-filter}
of the algorithm holds, i.e., the quadratic filter applies. Then we can show that 
Step~\ref{step:quad-filter-end} removes more bad points than good points.
This follows from standard arguments, by combining Definition~\ref{def:good-set-mean}(iv)(b) with our upper bound
for $\E_{\normalpdf(\mu,I)}[p(Z)]$ from Lemma~\ref{lem:exp-var-P}.
Let $S = G \cup E \setminus L$. 
By Definition~\ref{def:good-set-mean}(iv)(b), for the good set $G$, we have that for $T > 4$, 
 $\pr_{X \in_u G}\left[\left| p(X) - \E_{\normalpdf(\mu,I)}[p(Z)] \right| \geq T \right] \leq  3 \exp(-T/4) + \eps^2/(T \ln^2 T)$.
Lemma~\ref{lem:exp-var-P}(iii) implies that $|\E_{\normalpdf(\mu, I)}[p(Z)]| \leq O(\eps^2 \log(1/\eps))$.  Therefore, shifting $T$ by $\epsilon^2 \log(1/\eps)$ we obtain the following corollary:
\begin{corollary} \label{cor:p-G-sensible} We have that:
\begin{enumerate}
\item[(i)] $|\E_{X \in_u G}[p(X)]| \leq O(\eps)$ and, 
\item[(ii)] For $T \geq 6$, $\pr_{X \in_u G}[|p(X)| \geq T] \leq  (3+O(\eps)) \exp(-T/4) + (1+O(\eps))\left(\eps^2/(T \ln^2 T) \right)$. 
\end{enumerate}
\end{corollary}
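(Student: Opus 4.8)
The plan is to obtain Corollary~\ref{cor:p-G-sensible} by combining the good-set concentration in Definition~\ref{def:good-set-mean}(iv), applied to the polynomial $p$ used by the quadratic filter, with the moment estimates of Lemma~\ref{lem:exp-var-P}(iii), via a ``re-center and absorb'' argument. Write $A = (\wt{\Sigma}-I)/\|(\wt{\Sigma}-I)\|_F$, so that $p(x) = (x-\wt{\mu})^T A_{(U)}(x-\wt{\mu}) - \Tr[A_{(U)}]$; its homogeneous quadratic part has at most $k^2$ nonzero terms and is supported on the $O(k^2)$ coordinates of $U'$, and Lemma~\ref{lem:exp-var-P}(iii) supplies $\Var_{\normalpdf(\mu,I)}[p(Z)] = 1 + O(\eps^2\log(1/\eps)) = \Theta(1)$ together with $c := |\E_{\normalpdf(\mu,I)}[p(Z)]| \le O(\eps^2\log(1/\eps))$. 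The one point requiring a sentence of care is that $p$, as a polynomial in $x$, is not homogeneous: expanding it around $\mu$ produces an extra linear and constant term whose coefficients are controlled by $\wt{\mu}-\mu$. In the case under consideration the linear filter did not trigger, so $\|\wt{\mu}_{U'}-\mu_{U'}\|_2 = O(\eps\sqrt{\log(1/\eps)})$, and hence these extra contributions are of order $O(\eps^2\log(1/\eps))$; thus $p$ is covered by Definition~\ref{def:good-set-mean}(iv) up to exactly the slack already tracked in Lemma~\ref{lem:exp-var-P}(iii).

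For part (ii), start from Definition~\ref{def:good-set-mean}(iv)(b), which for every $T \ge 5$ gives
\[ \pr_{X\in_u G}\bigl[\,|p(X) - \E_{\normalpdf(\mu,I)}[p(Z)]| \ge T\,\bigr] \le 3\exp(-T/4) + \eps^2/(T\ln^2 T). \]
Since $|p(X)| \ge T$ forces $|p(X) - \E_{\normalpdf(\mu,I)}[p(Z)]| \ge T - c$, and since $c \le O(\eps^2\log(1/\eps)) \le 1$ so that $T-c \ge 5$ whenever $T \ge 6$, we get $\pr_{X\in_u G}[|p(X)| \ge T] \le 3\exp(-(T-c)/4) + \eps^2/\bigl((T-c)\ln^2(T-c)\bigr)$ for all $T \ge 6$. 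Now expand: $\exp(-(T-c)/4) = e^{c/4}\exp(-T/4) = (1+O(c))\exp(-T/4)$, and, using $T-c\ge 5$ together with $\ln T - \ln(T-c) \le c/(T-c) = O(c)$, one has $\frac{1}{(T-c)\ln^2(T-c)} = \frac{1}{T\ln^2 T}\cdot\frac{T}{T-c}\cdot\bigl(\frac{\ln T}{\ln(T-c)}\bigr)^2 \le (1+O(c))\frac{1}{T\ln^2 T}$. Substituting $c = O(\eps^2\log(1/\eps)) = O(\eps)$ and collecting the two error terms yields $(3+O(\eps))\exp(-T/4) + (1+O(\eps))\eps^2/(T\ln^2 T)$, which is the claim.

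For part (i), decompose $\E_{X\in_u G}[p(X)] = \bigl(\E_{X\in_u G}[p(X)] - \E_{\normalpdf(\mu,I)}[p(Z)]\bigr) + \E_{\normalpdf(\mu,I)}[p(Z)]$. The second summand is at most $c \le O(\eps^2\log(1/\eps)) = O(\eps)$ by Lemma~\ref{lem:exp-var-P}(iii). For the first, Definition~\ref{def:good-set-mean}(iv)(a) --- in its scale-invariant form $|\E_{X\in_u G}[p(X)] - \E_{\normalpdf(\mu,I)}[p(Z)]| \le O(\eps\sqrt{\Var_{\normalpdf(\mu,I)}[p(Z)]})$ --- combined with $\Var_{\normalpdf(\mu,I)}[p(Z)] = \Theta(1)$ from Lemma~\ref{lem:exp-var-P}(iii) gives $O(\eps)$; the triangle inequality then completes the bound. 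The only genuinely delicate step is the first one flagged above --- checking that $p$ falls within the scope of Definition~\ref{def:good-set-mean}(iv) despite failing to be homogeneous in $x$ --- and it is dispatched precisely by the $O(\eps\sqrt{\log(1/\eps)})$ bound on $\|\wt{\mu}_{U'}-\mu_{U'}\|_2$ that is available here; everything else is routine $\eps$-bookkeeping in the threshold shift.
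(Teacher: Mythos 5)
Your proof is correct and follows essentially the same route as the paper: apply Definition~\ref{def:good-set-mean}(iv)(a)--(b) to $p$, invoke Lemma~\ref{lem:exp-var-P}(iii) for $|\E_{\normalpdf(\mu,I)}[p(Z)]| \leq O(\eps^2\log(1/\eps))$ and $\Var_{\normalpdf(\mu,I)}[p(Z)] = \Theta(1)$, and absorb the resulting shift of the threshold $T$ into the $O(\eps)$ slack in the constants. Your explicit bookkeeping of the shift and your remark on the homogeneity/normalization of $p$ are in fact slightly more careful than the paper's own (terse) argument, which simply says ``shifting $T$ by $\eps^2\log(1/\eps)$.''
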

Condition (ii) implies that the fraction of points in $G$ that violate the quadratic filter condition 
is less than $1/2$ the fraction of points in $S$ that violate the same condition. Therefore, 
the quadratic filter removes more corrupted points than uncorrupted ones.

}

It remains to show that if Algorithm~1 does not terminate in Step~\ref{step:return-mean}
and the linear filter does not apply, then the quadratic filter necessarily applies. 
To establish this, we need a couple more technical lemmas.
We first show that the expectation of $p(x)$ over 
the set of good samples that are removed is small:
\begin{lemma} \label{lem:pL-small}
We have that $|L| \cdot |\E_{X \in_u L}[p(X)]| \leq |S|  \cdot O(\eps \log(1/\eps))$. 
\end{lemma}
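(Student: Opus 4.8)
The plan is to exploit that $L$ is a small sub-multiset of the good set $G$ and to combine this with the pointwise upper tail bound for $|p|$ on $G$ recorded in Corollary~\ref{cor:p-G-sensible}(ii) (where $p$ is the quadratic-filter polynomial of Step~\ref{step:quad-filter-start}). First I would fix notation: write $S = (G \setminus L) \cup E$, where $L = G \setminus S$ is exactly the sub-multiset of good points that have been removed and $E = S \setminus G$ is the set of surviving corruptions. From $\Delta(G,S) \le 2\eps$ we get $|L| + |E| \le 2\eps|G|$, hence $|L| \le 2\eps|G|$ and $|G| \le |S|/(1-2\eps) = O(|S|)$. Since $|L|\cdot|\E_{X\in_u L}[p(X)]| = \bigl|\sum_{x\in L} p(x)\bigr| \le \sum_{x\in L}|p(x)|$, it suffices to prove the total-mass bound $\sum_{x \in L}|p(x)| \le O(\eps\log(1/\eps))\,|G|$. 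Throughout we may assume $\eps$ is below a small absolute constant, so that the truncation level $T_0 := 4\ln(1/\eps)$ satisfies $T_0 > 6$ and Corollary~\ref{cor:p-G-sensible}(ii) is applicable on $[T_0,\infty)$.

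To bound the mass I would use the layer-cake identity $\sum_{x\in L}|p(x)| = \int_0^\infty |\{x \in L : |p(x)| \ge t\}|\,dt$ and, using $L \subseteq G$, bound the integrand by $\min\bigl(|L|,\ |\{x\in G : |p(x)|\ge t\}|\bigr)$. For $t < T_0$ I would use the trivial bound $|L| \le 2\eps|G|$, and for $t \ge T_0$ I would use $|\{x\in G : |p(x)|\ge t\}| = |G|\cdot\pr_{X\in_u G}[|p(X)|\ge t]$ together with Corollary~\ref{cor:p-G-sensible}(ii). This gives
\[
\sum_{x\in L}|p(x)| \;\le\; 2\eps\,T_0\,|G| \;+\; |G|\int_{T_0}^\infty\Bigl[(3+O(\eps))\,e^{-t/4} + (1+O(\eps))\,\tfrac{\eps^2}{t\ln^2 t}\Bigr]dt \,.
\]
The first term is $8\eps\ln(1/\eps)\,|G| = O(\eps\log(1/\eps))\,|G|$. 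For the integral, $\int_{T_0}^\infty e^{-t/4}\,dt = 4e^{-T_0/4} = 4\eps$ and $\int_{T_0}^\infty \frac{dt}{t\ln^2 t} = \frac{1}{\ln T_0} = O(1)$, so the integral contributes $O(\eps)\,|G| + O(\eps^2)\,|G| = O(\eps)\,|G|$. Combining the two estimates yields $\sum_{x\in L}|p(x)| \le O(\eps\log(1/\eps))\,|G| \le O(\eps\log(1/\eps))\,|S|$, and the lemma follows from the first-paragraph reduction.

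The only point that requires care is the choice of the truncation level $T_0$: integrating the tail bound of Corollary~\ref{cor:p-G-sensible}(ii) over all of $[0,\infty)$ would produce the term $\int_0^\infty e^{-t/4}\,dt = \Theta(1)$, which is far too large. Replacing this tail by the crude bound $|L|/|G| = \Theta(\eps)$ for $t$ up to $\Theta(\log(1/\eps))$ — i.e., up to the crossover where $|G|e^{-t/4}$ drops below $|L|$ — is precisely what yields the $\log(1/\eps)$ factor and makes the total mass $O(\eps\log(1/\eps))\,|G|$; by contrast, the heavier polynomial tail $\eps^2/(t\ln^2 t)$ integrates to only $O(\eps^2)\,|G|$ and is negligible. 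The rest is routine.
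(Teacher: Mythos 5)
Your proof is correct and follows essentially the same approach as the paper: bound $|\E_{X\in_u L}[p(X)]|$ by the layer-cake integral of the tail, cut at $\Theta(\log(1/\eps))$ using the trivial bound $|L|=O(\eps|S|)$ below the cutoff, and integrate the tail bound from Corollary~\ref{cor:p-G-sensible}(ii) above it. Your choice of cutoff $4\ln(1/\eps)$ (versus the paper's $3\ln(1/\eps)$) is an immaterial difference.
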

\begin{proof}
Since $L \subset G$ and $|G| = O(|S|)$, for $T \geq 6$ we have
\[|L| \cdot \pr_{X \in_u L}[|p(X)| \geq T] \leq  |G| \cdot \pr_{X \in_u G}[|p(X)| \geq T] \leq O\left(|S| (\exp(-T/4) + \eps^2/(T \ln^2 T))\right) \;,\]
where we used Corollary~\ref{cor:p-G-sensible}.
Thus, we obtain that for $\epsilon < O(1)$. 
\begin{align*}
|L| \cdot |\E_{X \in_u L}[p(X)]|  & \leq |L| \cdot \E_{X \in_u L}[|p(X)|] \\
&=  \int_0^{\infty} |L| \cdot \pr_{X \in_u L}[|p(X)| \geq T]  dT \\
&\leq \int_0^{3\ln(1/\eps)} |L| dT + \int_{3\ln(1/\eps)}^{\infty} O(|S| (\exp(-T/4) + \eps^2/(T \ln^2 T))) dT \\
&\leq O(|S|\eps\log(1/\eps)) + O(|S| \eps) + O(|S|\eps^2/\log \log(1/\eps)) \\
&= O(|S|\eps\log(1/\eps)) \;,
\end{align*}
where we used the fact that $|L| = O(\eps |S|)$ and that the derivative of $1/\ln x$ is $1/x \ln^2 x$.
This completes the proof of Lemma~\ref{lem:pL-small}.
\end{proof}
 
By a similar argument, we can show that if the quadratic filter does not apply, 
then the remaining points in $E$ contribute a small amount to the expectation of $p(x)$. 
 
\begin{lemma} \label{lem:pE-small} 
Suppose that for all \todo{$T\geq6$}, we have
$\pr_{X \in_u S} [|p(X)| \geq T] \leq  9 \exp(-T/4) + 3 \eps^2/(T \ln^2 T)$. Then, we have that $|E| \cdot |\E_{X \in_u E}[p(X)]| \leq O(|S| \eps \log(1/\eps))$.
\end{lemma}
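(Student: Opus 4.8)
The plan is to mirror the argument of Lemma~\ref{lem:pL-small}, but now bounding the contribution of the uncorrupted-points' complement $E = S \setminus G$ (the added bad points) rather than the removed good points $L$. Write $S = (G \setminus L) \cup E$, so $|E| \le 2\eps |S|$ since $\Delta(G,S) \le 2\eps$ and $|G| = O(|S|)$. The key point is that the tail bound we are assuming — $\pr_{X \in_u S}[|p(X)| \ge T] \le 9\exp(-T/4) + 3\eps^2/(T\ln^2 T)$ for all $T \ge 6$ — together with the good-set tail bound from Corollary~\ref{cor:p-G-sensible} lets us control the tail of $p$ restricted to $E$: for $T \ge 6$,
\[
|E| \cdot \pr_{X \in_u E}[|p(X)| \ge T] \le |S| \cdot \pr_{X \in_u S}[|p(X)| \ge T] \le O\!\left(|S|\left(\exp(-T/4) + \eps^2/(T\ln^2 T)\right)\right),
\]
where the first inequality is just because $E \subseteq S$ and $|E|$-many points of $S$ lie in $E$. (We do not even need to subtract the $G$ contribution here, though one could.) For the small-$T$ regime we simply use $|E| \cdot \pr_{X \in_u E}[|p(X)| \ge T] \le |E| \le 2\eps|S|$.

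Then I would integrate the tail to bound the first moment, exactly as in Lemma~\ref{lem:pL-small}:
\begin{align*}
|E| \cdot |\E_{X \in_u E}[p(X)]|
&\le |E| \cdot \E_{X \in_u E}[|p(X)|]
= \int_0^\infty |E| \cdot \pr_{X \in_u E}[|p(X)| \ge T]\, dT \\
&\le \int_0^{3\ln(1/\eps)} |E|\, dT + \int_{3\ln(1/\eps)}^\infty O\!\left(|S|\left(\exp(-T/4) + \eps^2/(T\ln^2 T)\right)\right) dT \\
&\le O(|S|\eps\log(1/\eps)) + O(|S|\eps) + O(|S|\eps^2/\log\log(1/\eps)) = O(|S|\eps\log(1/\eps)),
\end{align*}
using $|E| = O(\eps|S|)$ for the first term, $\int \exp(-T/4)\,dT = O(1)$ for the second, and the fact that the antiderivative of $1/(T\ln^2 T)$ is $-1/\ln T$ (so the third integral is $O(\eps^2/\ln\ln(1/\eps))$) for the third. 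The cutoff at $T = 3\ln(1/\eps)$ is chosen exactly so that $\exp(-T/4)$-type contributions from the tail integral are $\lesssim \eps^{3/4}\cdot|S| = O(\eps|S|)$ and the $|E|$-volume term of the low-$T$ part is $O(\eps\log(1/\eps)|S|)$, which dominates.

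I expect no real obstacle here — the lemma is deliberately parallel to Lemma~\ref{lem:pL-small}. The one point requiring a little care is the very first displayed inequality: one must check that the number of points of $S$ falling in any set of the form $\{|p| \ge T\}$ is at least the number of points of $E$ in that set, which is immediate from $E \subseteq S$ as multisets (here $S$ is obtained from $G$ by removing $L$ and adding $E$, so $E$ is literally a sub-multiset of $S$). A secondary subtlety is that the hypothesis is stated only for $T \ge 6$, which is why the low-$T$ integral is handled by the crude volume bound $|E|$ rather than the tail bound; since $3\ln(1/\eps) > 6$ for $\eps$ smaller than an absolute constant, the two regimes patch together cleanly, and for $\eps$ above that constant the statement $O(|S|\eps\log(1/\eps))$ is vacuous up to constants anyway.
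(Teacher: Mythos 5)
Your proposal is correct and is essentially the paper's intended argument: the paper proves Lemma~\ref{lem:pE-small} ``by a similar argument'' to Lemma~\ref{lem:pL-small}, namely bounding $|E|\cdot\pr_{X\in_u E}[|p(X)|\geq T]$ by $|S|\cdot\pr_{X\in_u S}[|p(X)|\geq T]$ (since $E$ is a sub-multiset of $S$), invoking the hypothesized tail bound on $S$, and integrating the tail with the same split at $T=\Theta(\log(1/\eps))$. The only blemish, inherited verbatim from the paper's own computation, is the cutoff constant: with cutoff $3\ln(1/\eps)$ the $\exp(-T/4)$ tail integrates to $\Theta(\eps^{3/4}|S|)$ rather than $O(\eps|S|)$, which is fixed by taking the cutoff to be $4\ln(1/\eps)$ without affecting the final $O(|S|\eps\log(1/\eps))$ bound.
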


By combining the above, we obtain the following corollary, completing 
the analysis of our algorithm:
 
\begin{corollary} 
If we reach Step~\ref{step:quad-filter} of Algorithm~\ref{alg:rsm}, then there exists a 
\todo{$T\geq6$} such that $\pr_{X \in_u S} [|p(X)| \geq T] \geq  9 \exp(-T/4) + 3 \eps^2/(T \ln^2 T)$.
\end{corollary}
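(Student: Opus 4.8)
The plan is to argue by contradiction, using the identity $\E_S[p(X)] = \|(\wt{\Sigma}-I)_{(U)}\|_F$ from Lemma~\ref{lem:exp-var-P}(ii) together with the fact that we are in the complementary case $\|(\wt{\Sigma}-I)_{(U)}\|_F = \Omega(\eps\log(1/\eps))$ (since Algorithm~\ref{alg:rsm} did not return in Step~\ref{step:return-mean}). Suppose, towards a contradiction, that no $T \geq 6$ satisfies the claimed inequality; that is, suppose that for every $T \geq 6$ we have $\pr_{X\in_u S}[|p(X)| \geq T] \leq 9\exp(-T/4) + 3\eps^2/(T\ln^2 T)$. Then the hypothesis of Lemma~\ref{lem:pE-small} is met, and we conclude $|E| \cdot |\E_{X\in_u E}[p(X)]| \leq O(|S|\eps\log(1/\eps))$.

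Next I would write $S = (G \setminus L) \cup E$, where $L = G \setminus S$ and $E = S \setminus G$ are disjoint and, by $\Delta(G,S) \le 2\eps$, satisfy $|L| + |E| = |G \setminus S| + |S \setminus G| \le 2\eps|S|$ and $|G| \le (1+2\eps)|S|$. Summing $p$ over $S$ and regrouping gives
\[ |S| \, \E_{X\in_u S}[p(X)] = |G| \, \E_{X\in_u G}[p(X)] - |L| \, \E_{X\in_u L}[p(X)] + |E| \, \E_{X\in_u E}[p(X)] . \]
Now bound the three terms on the right. By Corollary~\ref{cor:p-G-sensible}(i) and $|G| = O(|S|)$, the first is $O(|S|\eps)$ in absolute value; by Lemma~\ref{lem:pL-small}, the second is $O(|S|\eps\log(1/\eps))$ in absolute value; and the third is $O(|S|\eps\log(1/\eps))$ in absolute value by the consequence of Lemma~\ref{lem:pE-small} derived above. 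The triangle inequality then yields $|\E_{X\in_u S}[p(X)]| \leq O(\eps\log(1/\eps))$.

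Finally, I would combine this with Lemma~\ref{lem:exp-var-P}(ii), which gives $\E_S[p(X)] = \|(\wt{\Sigma}-I)_{(U)}\|_F$, and with the case assumption $\|(\wt{\Sigma}-I)_{(U)}\|_F \geq c\,\eps\log(1/\eps)$ for the absolute constant $c$ hard-wired into the test of Step~\ref{step:return-mean}. For $c$ strictly larger than the absolute constant produced in the previous paragraph this is a contradiction, so some $T \geq 6$ must satisfy $\pr_{X\in_u S}[|p(X)| \geq T] \geq 9\exp(-T/4) + 3\eps^2/(T\ln^2 T)$, as claimed. The only genuinely delicate point is the bookkeeping of these absolute constants: the threshold constant $c$ in Step~\ref{step:return-mean} must be fixed \emph{after} the constants in Corollary~\ref{cor:p-G-sensible}(i), Lemma~\ref{lem:pL-small}, and Lemma~\ref{lem:pE-small} have been pinned down, so that the strict inequality needed to force the contradiction holds; everything else is a direct substitution.
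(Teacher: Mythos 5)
Your proposal is correct and follows essentially the same route as the paper: assume no such $T$ exists, invoke Lemma~\ref{lem:pE-small} together with Corollary~\ref{cor:p-G-sensible} and Lemma~\ref{lem:pL-small} on the decomposition $S = G \cup E \setminus L$, and contradict $\E_{X \in_u S}[p(X)] = \|(\wt{\Sigma}-I)_{(U)}\|_F = \Omega(\eps\log(1/\eps))$ guaranteed by Lemma~\ref{lem:exp-var-P}(ii) and the failure of the test in Step~\ref{step:return-mean}. Your explicit remark about fixing the constant in Step~\ref{step:return-mean} after the constants in the supporting lemmas is a point the paper leaves implicit, but it is the same argument.
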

\begin{proof}
Suppose for a contradiction that no such $T$ exists. 
Using Corollary~\ref{cor:p-G-sensible}, Lemmas \ref{lem:pL-small} and~\ref{lem:pE-small}, 
we obtain that 
\begin{align*}
|S| \cdot \|(\wt{\Sigma}-I)_U\|_F &= |S| \cdot \E_{S}[p(X)]
=|G| \cdot \E_{X \in_u G}[p(X)] + |E| \cdot \E_{X \in_u E}[p(X)]- |L| \cdot \E_{X \in_u L}[p(X)]\\
&=O(|S|\eps \log(1/\eps)) \;.
\end{align*} 
This is a contradiction, as if this was the case, Algorithm~\ref{alg:rsm} 
would have returned in Step~\ref{step:return-mean}.
\end{proof}





\section{Robust Sparse PCA} \label{app:pca}
\begin{algorithm}
	\begin{algorithmic}[1]
		\Procedure{Robust-Sparse-PCA}{$S, k, \widetilde{\Sigma}, \eps, \delta, \tau$}
		\INPUT A multiset S, an estimate of the true covariance $\widetilde{\Sigma}$, a real number $\delta \in \mathbb{R}$.
		\OUTPUT A multiset $S'$ or matrix $\Sigma'$ satisfying Proposition~\ref{prop:rspca}.
		\State For any $x \in \mathbb{R}^d$ define $\gamma(x) := \vvec(xx^T-I) \in \mathbb{R}^{d^2}$. 
		\State Compute $\tmu :=\E_{S}[\gamma(x)]$, $\hat{\mu} = h_{k^2}(\mu)$ and $Q := \supp(\hat{\mu})$. 
		\State Compute \[ M_{Q} := \E_{S}[(\gamma(x) - \tilde{\mu}) (\gamma(x)-\tilde{\mu})^T]_{Q\times Q} \in \mathbb{R}^{k^2} \times \mathbb{R}^{k^2}\]
		\State Let $\lambda, v^*$ be the maximum eigenvalue and corresponding eigenvector of $M_Q - \Cov_{X\sim \mathcal{N}(0,\wtSigma)}(\gamma(x)_Q)$.
		\If { $\lambda < C \cdot (\delta + \eps \log^2 ( 1/\eps))$, where $C$ is a sufficiently large constant}\label{step:small_lambda}
		\State Compute $w$, the largest eigenvector of $\mathsf{mat}(\tilde{\mu})_Q$. \Return $ww^T+I$.		\EndIf
		\State Let $\hat{\mu} = \mathsf{median}\left(\{ \gamma(x) \cdot v^* \mid x \in S \}\right)$.  Find a number $T > \log(1/\eps)$ satisfying
		\[ \pr_{S}[| \gamma(x)_Q \cdot  v^* - \hat{\mu} | > CT + 3] > \frac{\eps}{T^2 \log^2(T)}. \]
		
		\Return $S' = \{ x \in S \mid | (\gamma(x)_Q \cdot v^*) - \hat{\mu} | < T \}$.
		
		\EndProcedure
	\end{algorithmic}
	\caption{Robust Sparse PCA via Iterative Filtering}
	\label{alg:rspca}
\end{algorithm}

In this section, we prove correctness of Algorithm~\ref{alg:rspca}
establishing Theorem~\ref{thm:sparse-PCA-informal}, 
which we restate for completeness:
\begin{theorem} \label{thm:pca-app}
Let $D \sim \normalpdf(\mathbf{0}, I+ \rho v v^T)$ be a centered Gaussian distribution on $\R^d$
with spiked covariance $\Sigma = I+ \rho v v^T$ for an unknown $k$-sparse unit vector $v$,
and $0 < \rho < O(1)$ a real number. For some $\eps > 0$, let $S$ be an $\eps$-corrupted set of samples from $D$ of size
$N={\Omega}(k^4 \log^4(d/\eps)/\eps^2)$. There exists an algorithm that,
on input $S$, $k$, and $\eps$, runs in polynomial time and returns \new{$w \in \R^d$}
such that with probability at least $2/3$ we have that $\|ww^T - vv^T \|_F = O\left(\frac{\eps \log(1/\eps)}{\rho}\right)$. 

We will require some additional notation.
For any $M \in \mathbb{R}^{d \times d}$, define $\vvec(M) \in \mathbb{R}^{d^2}$ to be a canonical flattening of this vector and $\gamma(x) \in \mathbb{R}^{d^2}$ to be $\vvec(xx^T - I)$. Also let $\gamma_A(x) = \gamma(x_A)$ and $\vvec_B(M) = \vvec(M_B)$, where $A \subset [d] \times [d]$ and $x, M \in \mathbb{R}^{d \times d}$.  
\end{theorem}

As is standard with such robust statistics arguments, we will need to assume that the uncorrupted set of good samples $G$ has some desired properties. In particular, we will make use of the following notion of a good set:
\begin{definition}\label{def:good_rspca_boot}
	Define a set $G \subset \mathbb{R}^n$ to be $(\eps, k)$-good for $\mathcal{N}(0, I + \rho vv^T)$ and $\rho > 0$ if the following hold for every $Q \subset [d]\times [d]$
	\begin{enumerate}
		\item \label{cond:coord_dev} For some sufficiently large constant $C$ and for every $i\in [d]$ and $x\in G$, $|x_i| \leq C\sqrt{\log(d|G|)}$.
		\item \label{cond:mean} $\left \| (\E_{G} [xx^T] - I - \rho vv^T)_Q \right \|_F \leq \eps $
		\item \label{cond:var} For all $w \in \R^{k^2}$,
		\[ \Var_G[\gamma_Q(x) \cdot w] = (1 \pm \eps) \Var_{\mathcal{N}(0, I+\rho vv^T)}[\gamma_Q(x) \cdot w] \]
		\item \label{cond:concentration} For $C$ a sufficiently large constant, and for all $w \in \R^{k^2}$ satisfying $\|w \|_2 = 1$, and all $T>\log(1/\eps)$
		\[\pr_{G}[ | \gamma_Q(x) \cdot  w -  \rho \vvec_Q(vv^T) \cdot w |   > C T] < \frac{\eps}{T^2 \log^2(T)} \]
	\end{enumerate}
\end{definition}

We note that given a sufficiently large set of independent samples from $X$ that the above conditions hold with high probability.

\begin{lemma}\label{lem:samples-good-pca}
	If $G$ is a set of $N = C k^4 \log^4(d/\eps)/\eps^2$ samples drawn from $\mathcal{N}(0, I+ \rho vv^T)$, for $C$ a sufficiently large constant. Then $G$ is $(\eps, k)$-good with probability at least $2/3$.
\end{lemma}
We think in fact that we should be able to produce a good set with substantially fewer samples.
\begin{conjecture}
	There exists an $N=k^2 \mathrm{polylog}(d/\eps)/\eps^2$ so that if $G$ is a set of $N$ samples drawn from $\mathcal{N}(0, I+\rho vv^T)$, then $G$ is $(\eps, k)$-good with probability at least $1-1/d$.
\end{conjecture}

We can now proceed with the proof of our main Theorem. In particular, our algorithm will follow quickly from the existence of the following subroutine:
\begin{proposition}\label{prop:rspca}
	Let $G$ be an $(\eps,k)$-good set for $\mathcal{N}(0,\Sigma)$ with $\Sigma=I+ \rho vv^T$ with $v$ a unit length, $k$-sparse vector and $0 < \rho < 1$. There exists an algorithm (Algorithm~\ref{alg:rspca}) that given a matrix $\tilde \Sigma$ and a set $S$ with $\|\tilde \Sigma -\Sigma \|_F \leq \delta$ and $\Delta(S,G)\leq \eps |G|$ returns either a matrix $\Sigma'$ with $ \|\Sigma'-\Sigma \|_F = O(\sqrt{\eps\delta}+\eps\log(1/\eps))$ or a subset $T\subset S$ with $\Delta(T,G) < \Delta(S,G)$.
\end{proposition}
Our main theorem follows from iteratively applying the Proposition. The error stabilizes at $\delta $ with $\delta = O(\sqrt{\eps\delta}+\eps\log(1/\eps))$, which implies that $\delta= O(\eps\log(1/\eps))$. We begin by analyzing what happens when our algorithm returns a matrix. We first note that if we pass the filter, then $\tmu_Q$ will be approximately correct.
\begin{lemma}\label{QMeanCloseLem}
With the notation as in Algorithm~\ref{alg:rspca}, we have that $\|\tmu_Q - \vvec_Q(\Sigma-I)\|_2 = O(\sqrt{\eps\lambda} + \sqrt{\eps\delta}+\eps\log(1/\eps))$.
\end{lemma}
\begin{proof}
	Let $\|\tmu_Q - \vvec_Q(\Sigma-I) \|_2=a$ and $S=(G \backslash L)\cup E$. We wish to show that
	\[
	\left \|\sum_{x\in S} (xx^T-\Sigma)_Q \right \|_2 =  O(\sqrt{\eps\delta} + \eps\log(1/\eps))|G|.
	\]
	By the triangle inequality, the left hand side above is at most
	\begin{align}\label{eqn:mean_close_triangle}
	\left \|\sum_{x\in G} (xx^T-\Sigma)_Q \right \|_2+\left \|\sum_{x\in L} (xx^T-\Sigma)_Q \right \|_2+\left \|\sum_{x\in E} (xx^T-\Sigma)_Q \right \|_2.
	\end{align}
	Since $G$ is a good set, by Condition \ref{cond:mean}, we have that the first term is $O(\eps |G|)$. We now bound the second term. Since $\Sigma= I + \rho vv^T$ and $\gamma(x) = \vvec(xx^T - I)$ the second term is at most the supremum over unit vectors $w\in \R^{k^2}$ of
	\[
	\sum_{x\in L} (w\cdot \gamma_Q(x) - \rho w\cdot \vvec_Q(vv^T))
	\]
	Using the fact that for any random variable $\E_{D}[X] = \int_0^{\infty} \pr_{D}[X > t] dt$, this is at most
	\[
	\int_0^\infty \left|\{x\in L:|(w\cdot \gamma_Q(x) - \rho w\cdot \vvec_Q(vv^T)|>t \} \right|dt.
	\]
	Since $L \subset G$ and $|L| = \eps |G|$ this is at most
	\[
	\int_0^{C\log(1/\eps)} \eps|G| + \int_{C\log(1/\eps)}^\infty \eps/((t/C)^2\log^2(t/C))|G|dt = O(\eps\log(1/\eps) |G|),
	\]
	where the bound on the second term above is by Condition \ref{cond:concentration} of the definition of a good set. 
	
	We can bound the final term in~\ref{eqn:mean_close_triangle} by Cauchy-Schwartz as 
	\[
	(\eps|G|)^{1/2}\left(\sum_{x\in E} (w\cdot(xx^T-\Sigma)_Q)^2 \right)^{1/2}.
	\]
	
	To bound this  we note that 
	\[
	\sum_{x\in S} (w\cdot(xx^T-\Sigma)_Q)^2 \leq |S|(\var_S[w\cdot\gamma_Q(x)]+a^2).
	\]
	We also know that
	\[
	\var_G[w\cdot\gamma_Q(x)] = \var_{\mathcal{N}(0,\rho vv^T+I)}[w\cdot\gamma_Q(x)]+O(\eps) = \var_{\mathcal{N}(0,\wtSigma)}[w\cdot\gamma_Q(x)]+O(\eps+\delta).
	\]
	Thus subtracting both sides by $\var_{\mathcal{N}(0,\wtSigma)}[w\cdot\gamma_Q(x)]$ and scaling by $G$ gives us
	\[
	\sum_{x\in G} ((w\cdot(xx^T-\Sigma)_Q)^2-\var_{\mathcal{N}(0,\wtSigma)}[w\cdot\gamma_Q(x)] = O(\eps+\delta)|G|.
	\]
	However, since $v^*$ is the eigenvector corresponding to the largest eigenvalue, we also have that
	\[
	\var_S[w\cdot\gamma_Q(x)]-\var_{\mathcal{N}(0,\wtSigma)}[w\cdot\gamma_Q(x)] \leq \lambda+a^2.
	\]
	Combining with the above and using the fact that $S = (G \setminus L) \cup E$.  we have that
	\begin{align*}
	\sum_{x\in E} ((w\cdot(xx^T-\Sigma)_Q)^2 -\var_{\mathcal{N}(0,\wtSigma)}[w\cdot\gamma_Q(x)]) &\leq \sum_{x\in L} ((w\cdot(xx^T-\Sigma)_Q)^2 -\var_{\mathcal{N}(0,\wtSigma)}[w\cdot\gamma_Q(x)])\\ &+|G|O(\eps+\delta+\lambda+a^2).
	\end{align*}
	However, we can bound
	\[
	\sum_{x\in L} ((w\cdot(xx^T-\Sigma)_Q)^2 -\var_{\mathcal{N}(0,\wtSigma)}[w\cdot\gamma_Q(x)])
	\]
	by
	\[
	O(|L|)+\int_0^\infty \left|\{x\in L:|(w\cdot \gamma_Q(x) - \rho w\cdot \vvec_Q(vv^T)|>t \} \right|2t dt.
	\]
	As before, this is at most
	\[
	\int_0^{C\log(1/\eps)} 2t\eps|G|dt + \int_{C\log(1/\eps)}^\infty \eps/((t/C)^2\log^2(t/C))|G|2tdt = O(\eps\log^2(1/\eps) |G|).
	\]
	Thus, the final term in our sum is at most
	\[
	(\eps|G|)^{1/2} O(\eps\log^2(1/\eps)|G|+(a^2+\eps+\delta+\lambda)|G|)^{1/2}
	\]
	
	Therefore, we have that
	\[
	a=O(a\sqrt{\eps}+\sqrt{\eps\lambda} + \sqrt{\eps\delta}+\eps\log(1/\eps)),
	\]
	from which we conclude our result.
	
\end{proof}

Given this, we would like to show that $\Sigma'$ is close to $\Sigma$. In particular, we have:
\begin{lemma}\label{lem:vec_closeness}
	Suppose that $A=\E_{x\in_u S}[xx^T-I]$ and $Q$ the set of its $k^2$ largest entries. If $\|(A-\rho  vv^T)_Q\|_F = \eta$ then for $w$ a normalized, principle eigenvector of $A_Q$ we have that $\rho w$ is within $O(\eta+\eps\log(1/\eps))$ of either $\rho v$ or $-\rho v$.
\end{lemma}
Before we begin with the proof, we make an important observation:
\begin{lemma}
	In the notation above, for any set of entries $R$ defining a $k^2\times k^2$ submatrix, $(A+I)_R \geq ((\rho vv^T+I) - O(\eps\log(1/\eps))I)_R$, as self-adjoint operators.
\end{lemma}
\begin{proof}
	Note that $A+I = \E_{x\in_u S}[xx^T] \geq (1-\eps)\E_{x\in_u G\backslash L}[xx^T].$ By Property 2 of what it means to be a good set, $\E_{x\in_u G}[xx^T] = \rho vv^T+I+O(\eps)$. Thus, it suffices to show that for any unit vector $u$ with support of size at most $k^2$ that $|L|/|G|\E_{x\in_u L}[(x\cdot u)^2] = O(\eps\log(1/\eps)).$ This follows easily from Property 4.
\end{proof}

We are now ready to prove Lemma \ref{lem:vec_closeness}.
\begin{proof}
Let $R$ be the support of $vv^T$. Note that $A$ has larger total $L^2$ mass on $Q$ than it does on $R$. Therefore,
\[
\|A_{R\backslash Q} \|_F \leq \| A_{Q\backslash R} \|_F \leq \| (A-\rho vv^T)_Q \|_F = \eta.
\]

Let $B=(\rho vv^T)_{R\backslash Q}$. We note that with respect to Frobenius norm:
\[
A_Q = A_{Q\cap R} + A_{Q\setminus R} = A_{Q \cap R}+ O(\eta) = (\rho vv^T)_{Q\cap R} + O(\eta) = (\rho vv^T-B)+O(\eta).
\]

We also note that this is $A_{Q\cap R}+O(\eta) = A_R+O(\eta)$. Combining this with the above lemma, we have that
\[
(\rho vv^T-B+I)+O(\eta) \geq (\rho vv^T+I)-O(\eps\log(1/\eps))I.
\]

Rearranging, we find that $B\leq O(\eta+\eps\log(1/\eps))I$. But we note that the sign of the $i,j$ entry of $B$ is the same as the sign of $v_iv_j$ or 0. This means that $B$ is similar to a matrix with non-negative entries, and thus by The Perron--Frobenius Theorem, the largest eigenvalue of $B$ is positive, and hence $\|B \|_2 = O(\eta+\eps\log(1/\eps)).$ Therefore, we have that
\[
\|A_Q- \rho vv^T \|_2 \leq \|A_Q-( \rho vv^T-B) \|_2 + \|B \|_2 = O(\eta+\eps\log(1/\eps)).
\]
Note that unless $\eps$ and $\eta$ are sufficiently small, there is nothing to prove. Otherwise, we have that $v\cdot A_Q v \geq \rho - O(\eta + \eps \log(1/\eps))$, so $w$ will be an eigenvector with some eigenvalue $\lambda > \rho/2$. Since $\|A_Q- \rho vv^T \|_2 <\rho/2$, this means that $w$ must have a non-trivial component in the $v$-direction. Assume that $w$ is proportional to $v+u$ with $u$ orthogonal to $v$. Then we have that
\[
\lambda(v+u) = \lambda w = A_Q w = A_Q(v+u) = \rho v + O(\eta+\eps\log(1/\eps)).
\]
Taking the perpendicular to $v$ component above, we have that $\| u \|_2 = O(\eta+\eps\log(1/\eps))$, and this completes our proof.
\end{proof}

Finally, note that

\begin{align*}
\|vv^T - ww^T\|_F^2&=  \|vv^T \|_F^2 + \| ww^T\|_F^2 - 2\mathrm{tr}(vv^T ww^T) \\
&=2-2(v\cdot w)^2 \ll 2-2|v\cdot w|  =  \|v\pm w \|_2^2\\
&= \frac{ \|\rho v \pm \rho w \|_2^2 }{\rho^2}\leq O\left(\left( \frac{\eta + \eps \log(1/\eps)}{\rho}\right)^2\right)
\end{align*}



Thus, plugging in $\eta = O(\sqrt{\eps\lambda}+\sqrt{\eps\delta}+\eps\log(1/\eps))$ above, we find that $\|vv^T - ww^T \|_F= O(\frac{\sqrt{\eps\delta}+\eps\log(1/\eps)}{\rho})$.

We have left to analyze what happens when our algorithm returns a set $S'$. It is easy to see by Conditions \ref{cond:mean} and \ref{cond:var} that only $1/3$ of the elements of $G$ have $(\gamma(x)_Q-\rho \vvec(vv^T)_Q)\cdot v^*>3.$ Therefore, we have that $\hat{\mu}$ is within $3$ of $\rho v^*\cdot (\vvec(vv^T)_Q)$. From this and Condition \ref{cond:concentration} it is easy to see that if $C$ is sufficiently large (even compared to the $C$ in Condition \ref{cond:concentration}), that less than half of the elements of $S$ with $|\vvec(xx^T)\cdot v^*| > CT+3$ will be in $G$, and thus $\Delta(S',G) < \Delta(S,G)$.

All that remains is to show that such a threshold $T$ exists. To do this consider
\[
\var_S[v^*\cdot \vvec(xx^T)_Q]-\var_{\mathcal{N}(0,\Sigma)}[v^*\cdot \vvec(xx^T)_Q].
\]
This is $O(\delta)+\lambda$. On the other hand, since translating a random variable should not change it's variance, we see. 
\begin{align*}
\var_S(v^*\cdot \vvec(xx^T)_Q) &= \E_S[(v^*\cdot \vvec(xx^T-\rho vv^T)_Q)^2]-\E_S[v^*\cdot \vvec(xx^T-\rho vv^T)_Q]^2\\
& = \E_S[(v^*\cdot \vvec(xx^T-\rho vv^T)_Q)^2] + O(\eps\lambda+\eps\delta+\eps^2\log^2(1/\eps))
\end{align*}
by Lemma \ref{QMeanCloseLem}. Thus, 
\[
\E_S[(v^*\cdot \vvec(xx^T-\rho vv^T)_Q)^2] \geq \var_{\mathcal{N}(0,\Sigma)}[v^*\cdot \vvec(xx^T)_Q] + \lambda/2.
\]

Now by Conditions \ref{cond:mean} and \ref{cond:var} we have that
\[
\sum_{x\in G} ((v^*\cdot \vvec(xx^T-\rho vv^T)_Q)^2 - \var_{\mathcal{N}(0,\Sigma)}[v^*\cdot \vvec(xx^T)_Q]) = O(|G|\eps).
\]
By arguments from the proof of Lemma \ref{QMeanCloseLem}, we also have that
\[
\sum_{x\in L} ((v^*\cdot \vvec(xx^T- \rho vv^T)_Q)^2 - \var_{\mathcal{N}(0,\Sigma)}[v^*\cdot \vvec(xx^T)_Q)] = O(|G|\eps\log^2(1/\eps)).
\]
Thus, we must have
\[
\sum_{x\in E}(v^*\cdot \vvec(xx^T-\rho vv^T)_Q)^2 \gg |G|\lambda.
\]
However, this is at most
\[
O\left(|E|+\int_0^\infty\left|\{x\in E: |v^*\cdot \vvec(xx^T)_Q-\hat\mu >CT+3\} \right|tdt \right).
\]
If there is no such threshold, this is at most
\[
O\left(|E| +\int_0^{\log(1/\eps)} |E|tdt + \int_{\log(1/\eps)}^\infty \eps/(t^2\log^2(t)) tdt\right) = O(\eps\log^2(1/\eps)|G|),
\]
which is a contradiction. This completes our proof.



\section{Experiments} \label{sec:experiments}

For every experiment, we run 10 trials and plot the median value of the measurement. We shade the interquartile range around each measurement as a measure of the confidence of that measurement. 

Each experiment was run on a computer with a 2.7 GHz Intel Core i5 processor with an 8GB 1867 MHz DDR3 RAM.

\subsection{Robust Sparse Mean Estimation}

\newcommand{\RANSAC}{\texttt{RANSAC}}
\newcommand{\oracle}{\texttt{oracle}}
\newcommand{\NPsp}{\texttt{NP}}
\newcommand{\RMEsp}{\texttt{RME\_sp}}
\newcommand{\RMEnonsp}{\texttt{RME}}
\newcommand{\RMEspL}{\texttt{RME\_sp\_L}}
\newcommand{\RDPCA}{\texttt{RDPCA}}
\newcommand{\RSPCA}{\texttt{RSPCA}}

\begin{figure}
  \centering
  \begin{subfigure}[t]{0.45\textwidth}
    \includegraphics[width=\textwidth]{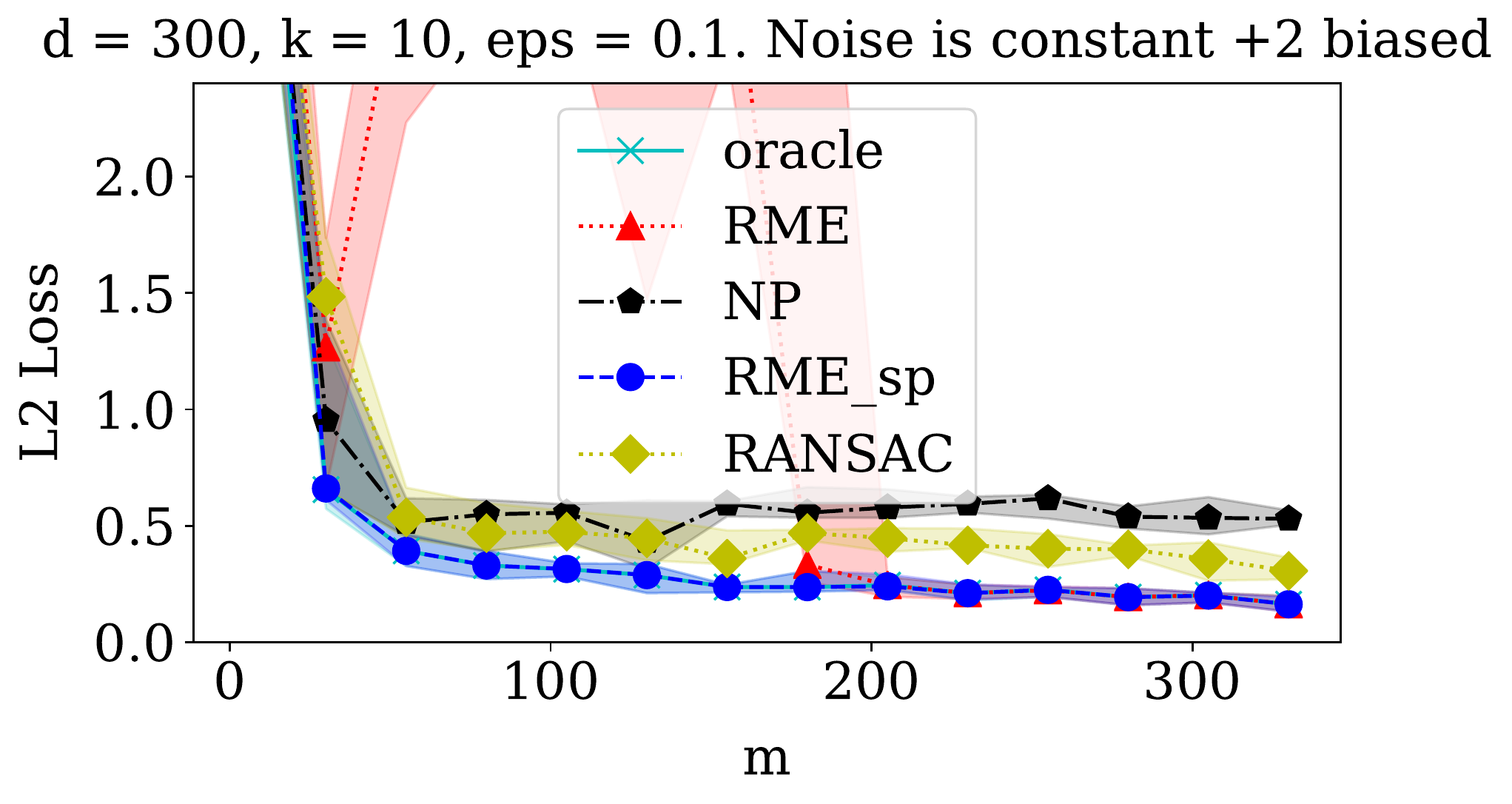}
    \caption{Unlike \RANSAC, our algorithm \RMEsp{} can filter out the noise and
      match the oracle's performance. \RMEnonsp{} also matches the oracle, but needs more samples.}
  \end{subfigure}
    ~ 
    \begin{subfigure}[t]{0.45\textwidth}
        \includegraphics[width=\textwidth]{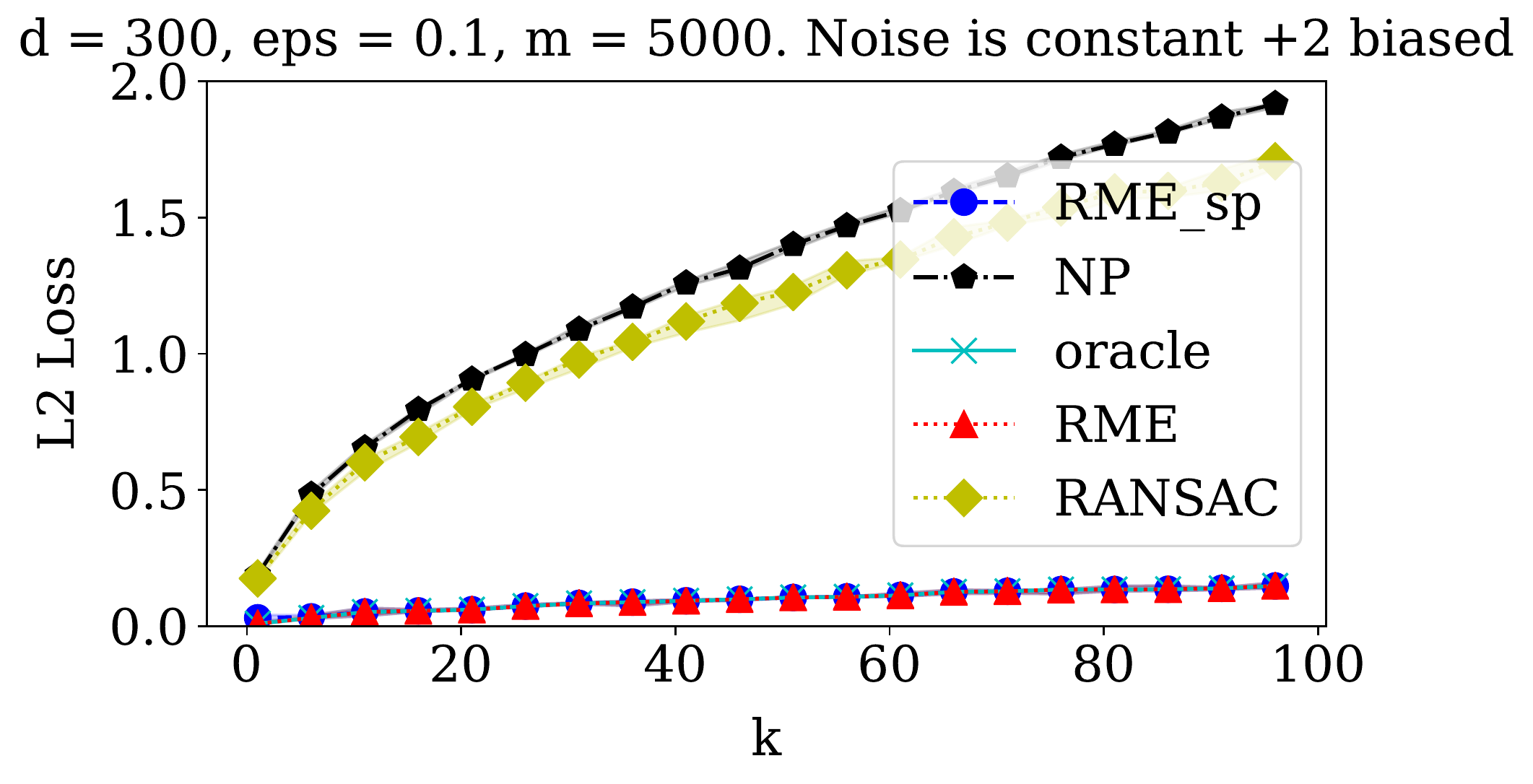}
        \caption{For fixed $m$, as $k$ increases, \RANSAC{} and \NPsp{} both diverge from \RMEnonsp{} and \RMEsp{}.}
    \end{subfigure}
    \caption{Constant-bias noise is easy for our algorithm, since it
      is caught by the linear filter.}
  \label{fig:const-bias}
  \centering
  \begin{subfigure}[t]{0.45\textwidth}
    \includegraphics[width=\textwidth]{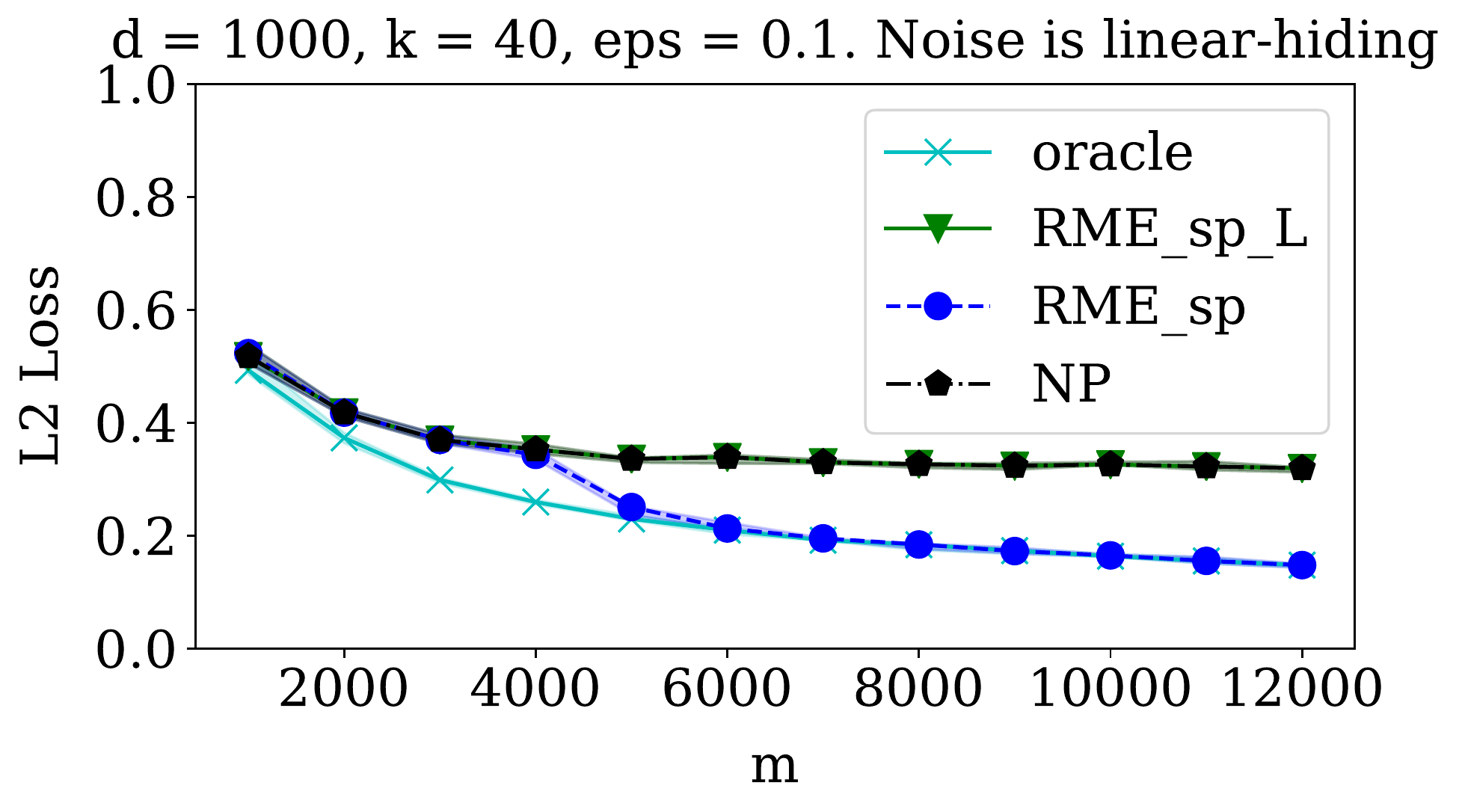}
    \caption{With sufficiently many samples, the quadratic filter can
      filter out the noise, matching the oracle.  The linear filter
      alone does not, even with a large number of samples.}
  \end{subfigure}
    ~ 
    \begin{subfigure}[t]{0.45\textwidth}
        \includegraphics[width=\textwidth]{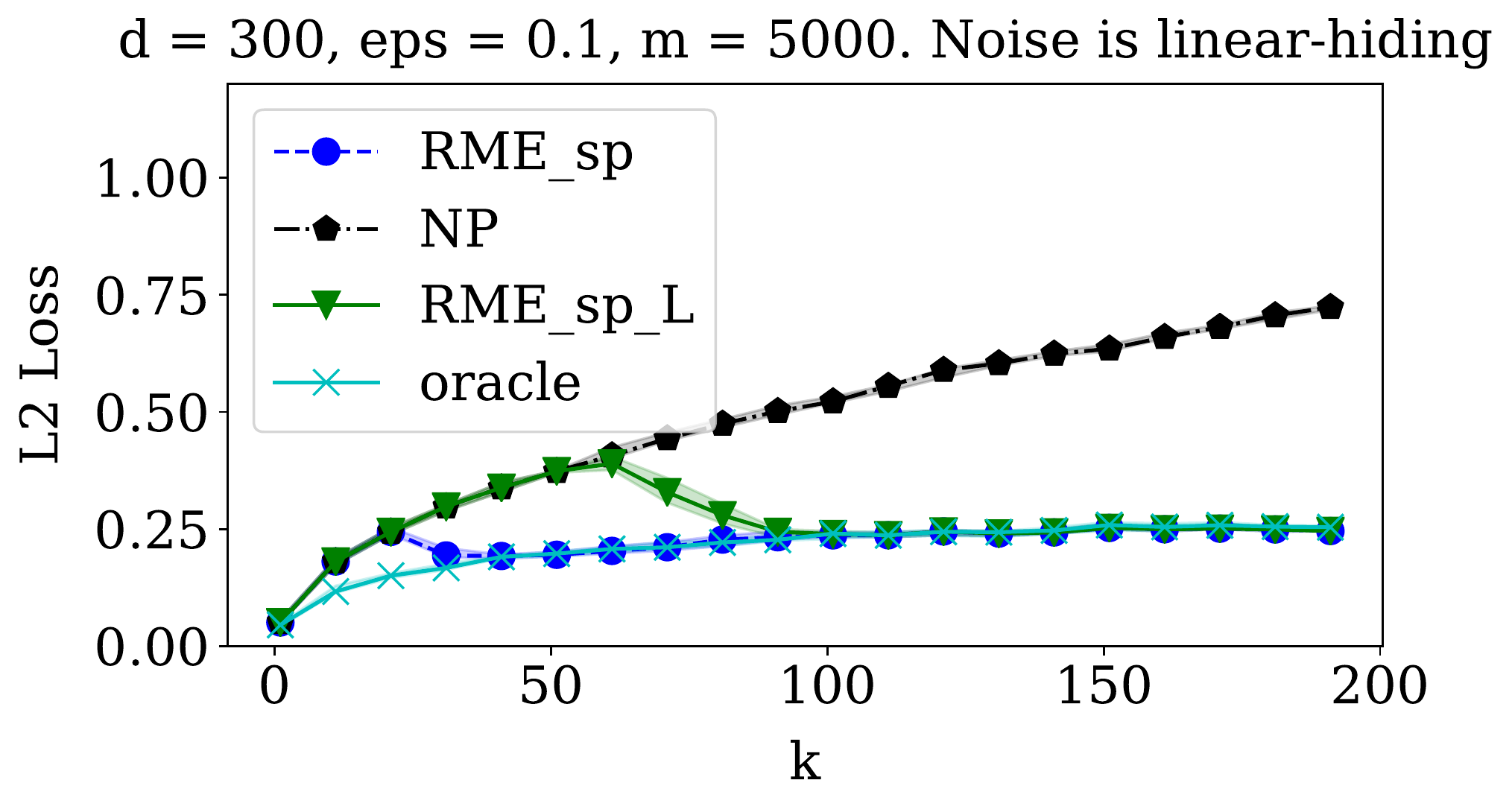}
        \caption{For $k \ll \sqrt{d}$, the linear filter alone does
          not filter out the noise, leading to an $\eps \sqrt{k}$
          dependence for \RMEspL.  Our algorithm \RMEsp{} nearly
          matches \oracle.}
    \end{subfigure}

    \caption{The linear-hiding noise model shows that the quadratic
      filter is necessary.}
  
  \label{fig:bimodal}

  \centering
  
  \begin{subfigure}[t]{0.45\textwidth}
 	\includegraphics[width=\textwidth]{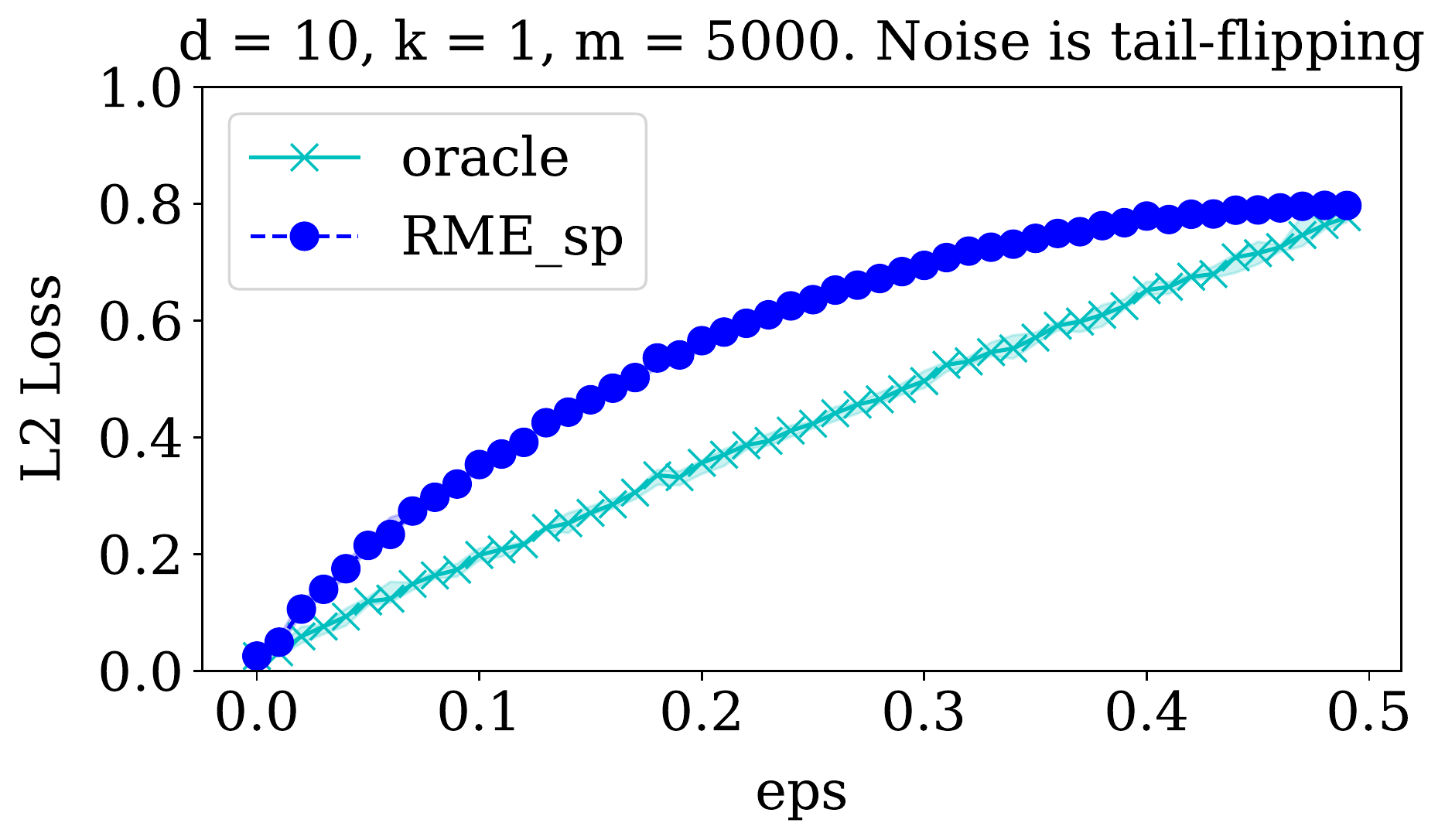}
  	\caption{This noise model gives $\Omega(\eps \sqrt{\log(1/\eps)})$ error
  		to the oracle, and \RMEsp{} is at most twice this.}
  \end{subfigure}
  ~ \begin{subfigure}[t]{0.45\textwidth}
    \includegraphics[width=\textwidth]{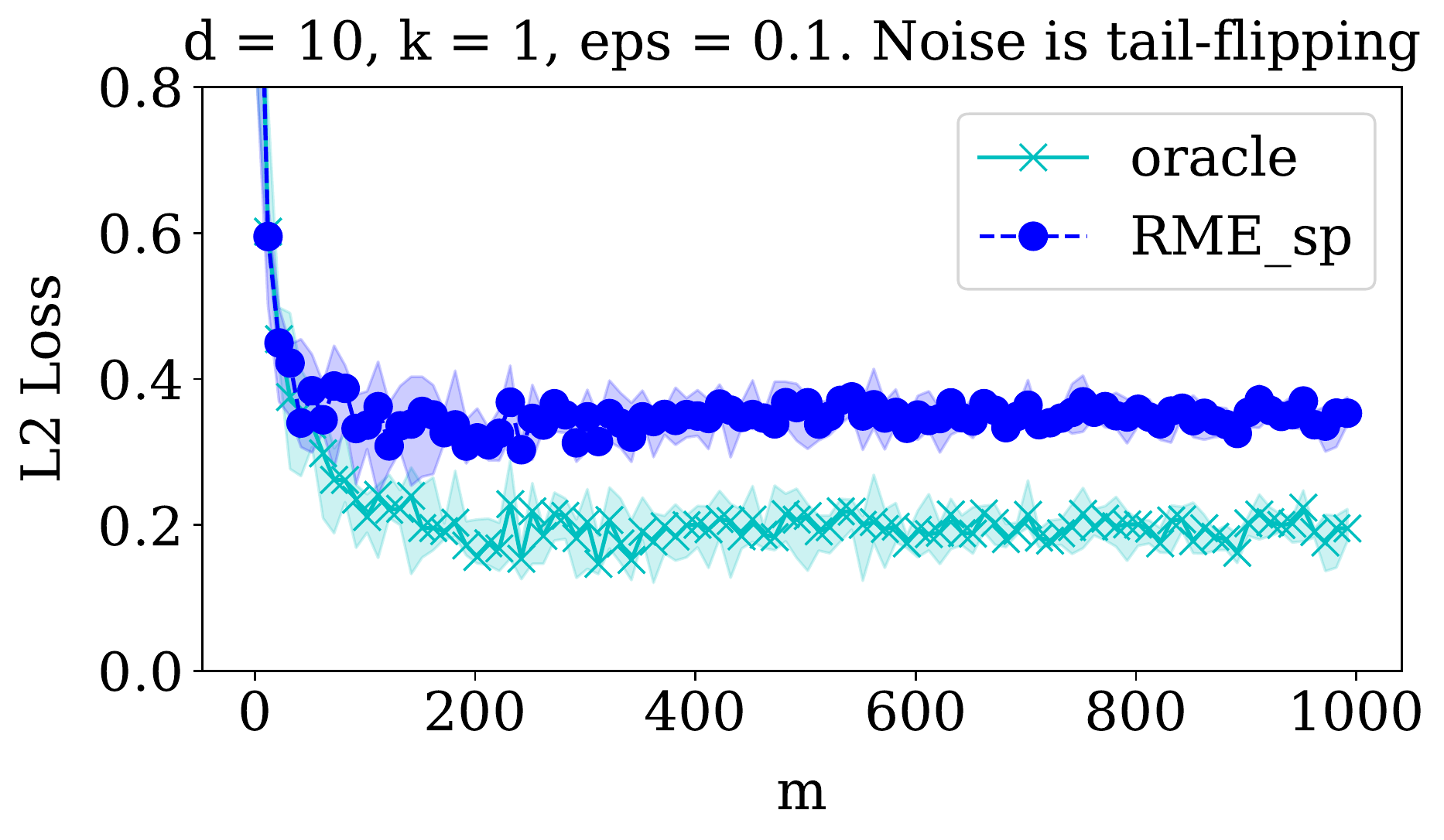}
    \caption{This gap persists regardless of $m$.}
  \end{subfigure}
    \caption{The flipping noise model demonstrates that the error can remain $\Omega(\eps \sqrt{ \log(1/\eps))}$.}
  \label{fig:flipping}
\end{figure}

\begin{figure}
	\centering
	
	\begin{subfigure}[t]{0.45\textwidth}
		\includegraphics[width=\textwidth]{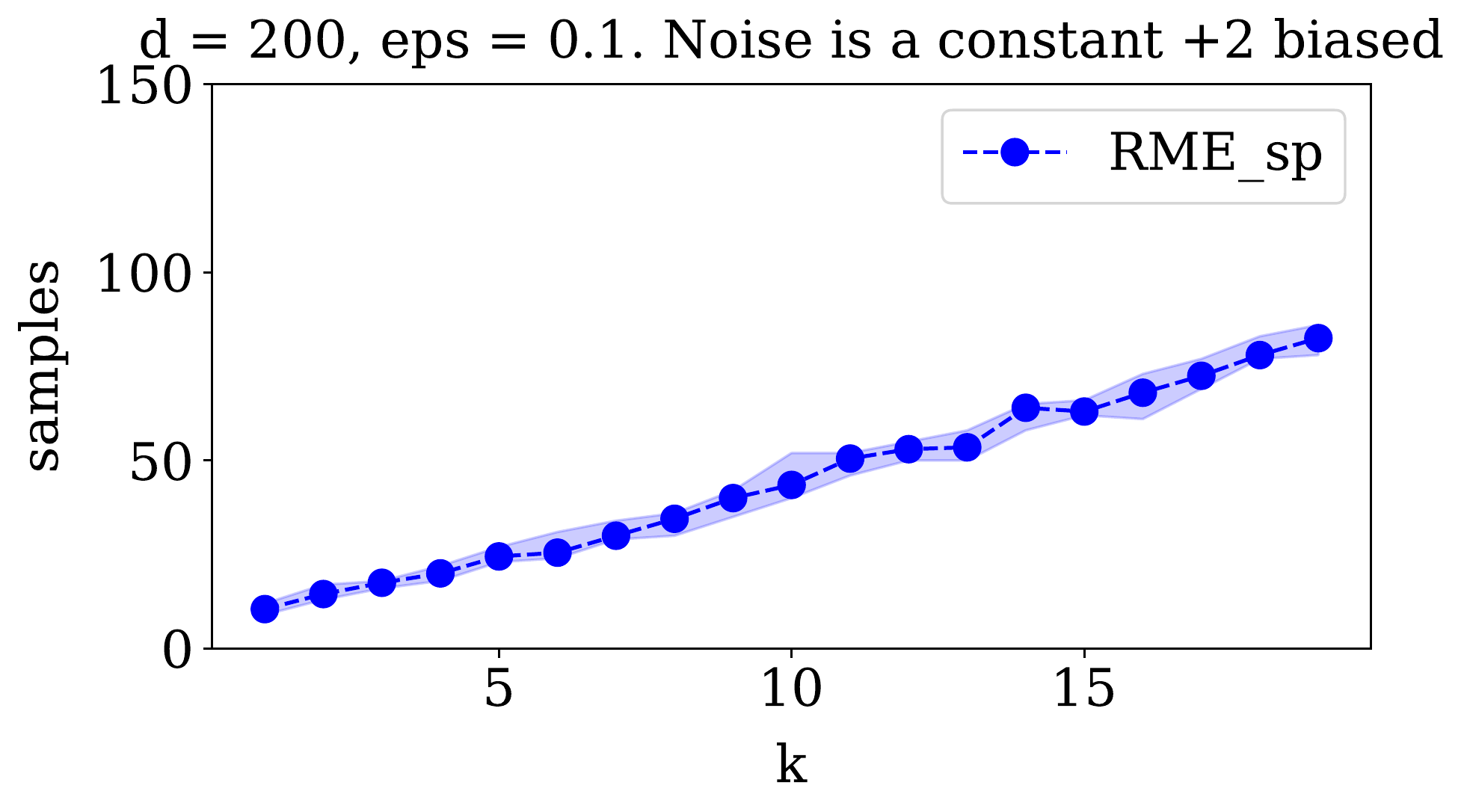}
		\caption{The constant noise model is easy to remove and does not take many samples.}
	\end{subfigure}
	~ \begin{subfigure}[t]{0.45\textwidth}
		\includegraphics[width=\textwidth]{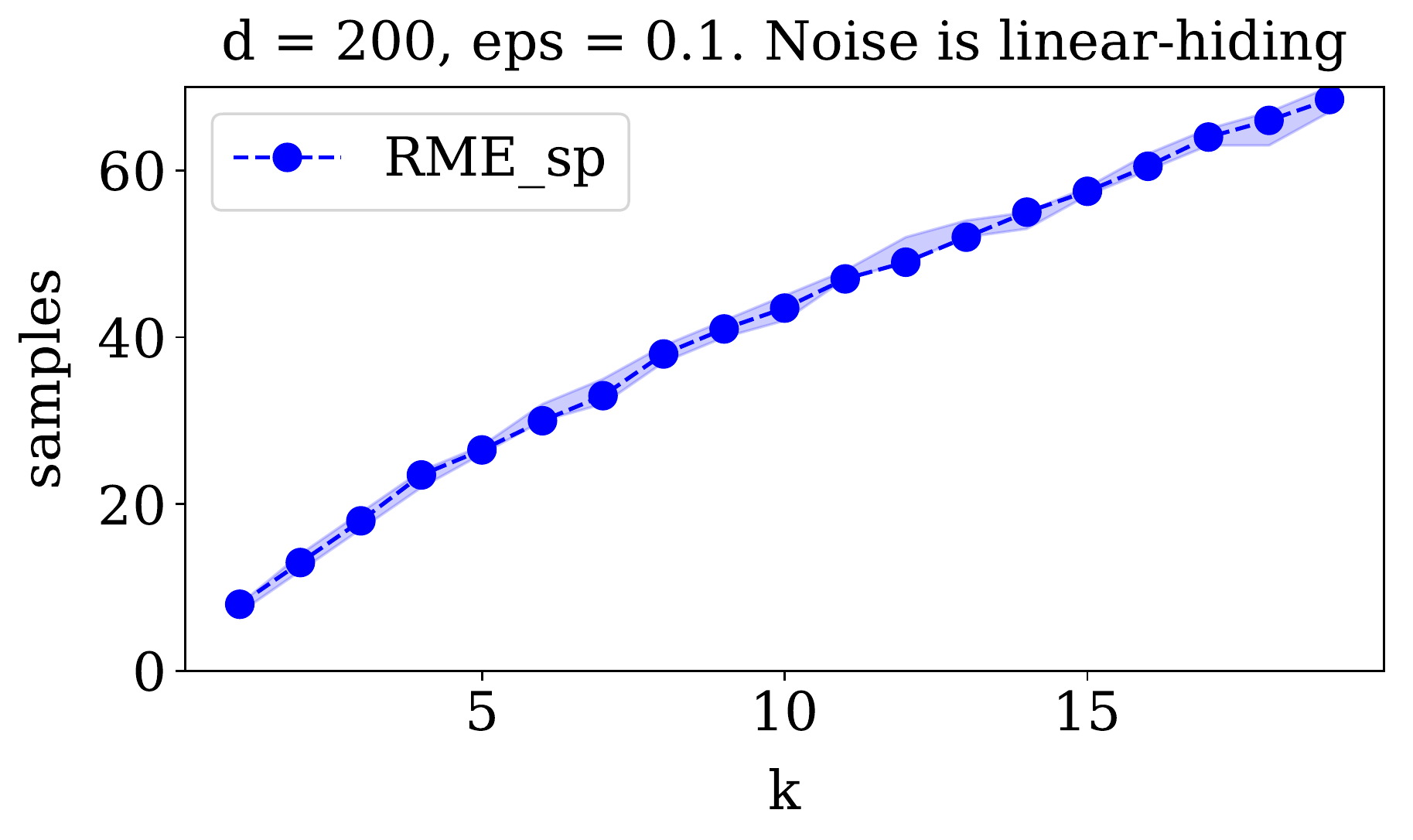}
		\caption{The linear-hiding noise model is harder and requires more samples to get the same guarantee.}
	\end{subfigure}
	\caption{Sample complexity required to do well---in this case, 70\% of errors being less than $1.2$---depends on the noise model.}
	\label{fig:sample_complexity}
\end{figure}

\begin{figure}
	\centering 
	  \begin{subfigure}[t]{0.45\textwidth}
		\includegraphics[width=\textwidth]{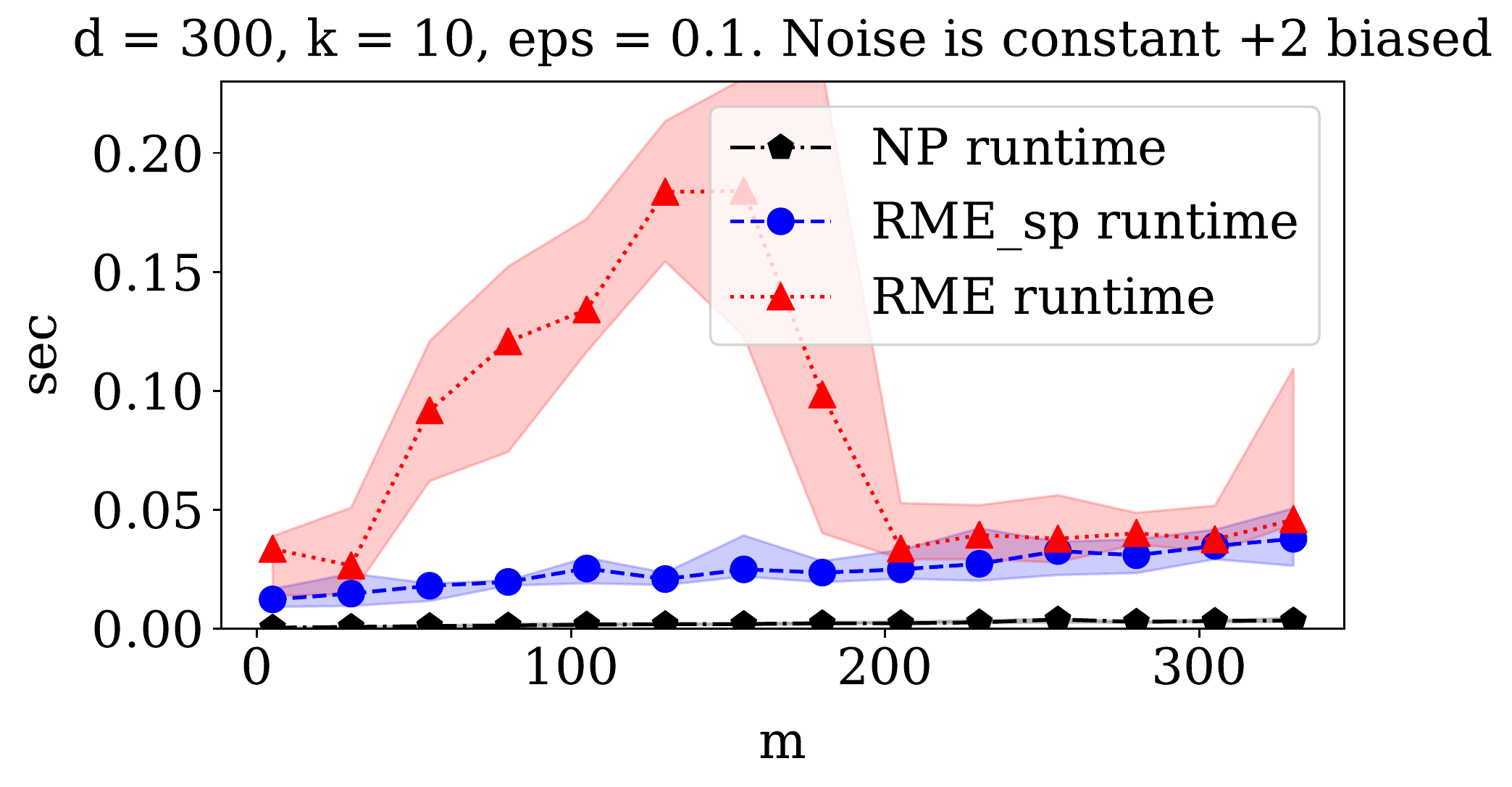}
		\caption{ Not only does \RMEnonsp{} not have small error for small sample complexity, interestingly it also takes longer to terminate.}  
	\end{subfigure}
	~ 
\begin{subfigure}[t]{0.45\textwidth}
 	\includegraphics[width=\textwidth]{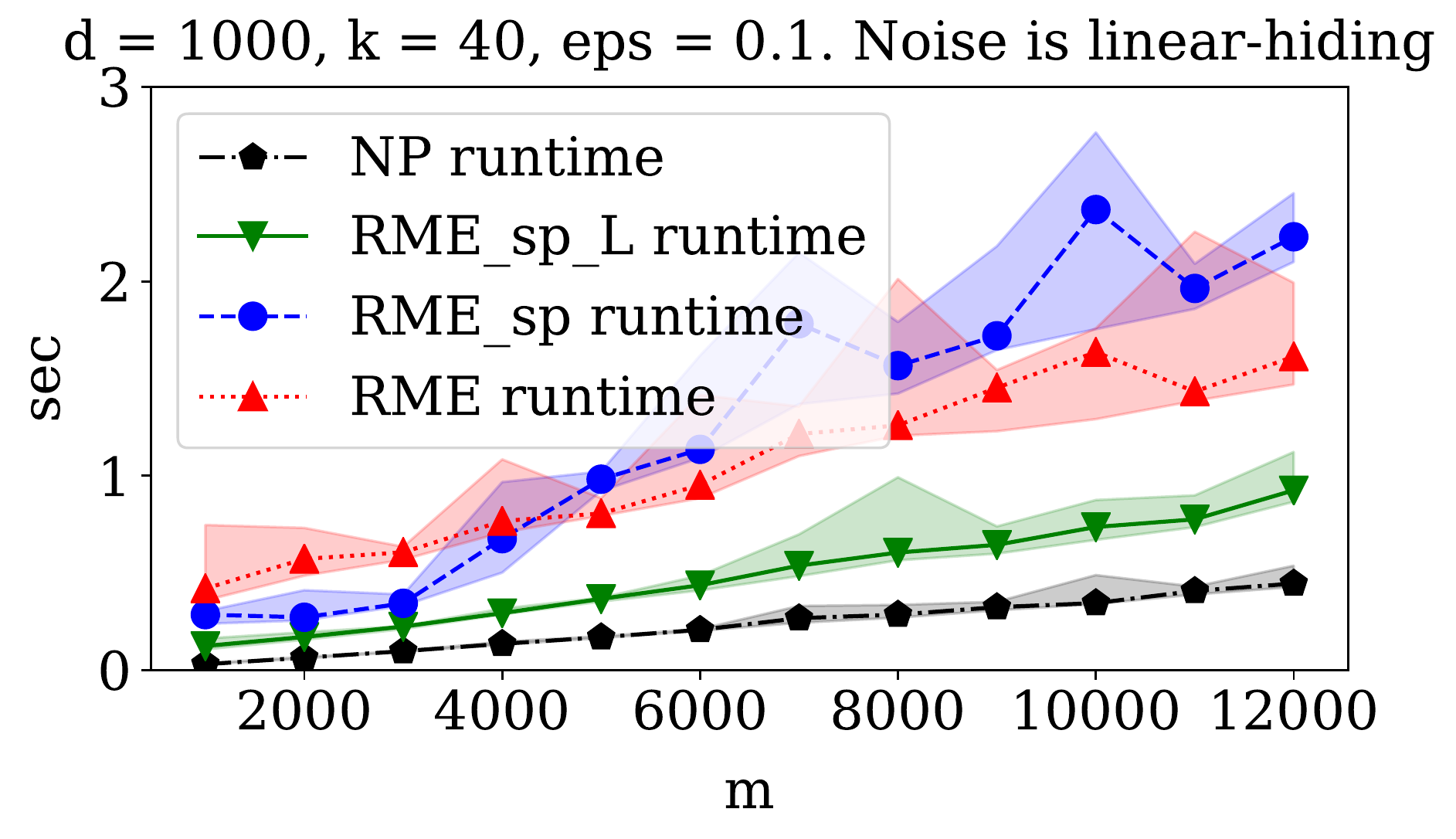}
	\caption{\RMEsp{} takes time close to \RMEspL{} until the quadratic filter begins to apply (as can be seen in Figure~\ref{fig:bimodal}), after which it takes much longer.}
\end{subfigure}
~ 

	\begin{subfigure}[t]{0.45\textwidth}
		\includegraphics[width=\textwidth]{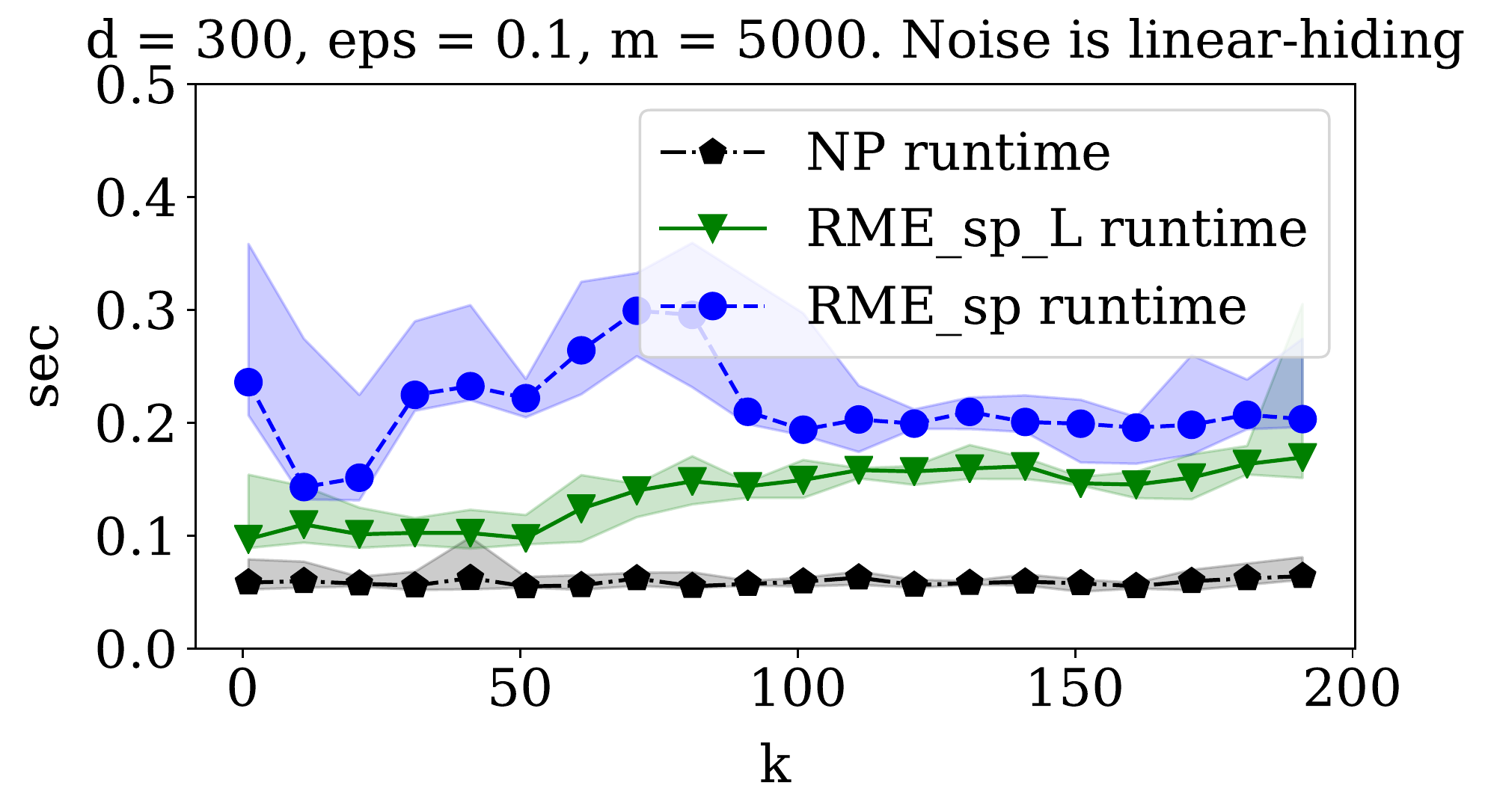}
		\caption{The runtimes for our sparse algorithms does not change very much as we increase $k$ for the linear-hiding noise.}
	\end{subfigure}
\begin{subfigure}[t]{0.45\textwidth}
	\includegraphics[width=\textwidth]{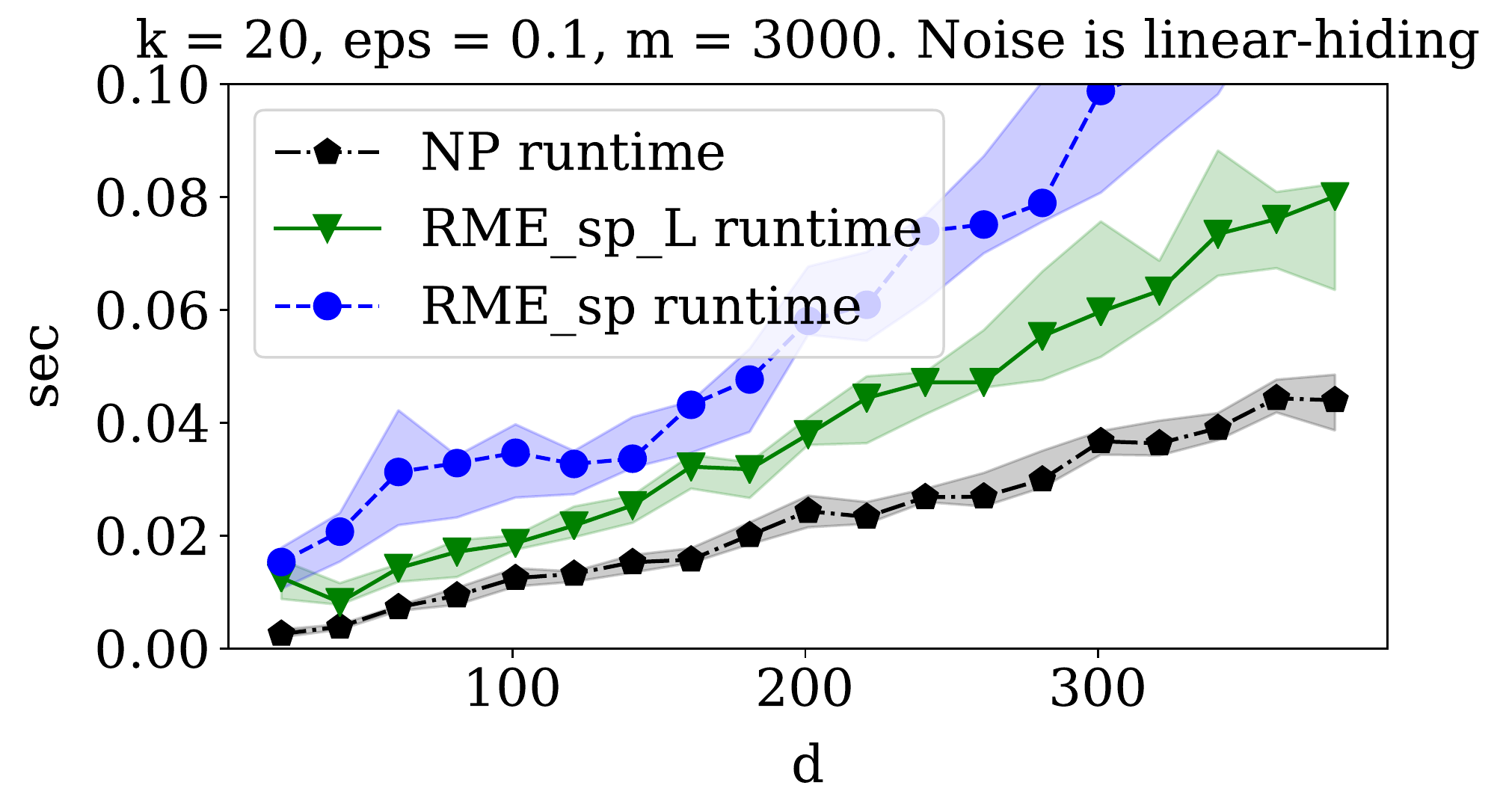}
	\caption{The runtimes for our sparse algorithms appears to increases with $d$ linearly the case of linear-hiding noise.}
	
\end{subfigure}

		\caption{Runtimes for robust mean estimation.}
\label{fig:runtimes}
\end{figure}

The performance of robust estimation algorithms depend heavily on the
noise model.  The ``hard'' noise distributions for one algorithm may
be easy for a different algorithm, if that one can identify and filter
out the outliers.  We therefore consider three different synthetic
data distributions: two that demonstrate the $\eps \sqrt{k}$
worst-case performance of other algorithms, and one that demonstrates
the $\eps \sqrt{\log(1/\eps)}$ performance of our full algorithm.

The algorithms we consider are \RMEsp{}, our algorithm; \RMEspL{}, a
version of our algorithm with only the linear filter and not the
quadratic one; \NPsp{}, the ``naive pruning'' algorithm that drops
samples with obviously-outlier coordinates, then outputs the empirical
mean; \oracle{}, which is told exactly which coordinates are inliers and
outputs their empirical mean; \RMEnonsp{}, which applies the non-sparse
robust mean estimation algorithm of~\cite{DKK+17}; and \RANSAC{}, which computes the mean of a randomly chosen set of points, half the size of the entire set. One mean is preferred to another if it has more points in a ball of radius $\sqrt{d + \sqrt{d}}$ around it. For algorithms that have non-sparse outputs, we sparsify to the largest $k$ coordinates before measuring the $\ell_2$ distance to the true mean.

Our distributions are:

\begin{itemize}
\item \textbf{Constant-bias noise.}  Noise that biases every
  coordinate consistently (e.g., if the outliers add $2$ to every
  coordinate, or set every coordinate to $\mu_i + 1$) is difficult for
  naive algorithms (such as coordinate-wise median, \NPsp{}, \RANSAC{})
  to deal with, but ideal for the linear filter.  In
  Figure~\ref{fig:const-bias} we consider the noise that adds $2$ to
  every coordinate.
\item \textbf{Linear-hiding noise.}  To demonstrate that the quadratic
  filter in our algorithm is necessary, we use the following data
  distribution.  The inliers are drawn from $\normalpdf(0, I)$.  The outliers
  are evenly split between two types: $\normalpdf(1_S, I)$ for some size-$k$
  set $S$, and $\normalpdf(0, 2I - I_S)$.  The diagonal of the empirical
  covariance does not reveal $S$, so our linear filter fails to prune
  anything, leading to $\eps \sqrt{k}$ error for \RMEspL{}; the quadratic filter
  successfully removes all the outliers.  This is shown in
  Figure~\ref{fig:bimodal}.
\item \textbf{Flipping noise.}  For both those types of noise, with
  sufficiently many samples our final algorithm will prune out
  essentially all the outliers; there also exist noise models where
  $\Omega(\eps\sqrt{\log(1/\eps)})$ noise will remain at all times.
  In Figure~\ref{fig:flipping} we demonstrate this for the noise model
  that picks a $k$-sparse direction $v$, and replaces the $\eps$
  fraction of points furthest in the $-v$ direction with points in the
  $+v$ direction.  In fact, for this noise even the \oracle{} method
  also has $\Omega(\eps \sqrt{\log (1/\eps)})$ error from the missing
  points, but our algorithm has twice the error from the unfilterable
  added points.
\end{itemize}


\paragraph{Discussion.}  Matching our theoretical results, with
sufficiently many samples the worst-case performance of \RMEsp{} seems
to be within a constant factor of the $O(\eps\sqrt{\log(1/\eps)})$
worst-case performance of \oracle{}.  This is not true for the naive
algorithms \NPsp, \RANSAC, or the simplification \RMEspL{} of our
algorithm, which all have an $\eps \sqrt{k}$ dependence. 
While our theoretical results show that $\Ot(k^2)$ samples suffice,
the empirical results given in Figure~\ref{fig:sample_complexity} are
consistent with $\Ot(k)$ being sufficient.

Our algorithm runs much faster than the ellipsoid based approach. For instance for $k = 10, d = 300, m = 50$ for the case of constant-biased noise our algorithm takes time $0.015$ seconds to finish. In comparison the very first iteration for the SDP-based solution takes $10$ seconds to solve with CVXOPT; the full ellipsoid-based algorithm, if implemented, would take many times that.

\subsection{Robust Sparse PCA}

\begin{figure}
	\centering
	
	\begin{subfigure}[t]{0.45\textwidth}
		\includegraphics[width=\textwidth]{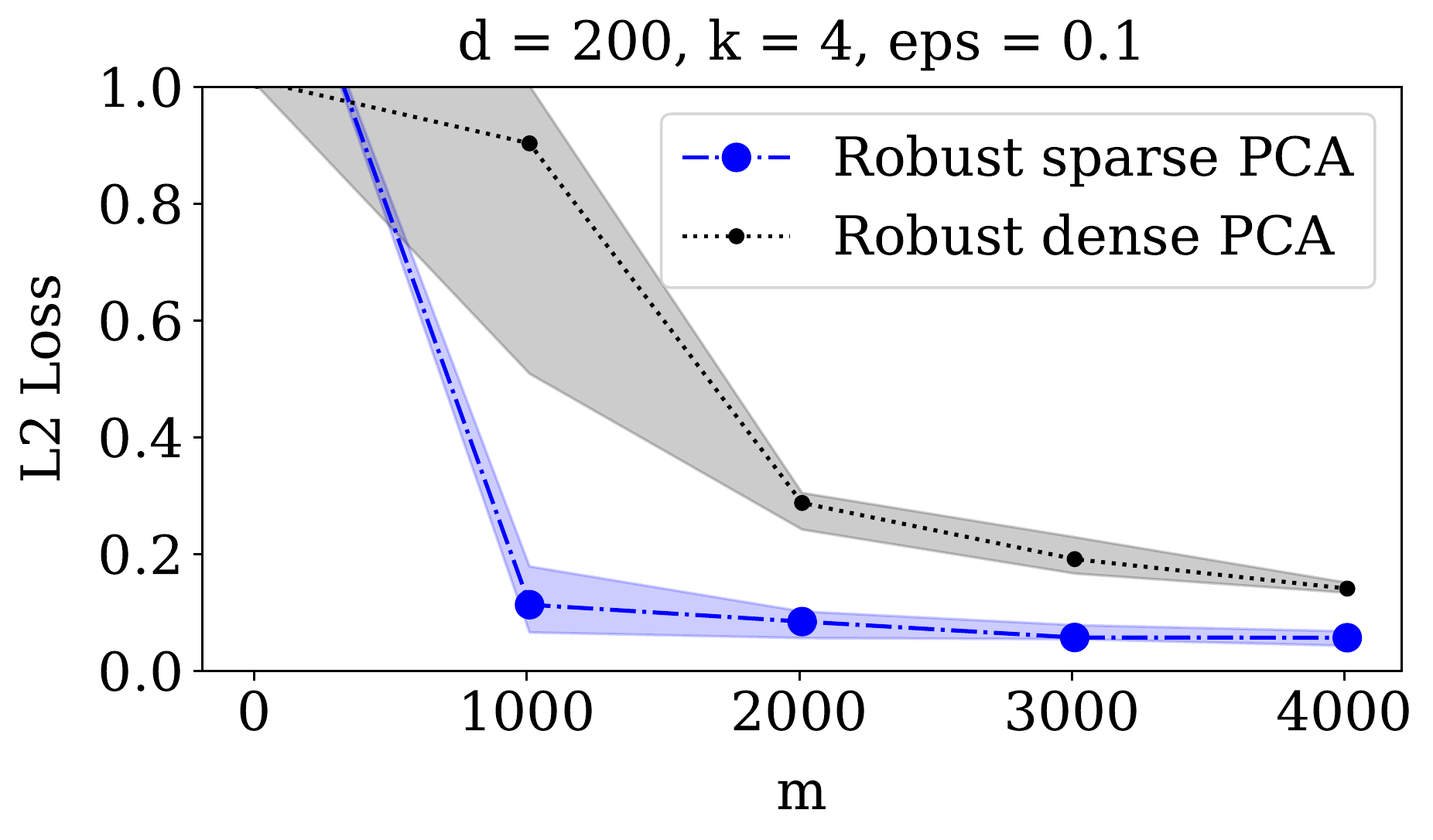}
		\caption{The natural dense algorithm \RDPCA{} requires more samples than the sparse algorithm to get error $< 0.1$}
	\end{subfigure}
~\begin{subfigure}[t]{0.45\textwidth}
	\includegraphics[width=\textwidth]{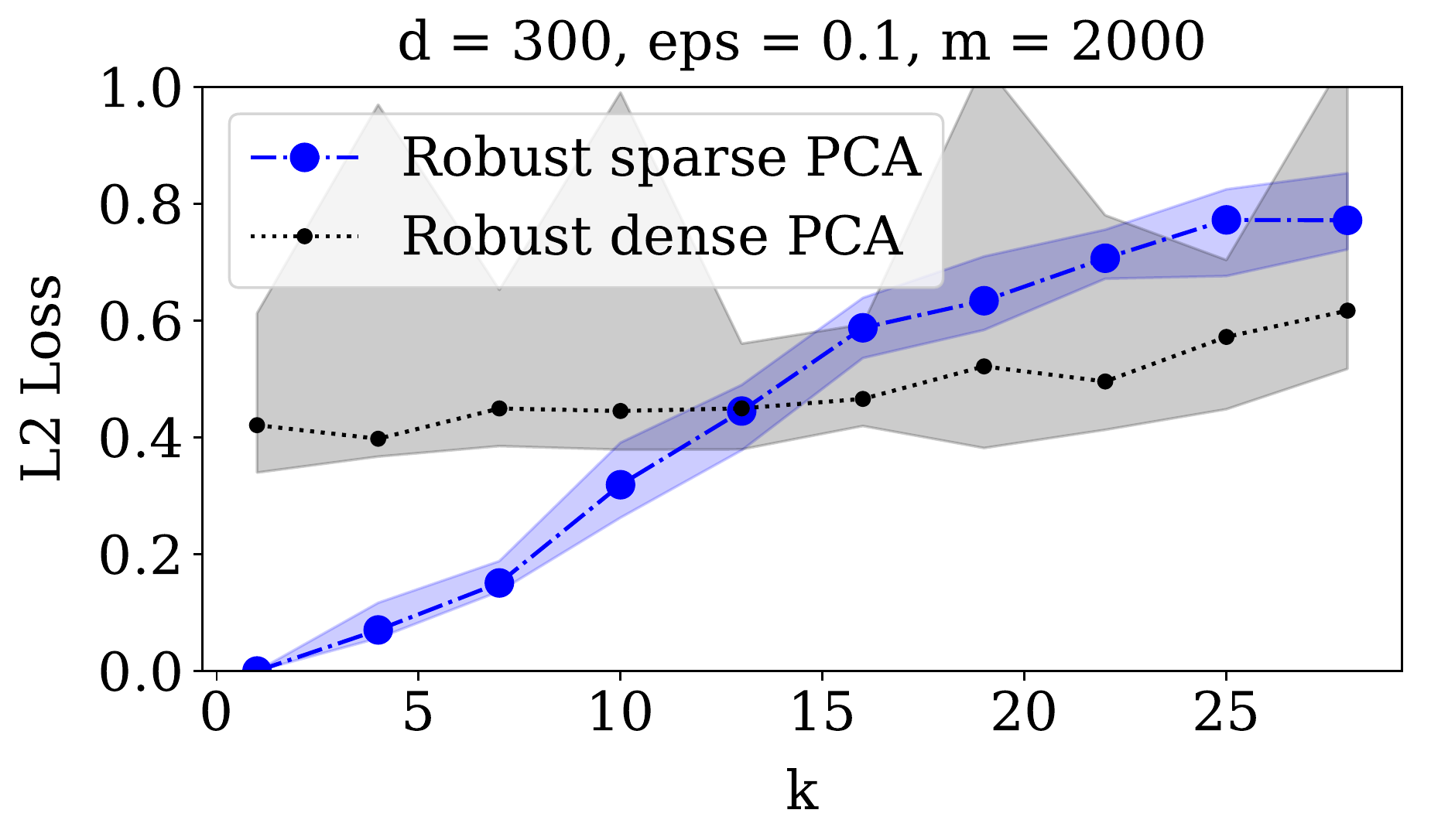}
	\caption{For a fixed $m$,  \RSPCA{} performs better than \RDPCA{} when $k < \sqrt{d}$ and then performs worse. until coming close to \RDPCA. Note that the variance of \RSPCA{} is smaller than that of \RDPCA{}.}
\end{subfigure}
	\caption{Sample complexity of \RSPCA{} is better than \RDPCA{} for smaller sparsity.}
	\label{fig:rspca}
\end{figure}

In Figure~\ref{fig:rspca} we compare our robust sparse PCA algorithm
\RSPCA{} to a dense algorithm \RDPCA{} for robust PCA. \RDPCA{} looks
at the empirical covariance matrix and then in the direction of
maximum variance robustly estimates standard deviation. The algorithm
then filters points using a modified version of the linear filter from
\cite{DKK+17} and hence requires a sample complexity of
$\tilde{O}(d)$.  For this algorithm, we only consider a single simple
noise model.  We draw outlier samples from $\normalpdf(0, I+uu^T)$
where $u$ has disjoint support from the true vector $v$.

The sparse algorithm seems to perform better than the dense algorithm
for $k$ up to roughly $\sqrt{d}$; this is better than what we can
prove, which is that it should be better up to at least $d^{1/4}$.


\section{Conclusion}

In this paper we have presented iterative filtering algorithms for two
natural robust sparse estimation tasks: sparse mean estimation and
sparse PCA.  In both cases, our algorithm achieves a near-optimal
$\Ot(\eps)$ error with a sample complexity primarily dependent on the
sparsity $k$, and only logarithmically on the ambient dimension $d$.
Our theoretical results are comparable to those of~\cite{BDLS17}, but
our algorithm only uses simple spectral techniques rather than the
ellipsoid algorithm.  This makes our algorithm quite feasible to
implement.  Our implementations perform essentially as expected: in
sparse settings they require significantly fewer samples than dense
robust estimation, and have accuracy avoiding the $\sqrt{k}$
dependence of other techniques like RANSAC.

\bibliographystyle{alpha}
\bibliography{allrefs}

%

\appendix
\section{ Proof of Lemma~\ref{lem:samples-good-mean}}

	Let $G$ be a set of $N=\todo{\widetilde{\Omega} \left(k^2 \log(d/\tau)/\eps^2 \right)}$
	i.i.d. samples drawn from $\normalpdf(\mu,I)$.  We will show that each of Conditions (i)-(iv)
	hold with probability at least $1-\tau/5$. The lemma then follows by a union bound.

	\paragraph{Proof of (i):}  
	To establish (i), let $\mu^G:= \E_{X \in_u G}[X]$ and note that the random variable $N \mu^G$ 
	is distributed as $\normalpdf(N \cdot \mu, N \cdot I)$. Hence, $\mu^G$ has independent coordinates 
	with $N \mu^G_i \sim \normalpdf(N \cdot \mu_i, N)$. By standard Gaussian tail bounds, we have that 
	$\pr \left[ \left| N (\mu^G_i - \mu_i) \right| \geq T \sqrt{N}\right] \leq 2 \cdot \exp (-T^2/2)$.
	Setting $T/\sqrt{N} = \eps/k$ gives that 
	$\pr \left[|\mu^G_i-\mu_i| \geq \eps/k \right] \leq 2 \cdot \exp (-N\eps^2/(2k^2)) \leq \tau/(10d).$
	By a union bound over all $i \in [d]$, it follows that 
	\[\pr \left[ \exists i \in [d] :  |\mu^G_i-\mu_i | \geq \eps/k \right]  \leq \tau/10 \;.\] 
	This completes the proof of the first part of (i).
	
	For the second part of (i), we will show that with probability at least $1-\tau/10$ we have that 
	for all $i, j \in [d]$, $|\E\left[(X_i-\mu_i)(X_j-\mu_j)\right] - \delta_{ij}| \leq \eps/k$.
	We will need the following simple technical fact:
	\begin{fact}[see, e.g.,~\cite{LM00}]
		Let $Y_i$ be iid standard univariate Gaussians and $a_i \geq 0$, $i \in [m]$. 
		If $Z = \sum_{i=1}^m a_i (Y_i^2 - 1)$, then  for any $x \geq 0$ the following hold:
		\begin{equation} \label{eqn:chi2}
		\pr \left[Z \geq 2 \|a\|_2 \sqrt{x} + 2\|a\|_{\infty}x \right] \leq \exp(-x) \;,
		\end{equation} 
		and 
		\begin{equation}
		\pr \left [Z \leq -2 \|a\|_2 \sqrt{x} \right] \leq \exp(-x) \;.
		\end{equation}
	\end{fact} 
	
	We start with the case that $i=j$. 
	Note that the random variable $N \cdot \E_{X \in_u G} \left[(X_i-\mu_i)^2 \right]$ follows a
	$\chi^2$-distribution with $N$ degrees of freedom, i.e., it is the sum of $N$
	independent squared standard Gaussians. 
	An application of Equation~\eqref{eqn:chi2} implies that for all $x \geq 0$ we have:
	\[\pr \left [ \left| N \cdot \E_{X \in_u G} \left[(X_i-\mu_i)^2 \right]-N \right| \geq 2 \sqrt{Nx} + 2x \right ] \leq \exp(-x).\] 
	Setting $x: = N\eps^2/(9k^2)$, we get that 
	\[ \pr \left[\left|\E_{X \in_u G} \left[(X_i-\mu_i)^2 \right]-1\right| \geq 2\eps/(3k) + 2\eps^2/(9k^2) \right] 
	\leq \exp\left(-N\eps^2/(9k^2)\right) \leq \tau/(10d^2).\] 
	
	We now analyze the case that $i \neq j$.
	Let $Y \sim \normalpdf(\mu,I)$. 
	Note that for $i \neq j$, $i, j \in [d]$, we have that 
	\[ (Y_i-\mu_i)(Y_j-\mu_j)=\left(\frac{(Y_i-\mu_i)}{2} + \frac{(Y_j-\mu_j)}{2}\right)^2 
	- \left(\frac{(Y_i-\mu_i)}{2} - \frac{(Y_j-\mu_j)}{2}\right)^2.\]
	Since $\frac{(Y_i-\mu_i)}{2} + \frac{(Y_j-\mu_j)}{2}$ and $\frac{(Y_i-\mu_i)}{2} - \frac{(Y_j-\mu_j)}{2}$ 
	are independent and distributed as $\normalpdf(0,1/2)$, for $i \neq j$,  the random variable 
	$N \cdot \E_{X \in_u G} \left[(X_i-\mu_i)(X_j-\mu_j)\right]$ 
	is distributed as the difference of a sum of $N$ independent squared zero-mean Gaussians 
	with variance $1/2$, and another such sum. 
	This random variable has expectation $0$ and once again, 
	by Equation~\eqref{eqn:chi2} applied with $a_i = 1/2$, it follows that 
	\[\pr \left[\left| N \cdot \E_{X \in_u G} \left[(X_i-\mu_i)(X_j - \mu_j) \right] \right| \geq 2 \sqrt{Nx} + x \right] \leq \exp(-x) \;.\]
	Setting $x: = N\eps^2/(9k^2)$ as above gives that
	\[ \pr \left[\left|\E_{X \in_u G} \left[(X_i-\mu_i)(X_j - \mu_j) \right] \right| \geq \eps/k \right] \leq \tau/(10d^2).\] 
	A union bound over all $i, j \in [d]$ implies that
	\[ \pr\left[ \exists i, j \in [d]: \left|\E_{X \in_u G} \left[ (X_i-\mu_i)(X_j - \mu_j)\right] -  \delta_{ij} \right| \geq \eps/k \right] \leq \tau/10.\] 
	This gives the second part of (i).
	By a union bound, Condition (i) holds with probability at least $1-\tau/5$.

	\paragraph{Proof of (ii):} 
	For $Y \sim \normalpdf(\mu,I)$,
	the standard Gaussian tail bound gives 
	$\pr \left[ \left|Y_i-\mu_i \right| \geq T\right] \leq 2 \exp\left(-T^2/2\right)$. 
	Setting $T=\sqrt{2 \ln\left(10Nd/\tau\right)}$ implies that
	$\pr \left[\left|Y_i-\mu_i\right| \geq T\right] \leq \tau/(10Nd)$. 
	By a union bound, the desired upper bound holds for all $i \in [d]$ and all $N$ samples
	with probability at least $1-\tau/10$.

	\paragraph{Proof of (iii):} 
	To establish (iii), we first prove that Conditions (iii)(a)-(c) hold for any fixed unit vector $v$ and threshold $T \geq 4$ 
	with sufficiently high probability, and then take a union bound over a net of $2k^2$-sparse unit vectors and thresholds. 
	
	To avoid clutter in the notation, we will denote $\delta \eqdef \frac{\eps^2}{\ln(k\ln(Nd/\tau))}$, so that 
	the second term in the RHS of Condition (iii)(c) is equal to $\delta/T^2$.
	
	We start by proving the following claim:
	\begin{claim} \label{clm:1Diii} 
		For any unit vector $v$ in $\R^d$ and threshold $T \geq 4$ 
		with probability at least $1-\exp\left(-\Omega\left(\frac{N\delta}{\log(1/\delta)}\right)\right)$, 
		we have that 
		(a) $|\E_{X \in_u G}[v \cdot (X-\mu)]| \leq O(\eps)$, (b) $|\E_{X \in_u G}[(v \cdot (X-\mu))^2]-1| \leq O(\eps)$, 
		and (c) $\pr_{X \in_u G}[|v \cdot (X - \mu)| \geq T] \leq (5/2) \cdot \erfc(T/\sqrt{2}) + \delta/(2T^2)$.
	\end{claim}
	\begin{proof}
		\todo{
			To prove (a), note that for each fixed unit vector $v \in \R^d$, 
			$N \E_{X \in_u G}[v \cdot (X-\mu)]$ is distributed as $\normalpdf(0, N)$.
			By standard Gaussian tail bounds, we have that 
			\[\pr \left[|\E_{X \in_u G}[v \cdot (X-\mu)]| \geq \eps \right] \leq 
			2 \cdot \exp (-N\eps^2/2) \ll \exp\left(-\Omega\left(N\delta/\log(1/\delta)\right)\right)\;,\]
			where the last inequality follows from the fact that $\delta \ll \eps^2$.
		}
		
		To prove (b), note that for each fixed unit vector $v \in \R^d$ the random variable 
		$N \cdot \E_{X \in_u G}[(v \cdot (X-\mu))^2]$ follows a $\chi^2$-distribution 
		with parameter $N$. By Equation~\eqref{eqn:chi2}, we get 
		\[ \pr \left[|N\cdot \E_{X \in_u G}[(v \cdot (X-\mu))^2]-N| \geq 2 \sqrt{Nx} + 2x \right] \leq \exp(-x) \;,\]
		for $x \geq 0$. Applying the above inequality for $x := N\eps^2/9$, we get  
		\[\pr \left[ \left|\E_{X \in_u G}[(v \cdot (X-\mu))^2\right]-1| \geq 2\eps/3+ (2/9)\eps^2 \right] \leq \exp\left(-N\eps^2/9 \right).\]
		
		To prove (c), we start by noting that, for any fixed unit vector $v$ and $Y \sim \normalpdf(\mu, I)$, 
		$v \cdot (Y - \mu)$ is a standard univariate Gaussian, and therefore 
		$\pr[|v \cdot (Y - \mu)| \geq T] = 2 \erfc(T/\sqrt{2})$. Let 
		\[Q(T) \eqdef (5/2) \erfc(T/\sqrt{2}) + \delta/(2T^2) \;.\] 
		Observe that $N \cdot \pr_{X \in_u G}[|v \cdot (X - \mu)| \geq T]$ 
		is a sum of $N$ independent Bernoulli random variables each with mean $2 \erfc(T/\sqrt{2})$. 
		An application of the Chernoff bound and the fact that 
		$Q(T) \geq (5/4) \left[ 2\erfc(T/\sqrt{2}) \right]$ gives that 
		$\pr_{X \in_u G}[|v \cdot (X - \mu)| \geq T] \geq Q(T)$ holds with probability at most $\exp\left(-\frac{N Q(T)}{60}\right)$.
		
		We choose $T'$ to satisfy $\erfc(T')= \delta^2/(4T'^4)$, which implies that  $T'=\Theta( \sqrt{\ln(1/\delta)})$. 
		We break the analysis into two cases: $T \leq T'$ or $T > T'$. 
		
		If $T \leq T'$, then $Q(T) \geq Q(T') \geq \delta/(2T'^2) = \Omega( \frac{\delta}{\log(1/\delta)})$ 
		and the above upper bound of $\exp\left(-\frac{N Q(T)}{60}\right)$ on the desired probability gives (c).
		
		If $T > T'$, we have that $\erfc(T/\sqrt{2}) \leq \delta^2/(4T^4)$. 
		In this case, we require a more precise version of the Chernoff bound, 
		which bounds from above the probability of the event 
		$\pr_{X \in_u G}[|v \cdot (X - \mu)| \geq T] \geq Q(T)$ 
		by $\exp\left(-N \cdot D_{KL}(Q(T)||2\erfc(T/\sqrt{2}))\right)$, 
		where $D_{KL}(p||q)$ denotes the KL-divergence between the Bernoulli random variables 
		with probabilities $p$ and $q$. 
		
		Let $p= \delta/(2T^2)$, $q=2 \erfc(T/\sqrt{2})$, and note that $q \leq p^2$ or $p/q \geq q^{-1/2}$. Then, since $T > 4$, we see $p < 1/32$ and we can now bound from below the KL-divergence by $\delta/10$, as follows:
		\begin{align*}
		D_{KL}(Q(T)||q) & \geq D_{KL}(p||q) = p\ln (p/q) + (1-p) \ln \left((1-p)/(1-q)\right) \\
		&\geq p\ln (p/q) - \ln(1-p) \geq p \left(\ln (p/q)-1- p \right) \\
		&\geq \delta/(2T^2) \cdot \left( \ln(1/q^{1/2}) - 1 - p)\right) \\
		& \geq \delta/(2T^2) \cdot \left(T^2/8 - T^2/16 \right) \geq \delta/16 \;,
		\end{align*}
		Thus, we have that $\pr_{X \in_u G}[|v \cdot (X - \mu)| \geq T] \geq Q(T)$ with probability at most 
		$\exp(-\Omega(N \delta))$ in this case. This completes the proof of (c).
		
		By a union bound, all events hold with probability at least 
		$1-\exp\left(-\Omega\left(\frac{N\delta}{\log(1/\delta)}\right)\right)$, 
		completing the proof of Claim~\ref{clm:1Diii}.
	\end{proof}

	We now define a cover over all $2k^2$-sparse vectors 
	as well as the possible values of $T$, and take a union bound over the product. 
	To this end, let \[R \eqdef \Theta\left(k\cdot \sqrt{\log\left (Nd/\tau\right)}\right)\] be such that by (ii) 
	we have $\|x-\mu\|_\infty \leq \frac{R}{\sqrt{2}k}$, for $x \in G$.

	For each set $U \subseteq [d]$ of coordinates of size $2k^2$, let $\mathcal{C}_U$ be an $\eps/R^2$-cover, in $\ell_2$-norm,
	of the set of unit vectors supported on $U$ (i.e., with all non-zero coordinates in $U$). 
	Such a cover exists with $|\mathcal{C}_U| \leq O\left(R^2/\eps\right)^{2k^2}$. 
	Let $\mathcal{C}$ be the union of $\mathcal{C}_U$ over all sets $U$ of coordinates of size $2k^2$. 
	Then we have that 
	\[|\mathcal{C}| \leq {d \choose 2k^2} \cdot O(R^2/\eps)^{2k^2} \leq O\left(dR^2/\eps\right)^{2k^2} \;.\]

	Let $\mathcal{T} := \{ \sqrt{i\eps} \mid i \in \Z_+, 0 \leq i \leq R^2/\eps^2\}$ be a net over thresholds $T$. 
	Note that $|\mathcal{C}| \cdot |\mathcal{T}| \leq O(dR^2/\eps)^{2k^2+2}$. 
	By a union bound, Claim~\ref{clm:1Diii} holds for all $v \in \mathcal{C}$ and $T \in \mathcal{T}$ 
	except with probability at most 
	\begin{align*}
	&O\left((dk^2/\eps)\log(Nd/\tau)\right)^{2k^2+2} \cdot \exp(\Omega(-N\delta/\log(1/\delta)) \\
	& =\exp\left(O(k^2 \log\left(dk \log(d/\tau)/\eps)\right) - \Omega\left(N\eps^2/\log^3(k/\eps\log(d/\tau))\right)\right) \leq \tau/10 \;,
	\end{align*}
	where we used the fact that $N=\todo{\widetilde{\Omega} \left(k^2 \log(d/\tau)/\eps^2 \right)}$.
	It remains to prove (iii) assuming this event holds.
	
	By definition, for any $k^2$-sparse unit vector $v \in \R^d$, there exists a $v' \in \mathcal{C}$ 
	such that $\|v'-v\|_2 \leq \eps/R^2$ and such that $v'-v$ is also $k^2$-sparse. 
	Thus, for any $x \in G$, we have
	\begin{align*} 
	|v \cdot (x-\mu)-v' \cdot (x-\mu)| &\leq \|v'-v\|_1 \|x-\mu\|_{\infty} \\
	&\leq \todo{\sqrt{2}} k \|v'-v\|_2 R/\sqrt{2}k \leq \eps/R \;.
	\end{align*}
	Therefore, for the mean we have that 
	$|\E_{X \in_u G}[v \cdot X]| \leq |\E_{X \in_u G}[v' \cdot X]| + \frac{\eps}{R} \leq O(\eps)$. 
	This gives Condition (iii)(a). 
	
	To establish Condition (iii)(b), we note that 
	for any $x \in G$, we have
	\begin{align*}
	|(v \cdot (x-\mu))^2-(v' \cdot (x-\mu))^2| &\leq 
	O\left( \left|v \cdot (x-\mu)-v' \cdot (x-\mu)\right| \left(|v \cdot (x-\mu)|+|v' \cdot (x-\mu)|\right) \right) \\
	&\leq O(\eps/R) \cdot O(k \cdot R/k) \\
	&\leq O(\eps) \;,
	\end{align*}
	where the second line uses the fact that $|v \cdot (x-\mu)| \leq \|v\|_1 \|x-\mu\|_{\infty} \leq k  \|x-\mu\|_{\infty} \leq R$.
	Therefore, we have that 
	$|\E_{X \in_u G}[(v \cdot (X-\mu))^2] -1| \leq |\E_{X \in_u G}[(v' \cdot (X-\mu))^2] -1| + O(\eps) = O(\eps)$. 
	This gives Condition (iii)(b).
	
	We now prove Condition (iii)(c).
	Consider the event $\{x \in G: |v \cdot (x-\mu)| \geq T\}$ for $T \geq \sqrt{2\ln(1/\eps)+2}$. 
	First note that this event is contained in the event $\{x \in G: |v' \cdot (x-\mu)| \geq T-\eps/R \}$. 
	Moreover, note that the event is empty, unless $T \leq \|v\|_1\|x-\mu\|_\infty \leq R$, in which case 
	$(T-\eps/R)^2 \geq T^2- \todo{2}\eps$. 
	Therefore, by the definition of $\mathcal{T}$, 
	there is a $T' \in \mathcal{T}$ with $T^2-\todo{2}\eps \leq T'^2 \leq (T-\eps/R)^2$.
	Then we have
	\begin{align*}
	\pr_{X \in_u G}\left[|v \cdot (X-\mu)| \geq T\right] &\leq 
	\pr_{X \in_u G}[|v' \cdot (X-\mu)| \geq T-\eps/\todo{R}] \\
	&\leq  \pr_{X \in_u G}[|v' \cdot (X-\mu)| \geq T'] \\
	& \leq 5\erfc(T')/2 + \delta/(2T'^2) \\
	& \leq 5 \erfc\left(\sqrt{T^2-\todo{2}\eps}\right)/2 + \delta/(2(\todo{T^2-2\eps})) \\
	& = (5/(2\sqrt{2 \pi})) \int_{\sqrt{T^2-\todo{2}\eps}}^\infty \exp(-x^2/2) dx +\delta/T^2\\
	& = (5/(2\sqrt{2 \pi})) \int_{T}^\infty \exp(-(y^2-\todo{2}\eps)/2) (y/\sqrt{y^2-\todo{2}\eps}) dy +\delta/T^2\\
	& = (5/(2\sqrt{2 \pi})) \int_{T}^\infty  \exp(\eps) \exp(-y^2/2) (1+O(\eps)) dy +\delta/T^2\\
	&\leq (5/(2\sqrt{2 \pi})) \int_{T}^\infty   (1+O(\eps)) \exp(-y^2/2) dy  +\delta/T^2\\
	& \leq 3 \erfc(T/\sqrt{2}) +\delta/T^2 \;,
	\end{align*}
	where the third line follows from Claim~\ref{clm:1Diii}(c) applied for $(v', T')$.
	This completes the proof of Condition (iii)(c).
	
	\paragraph{Proof of (iv):} 
	At a high-level, the proof is similar to that of Condition (iii) above.
	We start by proving that Conditions (iv)(a)-(b) hold for any fixed degree-$2$ polynomial 
	and threshold $T$ with sufficiently high probability, 
	and then take a union bound over a net of $k^2$-sparse $p(x)$ and $T$. 
	
	Note that a homogeneous degree-$2$ polynomial can be written as 
	$p(x)=(x-\mu)^T A (x-\mu)$, for a symmetric matrix $A$, 
	in which case we have $\E_{Y \sim \normalpdf(\mu, I)}[p(Y)]=\Tr(A)$ and 
	$\Var_{Y \sim \normalpdf(\mu, I)}[p(Y)]=\|A\|_F^{\todo{2}}$.
	
	We start by establishing the following claim:
	\begin{claim} \label{clm:1Div}
		Let $Y \sim \normalpdf(\mu, I)$. 
		Given a homogeneous degree-$2$ polynomial $p(x)$ with $\Var[p(Y)]=1$ and $T$ 
		with $4 \leq T \leq R\eqdef \Theta(k\cdot \sqrt{\log(Nd/\tau)})$, 
		we have that: (a) $\left|\E_{X \in_u G}[p(X)] - \E_{Y \sim \normalpdf(\mu, I)}[p(Y)]\right| \leq O(\eps)$, and
		(b) $\pr_{X \in_u G}[|p(X) - \E[p(Y)]| \geq T] \leq 2 \exp(-T/4)+\eps^2/(2 T \ln^2 T)$, except with probability 
		at most $\exp(-\Omega(N\eps^2/\ln^2 (R/\eps))$.
	\end{claim}

	\begin{proof}
		By diagonalizing $A$, we can write $p(Y)=c+\sum_{i=1}^{d} a_i Z_i^2$, 
		where the $Z_i$ are independent and distributed as $\normalpdf(0,1)$ and $c, a_i$ 
		are real coefficients with $\sum_i a_i^2= \|A\|_F^2=2$. 
		Note that $N \E[p(X)]$ is a sum of $Nd$ independent squared Gaussians, 
		each of which has variance at most $\|A\|_F^{\todo{2}}=1$ and the $\ell_2$-norm 
		of all their variances is $\sqrt{N}\|A\|_F=\sqrt{N}$. 
		Equation~\eqref{eqn:chi2} gives that $\pr[|N\E_{X \in_u G}[p(X)]-N\E_{Y \sim \normalpdf(\mu, I)}[p(Y)]| \geq 2 \sqrt{Nx} + 2 x] \leq \exp(-x)$, 
		for $x \geq 0$. Taking $x:=N \eps^2$, we obtain that 
		\[\pr\left[|\E_{X \in_u G}[p(X)]-\E_{Y \sim \normalpdf(\mu, I)}[p(Y)]| \geq 2\eps + 2 \eps^2 \right] \leq \exp(-N\eps^2).\]
		This shows (a).
		
		We proceed to prove (b).
		By Equation~\eqref{eqn:chi2} applied for a single sample, we have that 
		$\pr_{Y \sim \normalpdf(\mu, I)}[|p(Y) - \E_{Y \sim \normalpdf(\mu, I)}[p(Y)]| \geq 2 \sqrt{x} + 2 x] \leq \exp(-x)$ for $x \geq 0$. 
		Taking $x:=T$ for $T \geq 4$, we have $2 \sqrt{T} \leq T$, and so 
		\[\pr_{Y \sim \normalpdf(\mu, I)}[|p(Y) - \E_{Y \sim \normalpdf(\mu, I)}[p(Y)]| \geq T] \leq \exp(-T/4).\] 
		Note that $N\pr_{X \in_u G}[|p(X) - \E_{Y \sim \normalpdf(\mu, I)}[p(Y)]| \geq T]$ is a sum of $N$ independent Bernoulli random variables
		each with expectation at most $\exp(-T/4)$. 
		Let \[Q(T) \eqdef 2 \exp(-T/4)+\eps^2/(2 T \ln^2 T) \;.\] 
		Since $Q(T) \geq 2 \pr[|p(Y) - \E[p(Y)]| \geq T]$, 
		by the multiplicative Chernoff bound we have that 
		$\pr_{X \in_u G}[|p(X) - \E_{Y \sim \normalpdf(\mu, I)}[p(Y)]| \geq T] \leq Q(T)$, except with probability at most $\exp(-NQ(T)/6)$. 
		
		Let $T'$ be such that $\exp(-T'/6) = \eps^2/(2 T' \ln^2(T'))$. 
		Note that $T'=\Theta(\log(1/\eps))$. 
		For $T \leq T'$, we have that $Q(T) \geq \eps^2/(T' \ln^2{T'})$, 
		and so $\exp(-NQ(T)/6) \geq \exp(\Omega(-N\eps/\log(1/\eps)(\log \log(1/\eps))^2))$.
		
		For $T \geq T'$, note that $\eps^2/(2T \ln^2(T)) \geq \exp(-T/6)$. 
		Again we need to use a more explicit version of the Chernoff bound, 
		which gives that $\pr_{X \in_u G}[|v \cdot (X - \mu)| \geq T] \geq Q(T)$ 
		with probability at most $\exp(-N D_{KL}(Q(T)||\exp(-T/4)))$. 
		
		When $T' \leq T \leq R$, $p=\eps^2/(2T\ln^2 T)$, and $q=2\exp(-T/4)$, we obtain
		\begin{align*}
		D_{KL}(Q(T)||q) &\geq D_{KL}(p||q) = p\ln(p/q) + (1-p) \ln((1-p)/(1-q)) \\
		&\geq p\ln(p/q) - \ln(1-p) \\
		&\geq p (\ln(p/q) - 1 -p) \\
		&= (\eps^2/(2T \ln^2 T)) (\ln(p/\exp(-T/4))-1-p) \\
		&\geq (\eps^2/(2T \ln^2 T)) (\ln(\exp(T/6))-1-p) \\
		&\geq (\eps^2/(2T \ln^2 T)) \cdot (T/7)\\
		&\geq \eps^2/(14 \ln^2 T) \geq \eps^2/(14\ln^2 R) \;,
		\end{align*}
		Where we used the fact that $4 \leq T \leq R$, and the fact that if $p < 1/8$, then $\ln(1-p) < p (1-p)$. Thus, it follows that $\pr_{X \in_u G}[|v \cdot (X - \mu)| \geq T] \geq Q(T)$ with probability at most $\exp(-\Omega(N\eps^2/\ln^2 R))$ in this case.
		In either case, by a union bound, the claim holds except with probability $\exp(-\Omega(N\eps^2/\ln^2 (R/\eps)))$.
		This completes the proof of (b) and of Claim~\ref{clm:1Div}.
	\end{proof}
	
	It remains to construct a cover of $k^2$-sparse homogeneous degree-$2$ polynomials 
	which have at most $k^2$ terms and $\Var[p(Y)]=1$. Let $U$ be the set of $k^2$ monomials $x_ix_j$, for $1 \leq i,j \leq d$. 
	We construct a cover $\mathcal{C}_U$ of polynomials with terms only in the monomials in $U$ as follows:
	We take a cover of unit vectors in $\R^{k^2}$ to within $\ell_2$-norm $\eps/R^2$ 
	and use the coordinates of each vector as the coefficients of the corresponding monomial. 
	Thus, we can take $|\mathcal{C}_U| = 2^{O(k^2)}$. 
	Then we let $\mathcal{C}$ be the union of $\mathcal{C}_U$ for all sets of $k^2$ monomials $U$. 
	We therefore have that $|\mathcal{C}| \leq {d \choose k^2} \cdot O(R^2/\eps)^{k^2} \leq O(dR^2/\eps)^{k^2}$. 
	
	Let $\mathcal{T} = \{i\eps: i \in \Z_+, 0 \leq i \leq R^2/\eps^2  \}$. 
	Thus, $|\mathcal{C}| \cdot |\mathcal{T}| \leq O(dR^2/\eps)^{k^2+1}$. 
	By a union bound, Claim~\ref{clm:1Div} holds for all $p \in \mathcal{C}$ and $T \in \mathcal{T}$, 
	except with probability at most 
	\begin{align*}
	&O(dk^2\log(Nd/\tau)/\eps)^{k^2+1} \cdot \exp(-\Omega(N\eps^2/\ln^2 (R/\eps)))\\
	& =\exp\left(O(k^2 \log(dk \log(N/\tau)/\eps)) - \Omega(N\eps^2/\ln^2 (k\log(Nd/\tau)/\eps))\right) \leq \tau/10 \;,
	\end{align*}
	where we used the fact that $N=\todo{\widetilde{\Omega} \left(k^2 \log(d/\tau)/\eps^2 \right)}$.
	It remains to prove (iv) assuming this event holds.
	
	Consider any homogeneous degree-$2$ polynomial $p(x)$ 
	with at most $k^2$ terms and $\Var[p(Y)]=1$. 
	By construction of the cover, there is a polynomial $p'(x) \in \mathcal{C}$ such that the total number of monomials 
	appearing in either $p(x)$ or $p'(x)$ is at most $k^2$, 
	and if we write $p(x)=(x-\mu)^T A (x-\mu)$ and $p'(x)=(x-\mu)^T A' (x-\mu)$ for symmetric matrices $A, A'$, 
	then $\|A-A'\|_F \leq \eps/R^2$. Let $U'$ be the set of coordinates appearing in either $p(x)$ and $p'(x)$
	and note that $|U'| \leq 2k^2$.
	For $x \in G$, we have
	\begin{align*}
	|p(x)-p'(x)| & = |(x-\mu)^T (A-A') (x-\mu)| \\
	&=|(x-\mu)_{U'}^T (A-A') (x-\mu)_{U'}| \\
	& \leq \|(x-\mu)_{U'}\|_{\todo{\infty}}^2 \|A-A'\|_2 \\
	& \leq R^2 \cdot \eps/R^2 \leq \eps \;.
	\end{align*}
	Therefore, we have that 
	$|\E[p(X)] - \E[p(Y)]| \leq |\E[p'(X)]-\E[p'(Y)]|+2\eps \leq O(\eps)$, 
	since Claim~\ref{clm:1Div} holds for $p'(x)$.
	We have thus established Condition (iv)(a).
	
	To show Condition (iv)(b), consider the event $\{x \in G: |p(x)| \geq T\}$ for $T > 5$. 
	Let $T' \in \mathcal{T}$ be such that $T-2\eps \leq T' \leq T-\eps$. Then, 
	$|p(x)| \geq T$ implies that $|p'(x)| \geq T'$, and therefore
	\begin{align*}
	\pr_{X \in_u G}[|p(X)| \geq T] &\leq \pr_{X \in_u G}[|p'(X)| \geq T'] \\
	& \leq 2 \exp(-T'/3)+\eps^2/(2 T' \ln(T')^2) \\
	& \leq 2 \exp(-(T-2\eps)/3) + \eps^2/(2 (T-2\eps) \ln^2 (T-2\eps)) \\
	& \leq 3 \exp(-T/4) + \eps^2/(T \ln^2 T) \;,
	\end{align*}
	where the second line follows from Claim~\ref{clm:1Div}(b) for $(p', T')$.
	This completes the proof of Condition (iv)(b).
	
	The proof of Lemma~\ref{lem:samples-good-mean} is now complete.

\section{Proof of Lemma~\ref{lem:samples-good-pca}}

	Condition~\ref{cond:coord_dev} follows from standard gaussian concentration bounds.
	To see that Condition~\ref{cond:mean} holds, we prove entrywise closeness of the matrices involved. We will use the following standard concentration inequality
	\begin{lemma}\label{lem:gauss_poly_consc}
		For any degree $q$, $n$-variate polynomial $f$
		\[ \pr_{A \sim \normalpdf(0, \Sigma)} \left[ \left|f(A) - \mathbb{E}[f(A)] \right| > \tau \right] \lesssim e^{-\left(\frac{\tau^2}{R \cdot \var_{A \sim \normalpdf(0, \Sigma)}[f(A)]}\right)^{1/q}} \] where $R$ is some universal constant.
	\end{lemma}
	
	Entries of $xx^T-(I+\rho vv^T)$ are degree 2 polynomials of Gaussians, and thus so is their mean over $G$. Hence, Lemma~\ref{lem:gauss_poly_consc} implies that for any $(i,j)\in [d]\times [d]$ that
	\[\pr\left[\left| \E_{G}[x_ix_j] - (\delta_{i,j} + \rho v_iv_j)\right| > \eps/k \right] \lesssim \exp\left( -(N\eps^2/Rk^2)^{1/2} \right).\] Taking a union bound over $i,j$ shows that with high probability $\E_G[xx^T]$ has each entry within $\eps/k$ of that of $\rho vv^T+I$, and this immediately implies Condition~\ref{cond:mean}.
	
	Condition~\ref{cond:var} holds via a similar argument. Observe that it is sufficient to consider the case  $\|w\|_2=1$ and sample enough points to satisfy
	\[ |\E_G[(x_ix_j - \delta_{i,j}-\rho v_iv_j)(x_kx_l - \delta_{k,l}-\rho v_kv_l)] - \E_{\mathcal{N}(0, I+\rho vv^T)}[(x_ix_j - \delta_{i,j}-\rho v_iv_j)(x_kx_l - \delta_{k,l}-\rho v_kv_l)] | \leq \frac{\eps}{k^2}. \]
	Then the spectral norm of the covariance matrix of $\gamma(x)$ for any $Q \times Q$ submatrix will also be bounded by $\eps$. Note that this is just the probability that a degree-$4$ polynomial in Gaussian inputs deviates too much from its mean, and thus by Lemma~\ref{lem:gauss_poly_consc} the probability that the above fails to hold for any $(i,j,k,l)$ is at most 
	
	\[
	\exp\left( -(N\eps^2/Rk^4)^{1/4} \right).
	\]
	Taking a union bound over $(i,j,k,l)$ yields our result.
	
	Finally, for Condition \ref{cond:concentration}, we note that (perhaps changing the constant $C$), it suffices to prove it for all $\binom{d^2}{k}$ possible $Q$'s and for all $w$ in a cover of the unit ball of $\R^{k^2}$ (which will have size $2^{O(k^2)}$ and for $T$ powers of $2$ less than or equal to $k\log(dN)$ (since by Condition \ref{cond:coord_dev} $|\vvec(xx^T)_Q| = O(k\log(dN))$ for all $x\in G$). Once we have fixed $Q,w$ and $T$, $\gamma_Q(x) \cdot  w -  \rho \vvec(vv^T)_Q \cdot w$ is a mean $0$, variance $O(1)$, degree-$2$ polynomial so by Lemma \ref{lem:gauss_poly_consc}, the probability that it is more than $CT$ is at most $e^{-2T}$. Then the probability that at least $\eps N/(T^2 \log^2(T))$ of our $x$'s have this property is at most
	\begin{align*}
	\binom{N}{\eps N/(T^2\log^2(T))}\exp(-2T(\eps N)/(T^2 \log^2(T))) & \leq \left(\frac{Ne^{-2T}}{e \eps N/(T^2\log^2(T))}\right)^{(\eps N)/(T^2 \log^2(T))} \\ & \leq \exp(-\Omega(T \eps N/ (T^2 \log^2(T))))\\ & \leq \exp(-\Omega(\eps N/ T \log^2(T)))\\
	& \leq \exp(-\Omega(k^3 \log(d/\eps))).
	\end{align*}
	Taking a union bound over $Q,w,T$ completes the proof.

\end{document}